\documentclass[acmtog,nonacm,dvipsnames]{acmart}

\setcopyright{none}

\graphicspath{{figures_reduced/}}

\usepackage{color}
\usepackage{comment}
\usepackage{makecell}

\usepackage{subfigure}
\usepackage{wrapfig}
\usepackage{multirow}
\usepackage{xspace}
\long\def\symbolfootnote[#1]#2{\begingroup%
\def\thefootnote{\fnsymbol{footnote}}\footnote[#1]{#2}\endgroup}

\usepackage{amsthm}
\newtheorem{definition}{Definition}
\newtheorem{theorem}{Theorem}

\newcommand{\ba}{\mathbf{a}}
\newcommand{\bb}{\mathbf{b}}
\newcommand{\bc}{\mathbf{c}}
\newcommand{\mybf}{\mathbf{f}}
\newcommand{\bp}{\mathbf{p}}
\newcommand{\bq}{\mathbf{q}}
\newcommand{\br}{\mathbf{r}}
\newcommand{\bn}{\mathbf{n}}
\newcommand{\bu}{\mathbf{u}}
\newcommand{\bv}{\mathbf{v}}
\newcommand{\bw}{\mathbf{w}}
\newcommand{\bx}{\mathbf{x}}
\newcommand{\cS}{\mathcal{S}}
\newcommand{\cC}{\mathcal{C}}
\newcommand{\cM}{\mathcal{M}}

\newcommand{\cP}{\mathcal{P}}

\newcommand{\sF}{\mathsf{F}}
\newcommand{\sB}{\mathsf{B}}
\newcommand{\sC}{\mathsf{C}}

\newcommand{\bh}[1]{[BH\S#1]}

\setcopyright{acmcopyright}
\copyrightyear{2022}
\acmYear{2022}
\acmDOI{XXXXXXX.XXXXXXX}

\acmJournal{TOG}
\acmVolume{0}
\acmNumber{0}
\acmArticle{0}
\acmMonth{0}

\citestyle{acmauthoryear}

\begin{document}

\title{ConTesse: Accurate Occluding Contours for Subdivision Surfaces}

\author{Chenxi Liu}
\orcid{0000-0003-3613-1662}
\affiliation{\institution{University of British Columbia}\country{Canada}}

\author{Pierre B\'{e}nard}
\orcid{0000-0002-2846-1955}
\affiliation{\institution{Univ. Bordeaux, CNRS, Bordeaux INP, INRIA, LaBRI, UMR 5800}\country{France}}

\author{Aaron Hertzmann}
\orcid{0000-0001-9667-0292}
\affiliation{\institution{Adobe Research}\country{USA}}

\author{Shayan Hoshyari}
\affiliation{\institution{Adobe}\country{USA}}

\renewcommand{\shortauthors}{Liu, B\'enard, Hertzmann, Hoshyari}

\begin{abstract}
  This paper proposes a method for computing the visible occluding contours of subdivision surfaces. The paper first introduces new theory for contour visibility of smooth surfaces.
  Necessary and sufficient conditions are introduced for when a sampled occluding contour is valid, that is, when it may be assigned consistent visibility. 
  Previous methods do not guarantee these conditions, which 
  helps explain why smooth contour visibility has been such a challenging problem in the past. The paper then proposes an algorithm that, given a subdivision surface, finds sampled contours satisfying these conditions, and then generates a new triangle mesh matching the given occluding contours. The contours of the output triangle mesh may then be rendered with standard non-photorealistic rendering algorithms, using the mesh for visibility computation. 
  The method can be applied to any triangle mesh, by treating it as the
base mesh of a subdivision surface.
\end{abstract}

\begin{CCSXML}
<ccs2012>
<concept>
<concept_id>10010147.10010371.10010372.10010375</concept_id>
<concept_desc>Computing methodologies~Non-photorealistic rendering</concept_desc>
<concept_significance>500</concept_significance>
</concept>
<concept>
<concept_id>10010147.10010371.10010372.10010377</concept_id>
<concept_desc>Computing methodologies~Visibility</concept_desc>
<concept_significance>300</concept_significance>
</concept>
<concept>
<concept_id>10010147.10010371.10010396.10010402</concept_id>
<concept_desc>Computing methodologies~Shape analysis</concept_desc>
<concept_significance>300</concept_significance>
</concept>
</ccs2012>
\end{CCSXML}

\ccsdesc[500]{Computing methodologies~Computer Graphics-Non-photorealistic rendering}
\ccsdesc[300]{Computing methodologies~Visibility}
\ccsdesc[100]{Computing methodologies~Shape analysis}

\newcommand{\tfigwidth}{1.5in}
\newcommand{\tfigheight}{.75in}

\begin{teaserfigure}
\centering
\begin{tabular}{c@{}c@{}c@{}c}
\includegraphics[width=\tfigwidth]{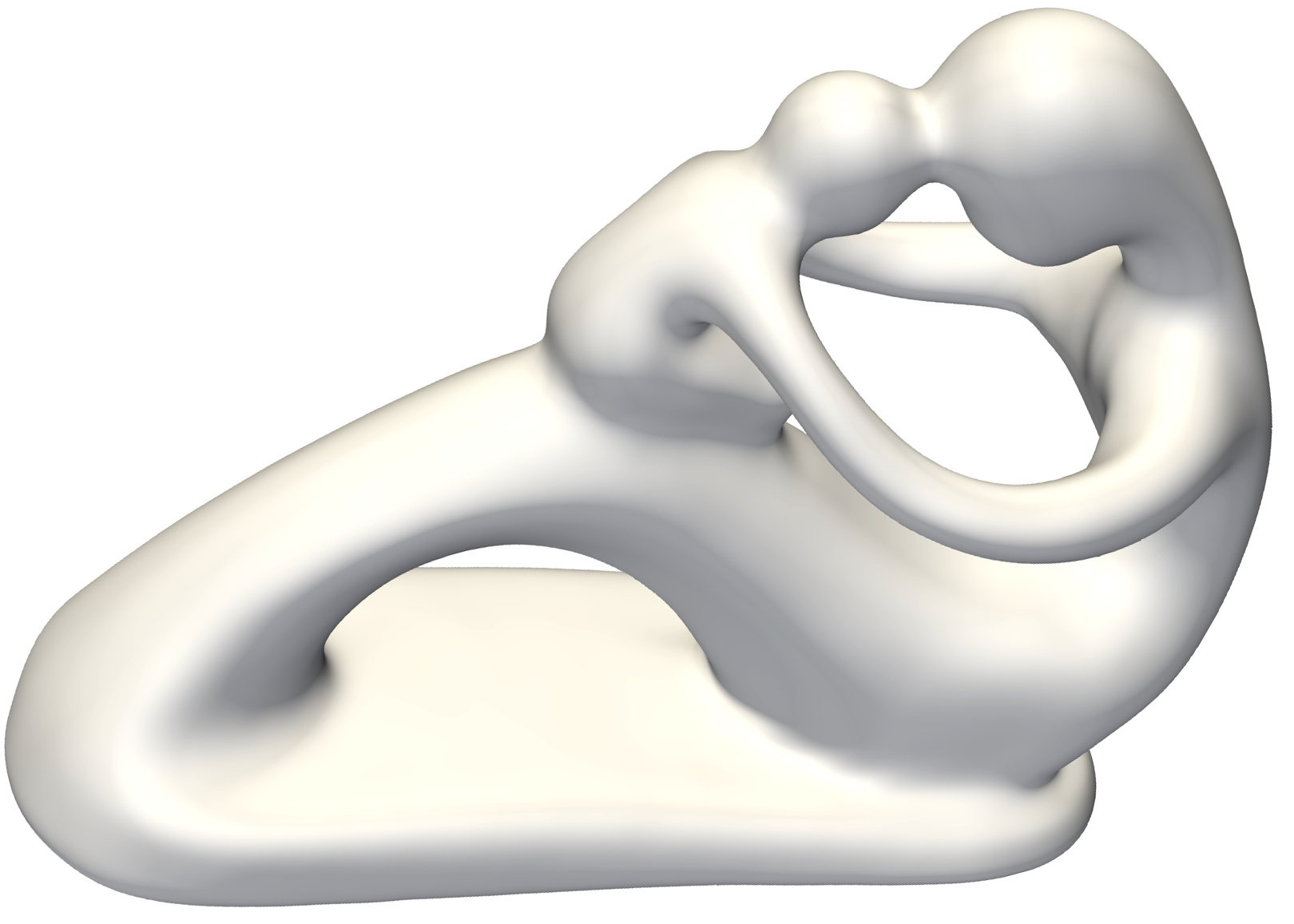} &
\includegraphics[width=\tfigwidth]{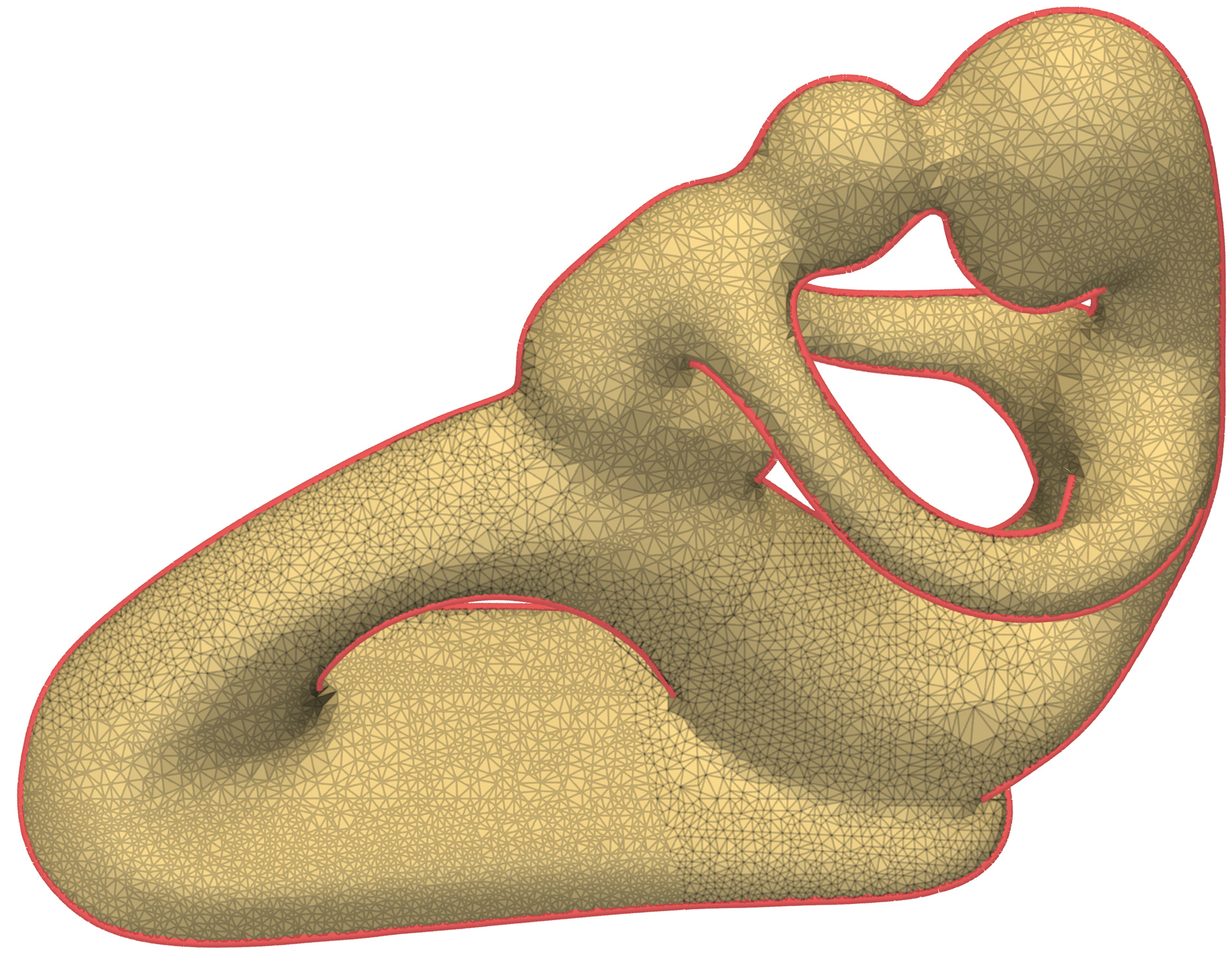}%
\includegraphics[width=\tfigheight]{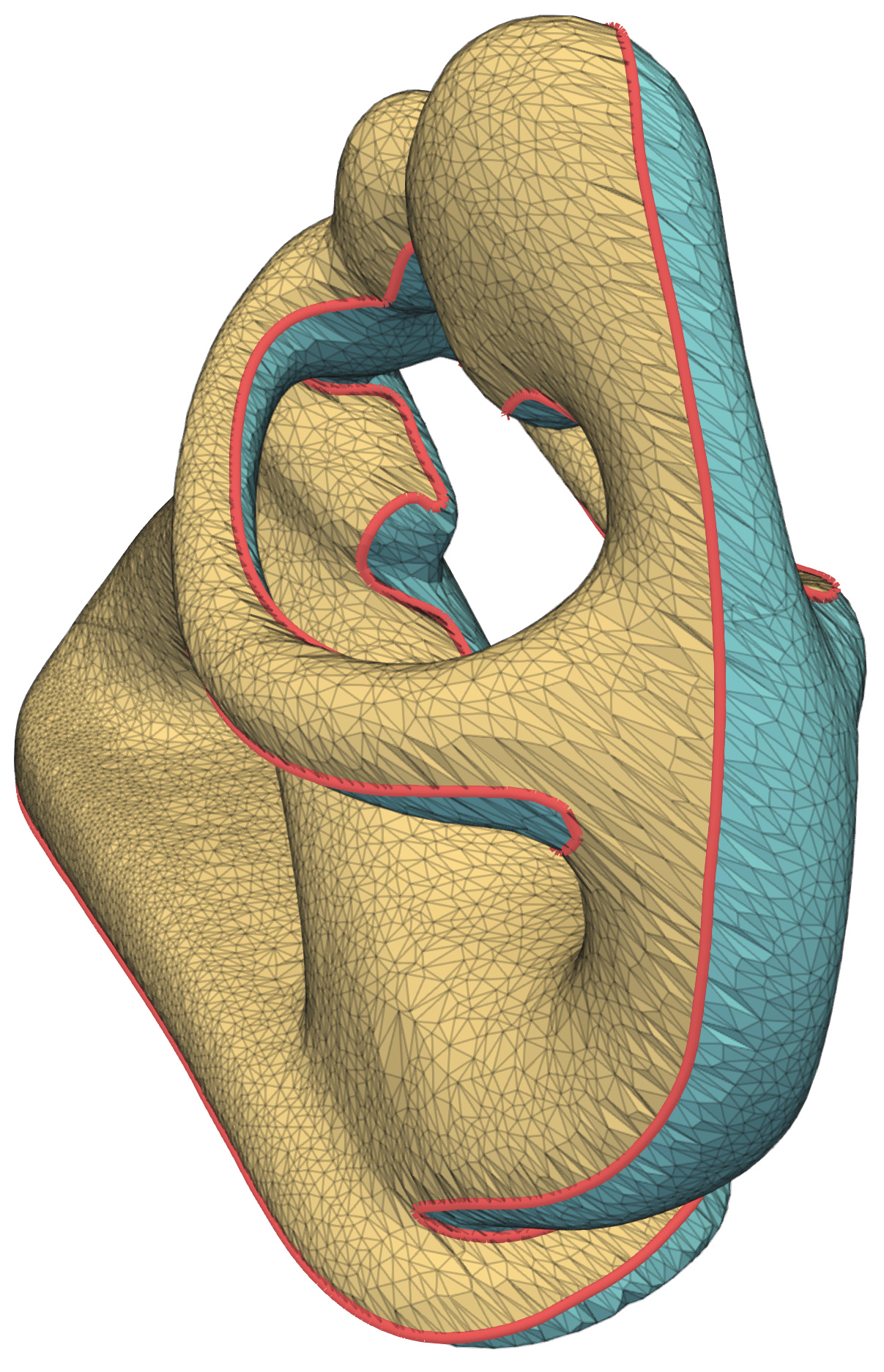}  &
\includegraphics[width=\tfigwidth]{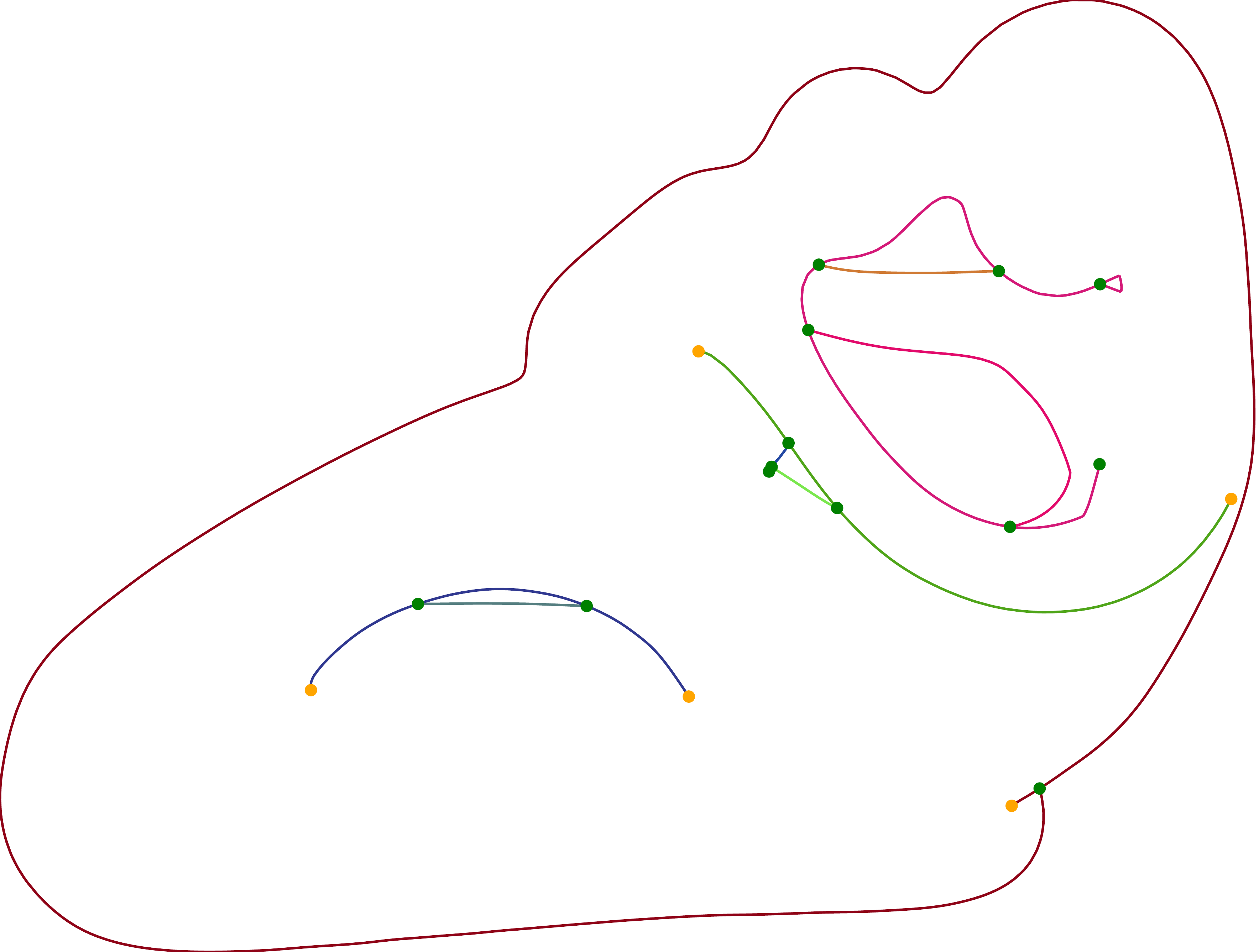} &
\includegraphics[width=\tfigwidth]{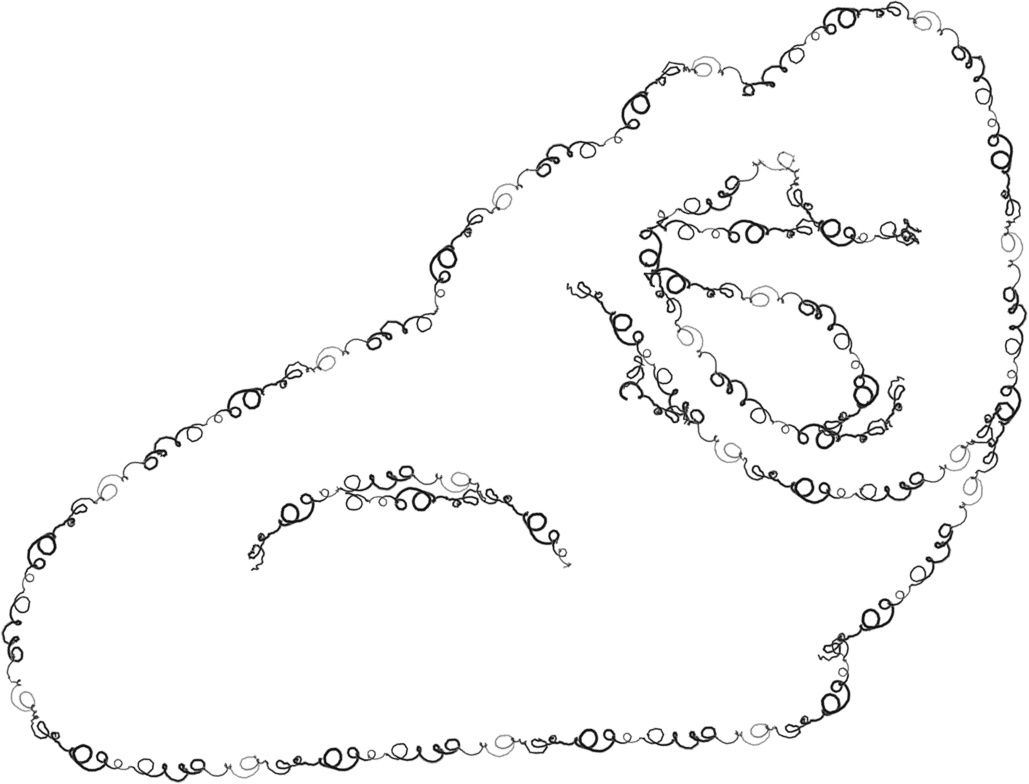} \\
(a) & (b) & (c) & (d)
\end{tabular}
    \caption{Given (a) a smooth 3D surface $\cS$ and a camera viewpoint, our method produces (b) a triangle mesh $\cM$ where the occluding contour of the mesh accurately approximates the occluding contour of the smooth surface. Standard algorithms may then be used to extract (c) the view map of occluding contours, and to (d) stylize them.
    (Fertility courtesy UU from AIM@SHAPE-VISIONAIR Shape Repository).
    }
    \label{fig:teaser}
\end{teaserfigure}

\maketitle

\section{Introduction}

Computing occluding contours is one of the fundamental building blocks of 3D Non-Photorealistic Rendering, because these curves allow us to produce beautiful and stylized artistic images and animations in many styles \cite{Grabli:2010}. 
The occluding contours follow occlusion boundaries in an image, i.e., where one part of a surface occludes another.  These contours accurately capture many of the lines that artists draw \cite{Cole:2008}. 

The problem of computing the occluding contours for smooth surfaces dates back to the earliest days of computer graphics \cite{Weiss:1966:VPI:321328.321330,Appel:1967}.
Yet, despite more than a half-century of effort, this problem remains unsolved for smooth surfaces \cite{BenardHertzmann}. 
No existing method guarantees results that both accurately capture the occluding contours (e.g., no curves missing or merged), while also producing topologically-consistent results (e.g., no gaps in the silhouette).  

The most accurate current method is that of B\'{e}nard et al.~\shortcite{Benard:2014}. 
This method generates a new mesh, so that the contour of the new mesh approximates the contours of the input surface. 
Then, existing methods may be used to compute the occluding contour of the mesh, in a consistent fashion.
However, this method is extremely slow and often produces meshes with a few spurious contours, which could, in theory, lead to some incorrect results.  Moreover, some styles require an accurate planar map \cite{Eisemann:2008,Grabli:2010,Winkenbach:1994}, which would be impossible to guarantee with such errors. 
Importantly, there remain a key theoretical gap: when can we expect that a valid mesh $\cM$ exists?  

This paper introduces new theoretical and practical advances for computing the visible occluding contour for smooth surfaces.
We first derive a new theory that describes when there exists a 3D mesh $\cM$ consistent with a set of sampled occluding contours $\cC$. 
Previous methods produce curves that do not guarantee these conditions, and thus often produce curves without any possible valid visibility. 
These observations provide new insight into why contour visibility has been such a challenging problem, and how to address it.

We then propose \textsf{ConTesse} (contour tessellation), a new method for computing the visible occluding contours of a subdivision surface.  The method can be applied to any triangle mesh, by treating it as the base mesh of a subdivision surface.
Like previous methods, we first sample the occluding contour into piecewise-linear  curves (polylines). Our method then refines these polylines until they satisfy our new validity conditions.
The method then produces a new triangle mesh that fits these sampled contours, based on a new image-space approach for ensuring consistent triangle orientation. Finally, standard algorithms compute and stylize the occluding contours, using the  triangle mesh to determine visibility. We show that the resulting method is substantially faster and less memory-intensive than that of B\'{e}nard et al.~\shortcite{Benard:2014}, and, arguably, conceptually simpler, while producing higher-quality results.  Finally, we discuss possible future improvements to our approach.

\subsection{Problem Definitions and Background}

We now briefly review the problem of determining the visible occluding contour, and why it is difficult.
For an in-depth tutorial on occluding contour algorithms, see \cite{BenardHertzmann}. For brevity, in the rest of this paper, we refer to sections of \cite{BenardHertzmann} with concise notation, e.g., \bh{3} for Section~3.

Suppose we view a 3D surface from a camera position $\bc$, in general position.
For a point $\bp$ on the surface with associated surface normal $\bn$, the orientation of the point is 
\begin{equation}\label{eq:contour}
    g(\bp) = (\bc - \bp) \cdot \bn  \,.
\end{equation}
Points are front-facing when $g(\bp)>0$, and back-facing when $g(\bp)<0$.
Assuming that back-faces are never visible, the boundaries between the front-facing and back-facing regions correspond to occlusion boundaries in the image.  Specifically, for a smooth surface, the \textit{contour generator} is the set of points where $g(\bp)=0$, and, for a triangle mesh, the contour generator is the set of mesh edges that connect front faces to back faces (Figure \ref{fig:ocg}).

The \textit{occluding contour} (or apparent contour) is defined as the image-space projection of the visible portion of the  contour generator (Figure \ref{fig:ocg}). Note that visibility is part of this definition; hidden points are not part of the occluding contour. 
\newcommand{\pswidth}{1.5in}
\begin{figure}
    \centering
    (a)
    \includegraphics[width=0.43\linewidth]{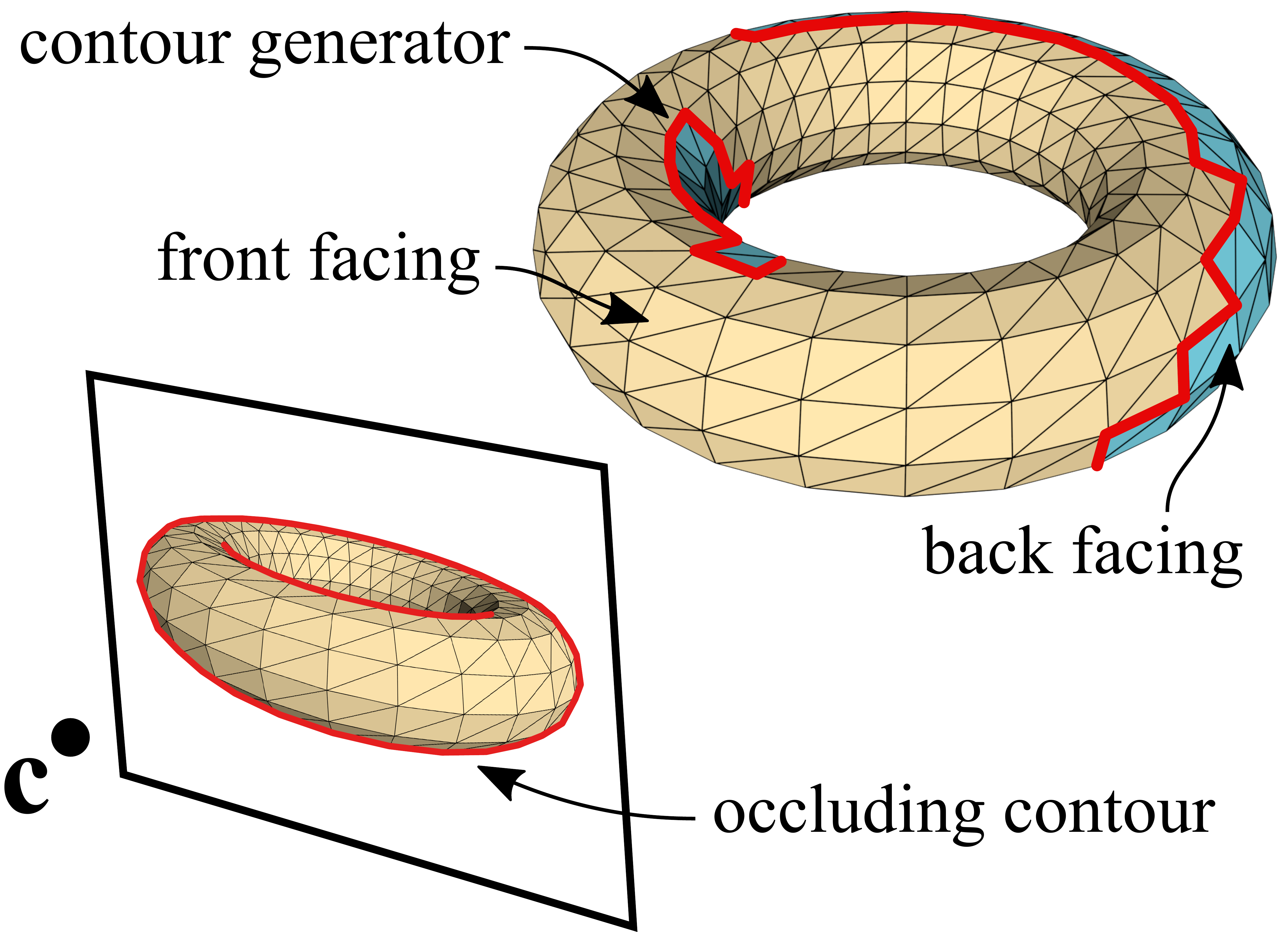}
    ~%
    (b)
    \includegraphics[width=0.43\linewidth]{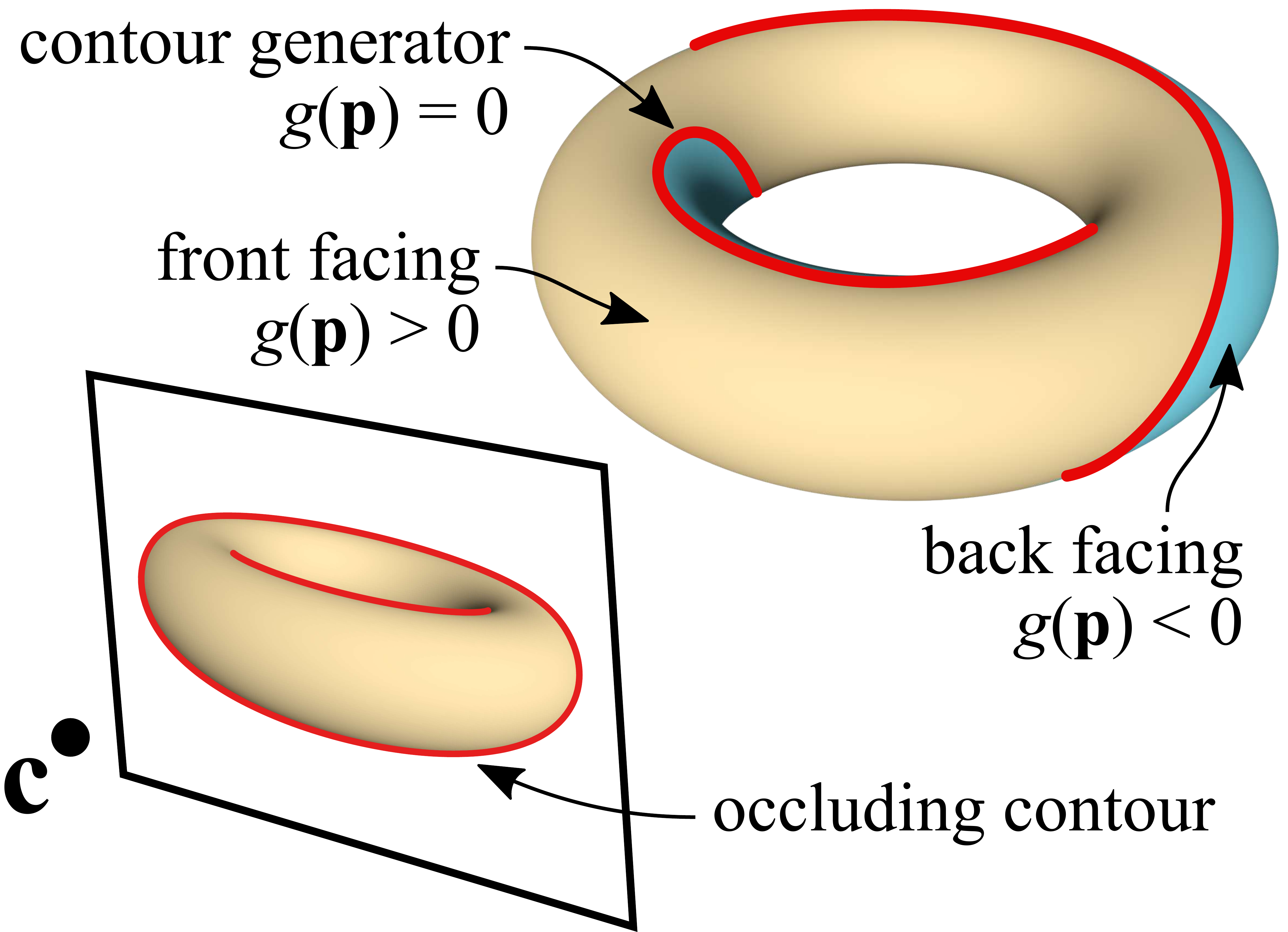}
    \caption{Definition of occluding contour (or apparent contour) and occluding contour generator, for meshes (a) and smooth surfaces (b), from \bh{3.4,7.2}. The occluding contour describes occlusion boundaries in the image.
    Throughout the paper, we use the follow color scheme: 
    contour points are \textcolor{red}{\bf red}, front-facing points are \textcolor{Orange}{\bf yellow}, and back-facing points are \textcolor{Blue}{\bf blue}.
    }
    \label{fig:ocg}
\end{figure}

This paper focuses on the problem of computing the  occluding contour for a smooth surface. At first, this task seems like it ought to be quite simple.  Yet, an extensive body of literature has identified this problem and attempted to solve it by a variety of clever approaches, none of which guarantee correct results. 

The simplest approach is to take a triangle mesh as input, and output the  mesh's occluding contours. However, when the mesh represents a smooth surfaces, the mesh contours produce knotty, incorrect topologies (Figure \ref{fig:problem}(b)). Using heuristics to untangle them \cite{Northrup:2000,Eisemann:2008} produces unreliable results.

\begin{figure}
    \centering
    (a)
\includegraphics[width=1.5in]{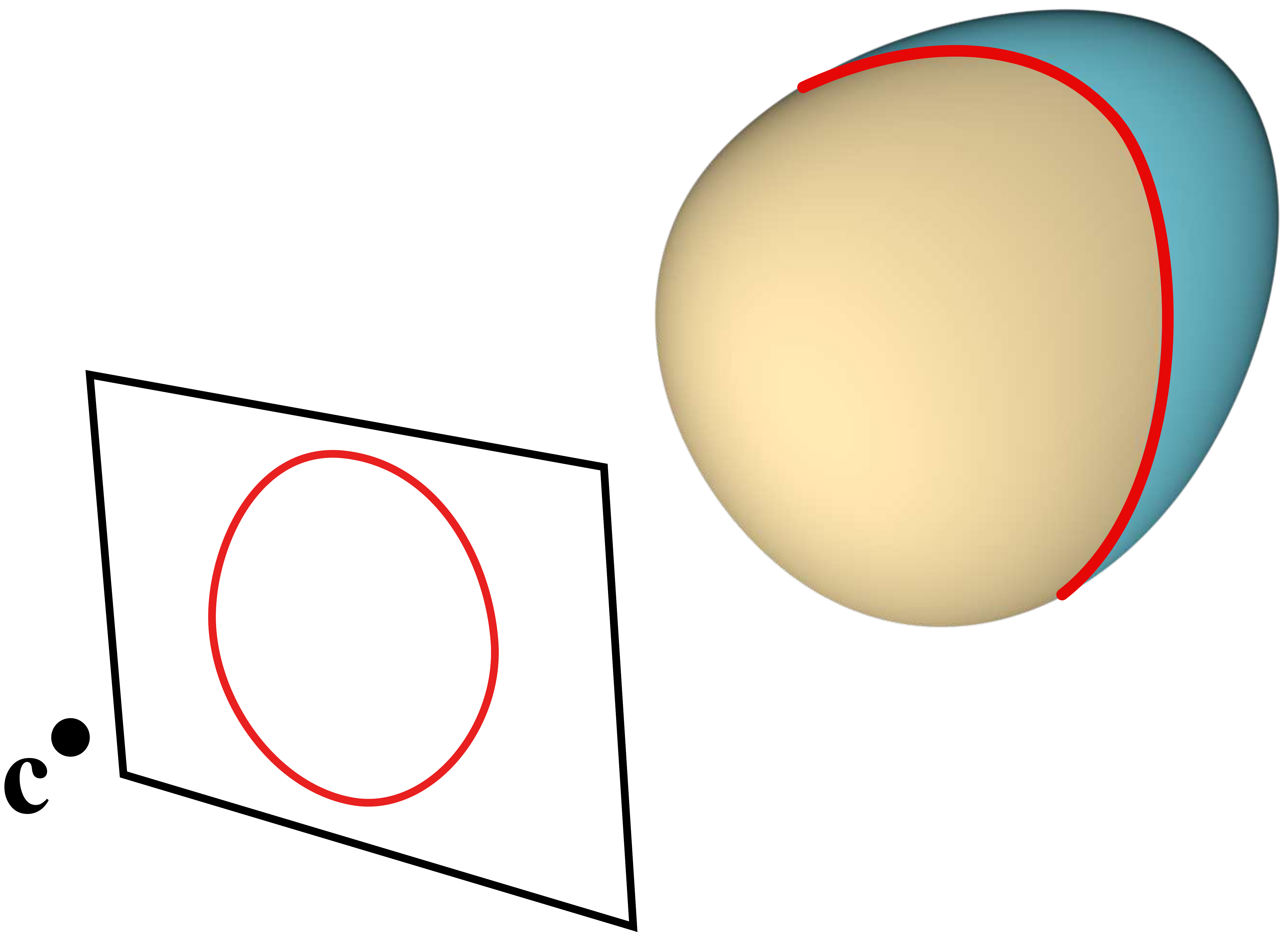}%
~%
(b)
\includegraphics[width=1.5in]{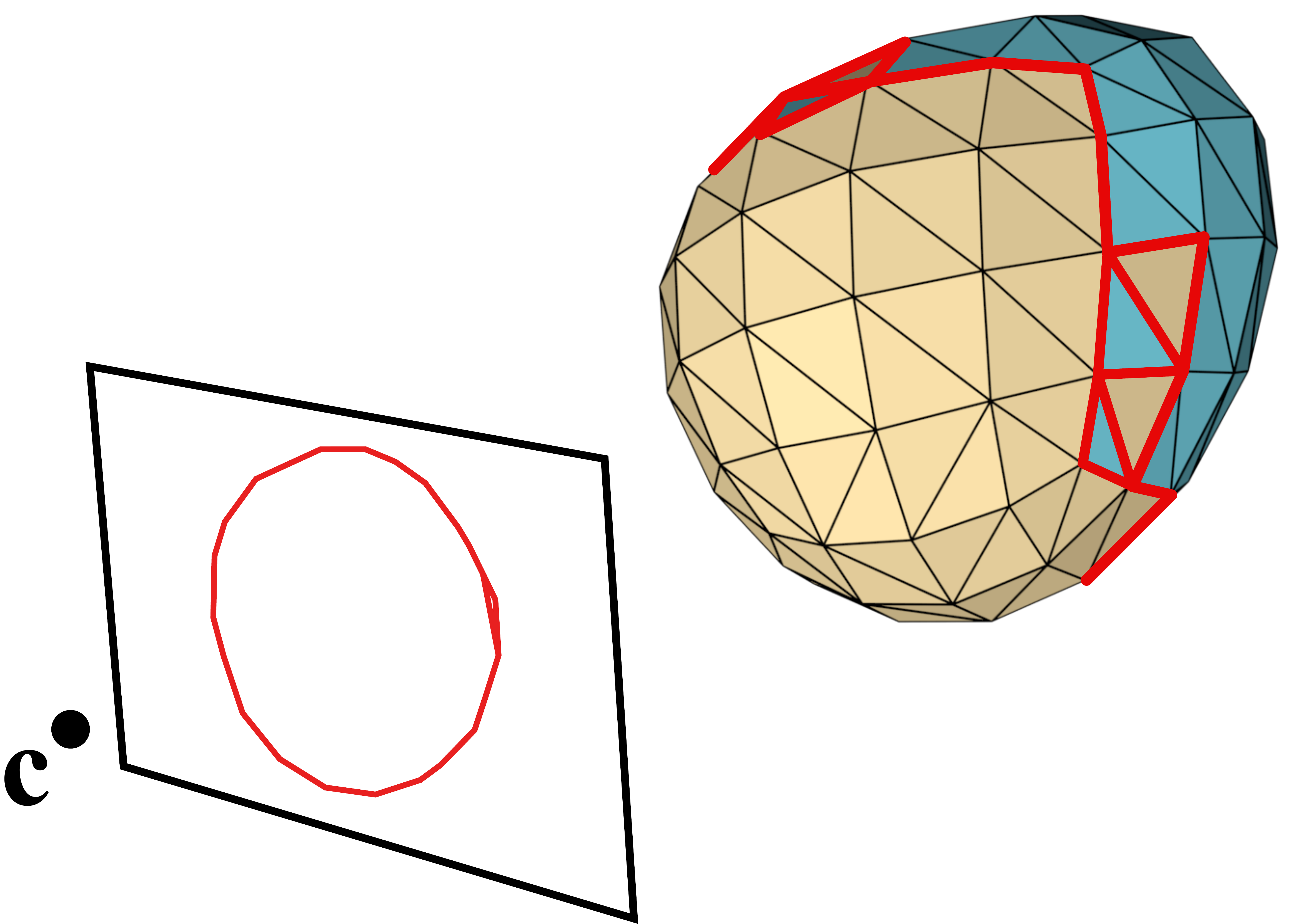} 
    \caption{Convex contour tessellation.  
    (a) The input is a convex smooth surface viewed from a camera position $\bc$. The contour projects to a simple convex curve.
     (b)
    Naively converting the smooth surface to a triangle mesh produces a mesh for which the occluding contour projects to a curve with very different topology. The new topology can be  badly behaved, leading to many visibility errors  \cite{Benard:2014}. 
     }
    \label{fig:problem}
\end{figure}

Many approaches instead sample polyline approximations to the smooth occluding contour generator, which is guaranteed to have simpler topology. These methods then perform visibility tests against a triangle mesh \cite{Appel:1967,Grabli:2010, Markosian:1997,Hertzmann:2000,Karsch:2011,Winkenbach:1996}.  However, these ray tests are not reliable because a triangle mesh has different visibility from the underlying smooth surface  \cite{Eisemann:2008,Benard:2014,Grabli:2010} (Figure \ref{fig:trivial}).
Such small errors in visibility tests can propagate to produce topologically-invalid drawings, such as objects with large gaps in their silhouettes.
Previous methods use voting schemes and other heuristics to fix visibility, but none are exact.
Computing curve visibility with image buffers \cite{Cole:2010,Eisemann:2008,Saito:1990} or ray-tests with smooth geometry \cite{Elber:1990} have similar numerical problems. Our method builds most directly on the method of B\'enard et al.~\shortcite{Benard:2014}, which frequently guarantees correct results, but is very expensive to compute.  
Moreover, as we show in this paper, some polylines simply cannot be assigned valid visibility, which affects all the methods described above. 
\begin{figure}
    \centering
\includegraphics[width=1.5in]{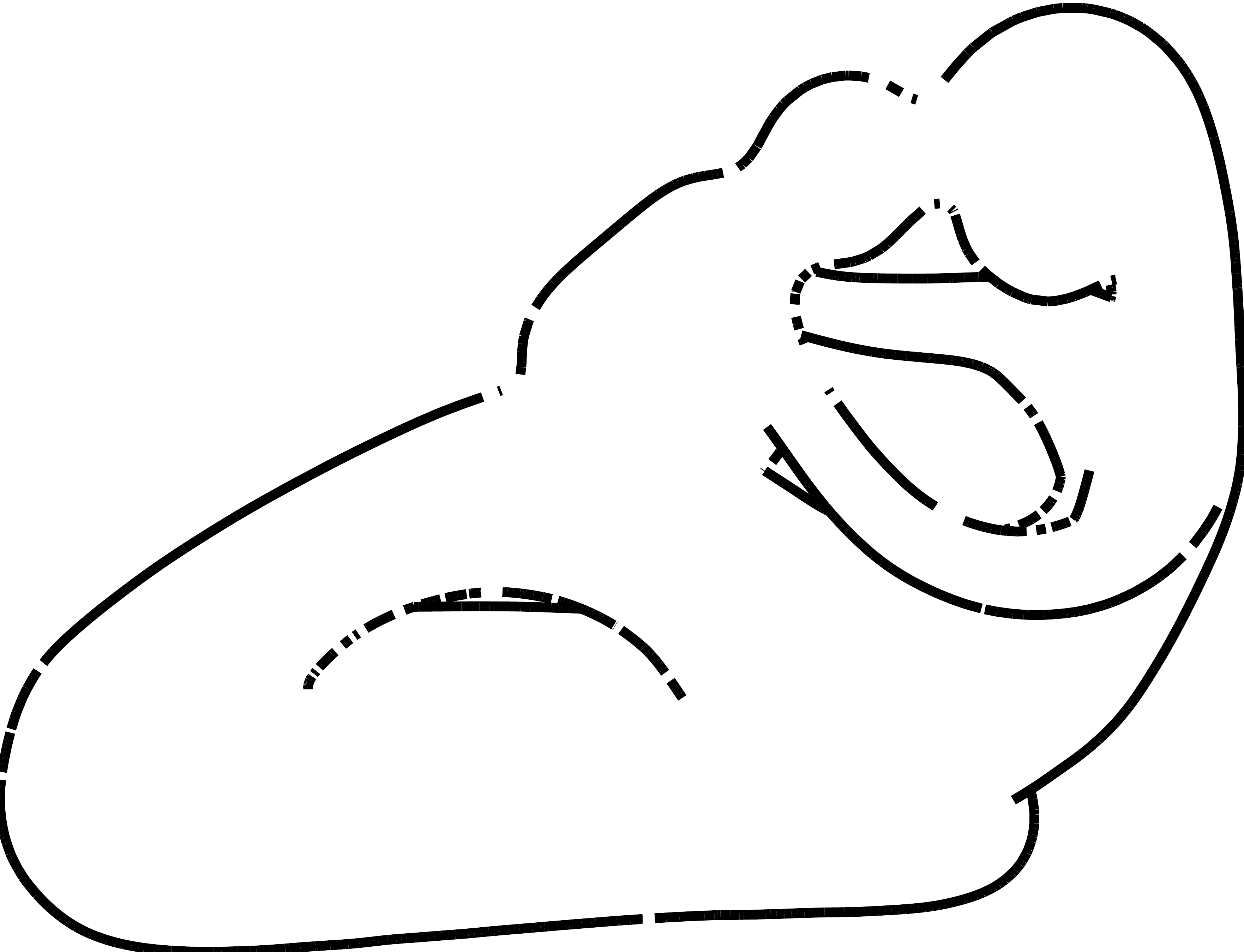}
    \caption{
    Naive visibility. In this example, the occluding contour has been sampled into line segments, and then visibility for each segment was computed by a separate ray test against a dense triangulation of the surface. This naive strategy produces unreliable visibility, such as incorrect gaps in the contours. Most algorithms use a combination of heuristics that can greatly improve over a naive result, but they cannot remove all visibility errors.
(Fertility by UU from AIM@SHAPE-VISIONAIR Shape Repository).    }
    \label{fig:trivial}
\end{figure}

\section{Convex Surface Algorithm}
\label{sec:convex}

In order to provide basic intuitions for our theory and algorithms, we first describe a highly simplified version of the algorithm for convex, closed surfaces. We generalize these ideas to non-convex surfaces in subsequent sections.

The input to the algorithm is a strictly convex, oriented smooth surface $\cS$, viewed from camera position $\bc$.
The occluding contour of this surface  must be a convex curve in the image (Figure \ref{fig:problem}(a)).  

Our goal is to produce a triangle mesh $\cM$ whose  contour generator is a good representation for the true contour generator of $\cS$. Specifically, we want the contour generator of $\cM$ to partition the mesh into two regions, one containing only front-facing triangles, and one with only back-facing triangles. We also want $\cS$ and $\cM$ to be geometrically similar. Once the mesh is computed, the contour visibility can be computed from the mesh using standard exact techniques, such as ray tests.
No existing algorithm provably achieves this goal \bh{7.6}.  For example, directly using the contour generators of a triangle mesh will produce spurious 2D self-intersections \bh{6.1} (Figure \ref{fig:problem}(b)).

The first step is to sample a polyline $\cC$ on the smooth surface, where each vertex on the polygon lies on the contour generator, and the polyline projects to a simple, closed polygon in 2D (Figure \ref{fig:steps}(a)). For example, this can be achieved by root-finding Eq.~\ref{eq:contour} on a mesh representation of the surface, see \cite{Benard:2014}\S 6.2.

Our goal now is to generate a mesh for which $\cC$ is the mesh's contour generator: $\cC$ will partition the mesh into one region containing only front-facing triangles, and another region containing only back-facing triangles, and the front-facing region will be nearer to the camera.
To generate the front-facing region, we proceed as follows (Figure \ref{fig:steps}):   
\begin{enumerate}
    \item Project the polyline $\cC$ to a 2D image plane. %
    \item Tessellate the 2D polygon, using, for example,  Constrained Delaunay Triangulation (CDT) \cite{PaulChew1989}.
    \item Project each new interior vertex back to 3D by casting a ray from the camera through the 2D position, and intersecting it with the front-facing region in $\cS$.
\end{enumerate}
The back-facing region is meshed by the same procedure. Finally, the output mesh $\cM$ is produced by stitching these two regions at $\cC$. 

\begin{figure}
    \centering
        
    \includegraphics{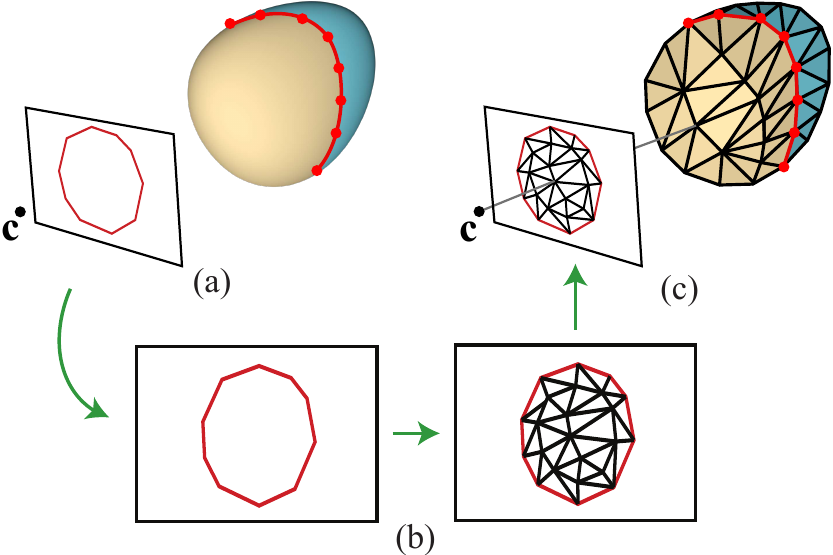} 
        
    \caption{Steps of the convex surface tessellation algorithm. 
    (a) Points are sampled along the smooth surface contour. These points are projected to a 2D polygon in image space.
    (b) The polygon is triangulated in image space.
    (c) 2D triangles are lifted to 3D, by projecting their vertices back onto the original surface, in both the front-facing and back-facing regions. 
    }
    \label{fig:steps}
\end{figure}

The key insight of this algorithm is that we can guarantee front-facing orientations by triangulating in image space and then projecting to 3D.
This is because any valid 2D triangulation will produce only clockwise (or only counter-clockwise) triangles, and, furthermore, projecting a triangle from 2D to 3D preserves orientation (Appendix \ref{app:orient}). Hence, this procedure correctly produces a mesh with a front-facing region and a back-facing region, separated by~$\cC$. Moreover, the output mesh $\cM$ is topologically equivalent to $\cS$, and geometrically similar due to the use of $\cC$ and the ray-casting step. The geometric accuracy can be improved arbitrarily by refining the 3D contour sampling and the 2D triangulation before ray-casting.
Hence, this algorithm solves the contour meshing problem for the convex case.

Once the mesh $\cM$ is computed, it can be rendered and stylized with standard non-photorealistic rendering methods. The main benefit of having the mesh is that it can be used to determine curve visibility with ray tests, for example, in a scene comprising multiple convex objects. Note that computing the back-facing region is unnecessary for computing contour visibility, provided the scene is set up appropriately. We include back-facing regions in this paper solely for completeness and for visualization.

\section{General Contour Regions and Polygons}
\label{sec:contour_regions}

Suppose we sample the contour generators of a  closed smooth surface into polylines $\cC$, where the surface is not necessarily convex. When can those contours be triangulated into a new mesh?  That is, does there exist a new mesh where $\cC$ are its contour generators, and the mesh has the same topology as the smooth surface?
This section describes the types of contours that may occur on smooth surfaces, which allows us to identify which kinds of polygons can and cannot be triangulated. 

These questions are important because the existence of a triangulation implies that there exists a valid visibility labeling for the curves. Conversely, if no triangulation exists, then a plausible visibility labeling may likewise be impossible.

We consider an oriented smooth surface $\cS$, viewed from camera position $\bc$, for which all back-facing points are invisible due to occlusion.  We assume general position (a.k.a.~generic position)  \bh{3.5}. For the discussion in this section, we assume that surfaces are closed and do not contain self-intersections. 
As a consequence, the occluding contour generator of $\cS$ is a set of closed loops that partition the surface into front-facing and back-facing \textit{regions}. 

Hence, for a given polyline sampling of the contour generator of $\cS$, the question of whether or not a valid triangulation exists is equivalent to the question of whether or not all of the regions enclosed by those contours can be meshed with the appropriate orientations, i.e., all front-facing or all back-facing. As discussed in the previous section, this reduces to the question of whether each region can be triangulated in image-space.

\subsection{Types of Regions}

This section categorizes the different types of 3D surface regions, in terms of the types of curves their boundaries project to. Each region must be entirely front-facing or entirely back-facing.  The categorization applies equally to smooth surface regions and polygonal regions. This categorization is nested: each category is more general than the previous ones, and a triangulation algorithm that works for the final category applies to all of these cases. This categorization applies regardless of whether or not the regions are bounded by contours.

\begin{figure*}
    \centering
\begin{tabular}{cccc}
\multicolumn{4}{c}{\Large (a) \textbf{Simple}} \\
    \includegraphics[height=1in]{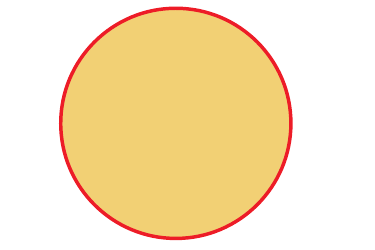} &
    \includegraphics[height=1in]{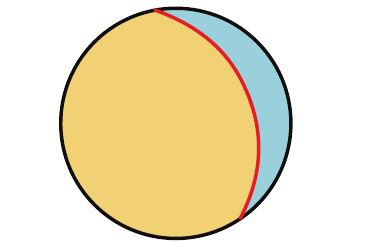} &
    \includegraphics[height=1in]{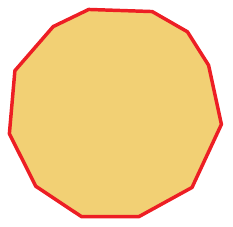} &
    \includegraphics[height=1in]{polygon_types/convex_polygon.pdf} \\
    Camera view &
    Side view &
    2D polygon &
    2D embedding of polygon \\
    of 3D shape &
    &
    from points sampled on contour &
    for visualization
\\
        \hline\\
\multicolumn{4}{c}{(b) \Large \textbf{Self-Overlapping}} \\
    \includegraphics[height=1in]{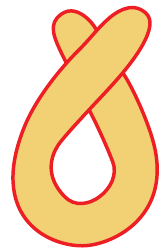} &
    \includegraphics[height=1in]{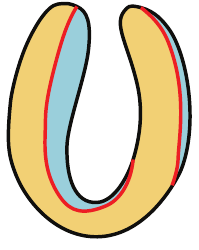} &
    \includegraphics[height=1in]{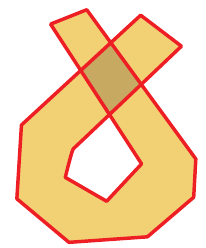} &
    \includegraphics[height=1in]{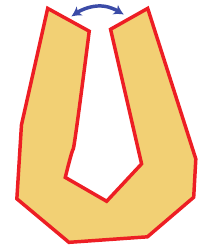} \\
    Camera view &
    Side view &
    2D polygon &
    2D embedding of polygon
\\
        \hline\\
    \multicolumn{4}{c}{\Large (c) \textbf{Weakly Self-Overlapping (WSO)}} \\
\includegraphics[height=1in]{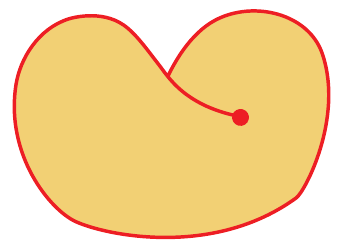} &
    \includegraphics[height=1in]{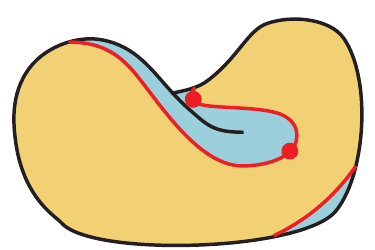}  &
    \includegraphics[height=1in]{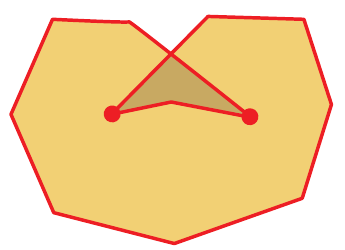} & 
    \includegraphics[height=1in]{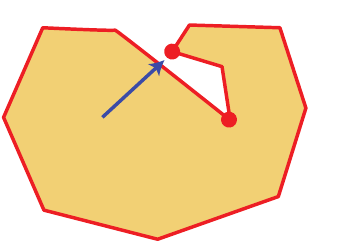} \\ 
    Camera view &
    Side view &
    2D polygon &
    2D embedding of polygon
\\
            \hline\\
    \multicolumn{4}{c}
{\Large (d) \textbf{Weakly Self-Overlapping with Holes (WSOH)}} \\
    \includegraphics[height=1in]{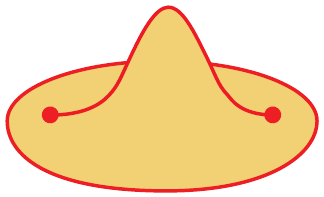} &
    \includegraphics[height=1in]{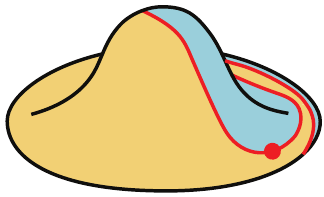} &
    \includegraphics[height=1in]{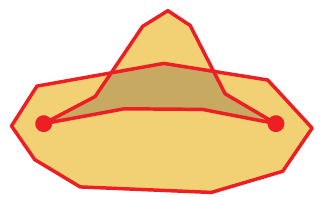} &
    \includegraphics[height=1in]{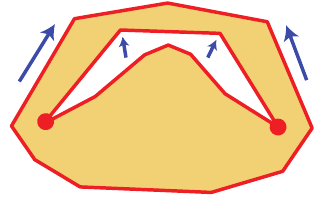} \\
    Camera view &
    Side view &
    2D region &
    2D embedding of region
    \end{tabular}
    \caption{The four categories of contour regions in 3D, and how they project to 2D. Each row shows a smooth surface from the camera view and from a side view (first and second columns) with the contour in \textbf{\textcolor{red}{red}}, the front-facing region in \textcolor{Orange}{\bf yellow}, back-facing regions in \textcolor{Blue}{\bf blue}, and cusps/singularities as red dots.
    A 2D projection of the front-facing region is shown (third column), with vertices sampled from the contour. The final column is meant to aid in understanding the 2D region; vertices are translated to unfold the region.
    \label{fig:region_types}}
\end{figure*}

\paragraph{Simple curves (Figure \ref{fig:region_types}(a)).}
The easiest case is where a region is bounded by a simple 2D curve, i.e., a curve that does not intersect itself in image space.  A valid sampling of a simple curve produces a simple polygon in 2D, which can be triangulated in 2D with methods such as CDT (Figure \ref{fig:steps}).

\paragraph{Self-overlapping polygons (Figure \ref{fig:region_types}(b)).}
When one part of a region overlaps a separate part in image space, the boundary curve is called \textit{self-overlapping}. Self-overlapping polygons can be triangulated using the algorithm of Shor and Van Wyk \shortcite{ShorVanWyk}\S 4.

While self-overlapping polygons can have multiple incompatible triangulations \cite{ShorVanWyk}\S 3, we have not observed incompatible triangulations in practice. If needed, these incompatibilities could be resolved by using ``crossing'' constraints \cite{EppsteinMumford}, i.e., the depth ordering at 2D intersections implied by the 3D locations of the curves.

\paragraph{Weakly self-overlapping (WSO) (Figure \ref{fig:region_types}(c)).}
A curtain fold cusp (\bh{4.3}) in the contour generator creates a singularity in the occluding contour where the surface self-overlaps.  
The corresponding polygon also has a singularity. Singularities are marked with red dots in Figure \ref{fig:region_types}.
Weber and Zorin \shortcite{WeberZorin}\S 3, call a singular polygons that overlaps  ``weakly self-overlapping'' (\textbf{WSO}), and provide an algorithm for triangulating WSO polygons with singular vertices tagged in the input. 
Note that this algorithm can also triangulate simple curves and self-overlapping curves, which are all considered to be WSO.  We give formal definitions of WSO  in Section \ref{sec:theorems}.

\paragraph{Weakly-Self-Overlapping with Holes (WSOH) (Figure \ref{fig:region_types}(d)).}
In the most general case, which we call \textbf{WSOH}, a region may have holes, and possibly handles. Polygonal regions with holes may be triangulated by first introducing a cut to remove holes and handles, and then applying the WSO algorithm. We formally define WSOH in Section \ref{sec:theorems}.

Because this is the most general case, we develop an algorithm that works for WSOH regions, and it automatically handles the simpler cases described above.
Simple curves, self-overlapping curves, and weakly-self-overlapping curves are all considered WSOH.

\subsection{Invalid Polygons}
\label{sec:invalid}

Sometimes, sampling a contour curve produces a polygon that cannot be triangulated in 2D.
For such a polygon, the 3D polyline cannot be triangulated with solely front-facing (or back-facing) triangles.
We call such polygons \textit{invalid}. The example in Figure \ref{fig:invalid}(a) 
\begin{figure*}
    \centering
    (a)
            \includegraphics[width=1in]{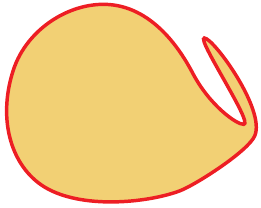}
                \includegraphics[width=1in]{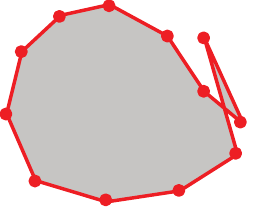}
                \includegraphics[width=1in]{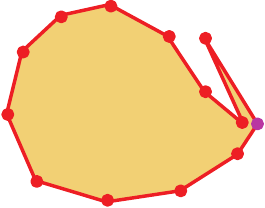}
(b) \includegraphics[width=1.5in]{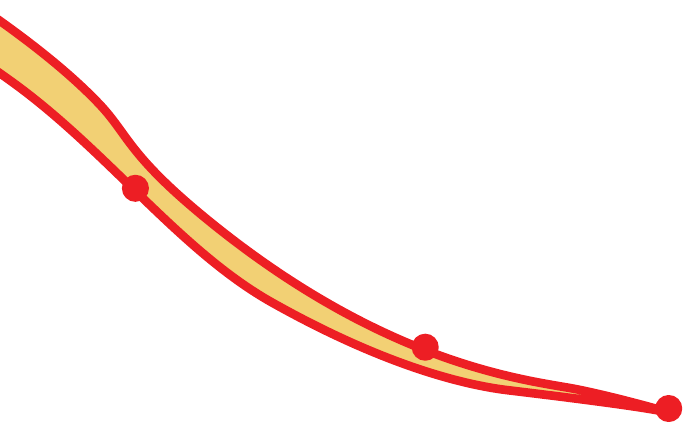} 
\includegraphics[width=1.5in]{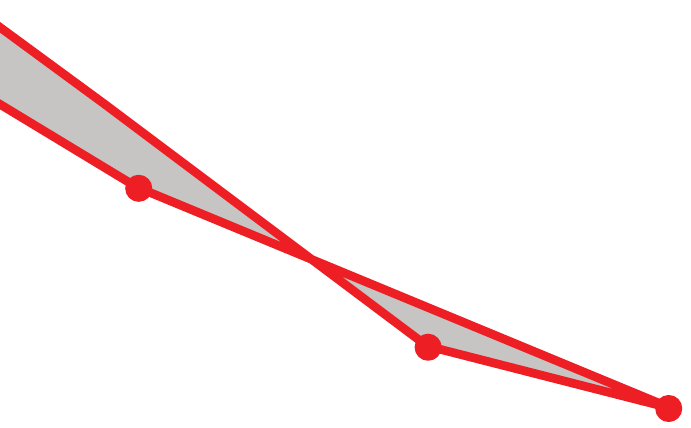}

    \caption{Invalid curves. (a) An example in which sparsely-sampled points around a simple smooth curve produce an invalid 2D polygon (in grey). The polygon self-intersects, and cannot be triangulated without introducing a twist in the resulting 3D surface. In this case, adding a sample point (purple) makes the polygon valid.
    (b) A common case where the polygon is very skinny near a cusp, and undersampling introduces a loop near the cusp.
    }
    \label{fig:invalid}
\end{figure*}
shows a case where undersampling introduces a self-intersection into the curve; we call this structure a \textit{twist}.  
See Shor and Van Wyk \shortcite{ShorVanWyk}\S1 for more discussion and examples of invalid curves.

\newcommand{\Mthree}{$M_{\rm 3D}$\xspace}
\newcommand{\Mtwo}{$M_{\rm 2D}$\xspace}
\newcommand{\hMthree}{$\hat{M}_{\rm 3D}$\xspace}
\newcommand{\hMtwo}{$\hat{M}_{\rm 2D}$\xspace}
\newcommand{\Pthree}{$P_{\rm 3D}$\xspace}
\newcommand{\Ptwo}{$P_{\rm 2D}$\xspace}
\newcommand{\cPthree}{$\mathcal{P}_{\rm 3D}$\xspace}
\newcommand{\cPtwo}{$\mathcal{P}_{\rm 2D}$\xspace}
\newcommand{\Tthree}{$T_{\rm 3D}$\xspace}
\newcommand{\Ttwo}{$T_{\rm 2D}$\xspace}
\newcommand{\Sthree}{$S_{\rm 3D}$\xspace}
\newcommand{\Stwo}{$S_{\rm 2D}$\xspace}

\subsection{Theorems}
\label{sec:theorems}

We now prove theorems that establish the significance of the WSO and WSOH properties for occluding contours. The first two theorems apply to triangle meshes, and the latter two to smooth surfaces.

\subsubsection{Meshes}

We first review the formal definition of WSO for 2D meshes. 

\begin{definition}[\cite{WeberZorin}]
\label{def:wso}
A polygon $P$ is weakly self-overlapping (WSO) if there is a map $f$ from some planar mesh $M$, homeomorphic to a disk, to the plane such that $f(\partial M)=P$, all triangles are mapped with positive orientation, and $\Theta=2\pi$ for each internal vertex in $f(M)$, where $\Theta$ is the sum of triangle angles around a vertex.
$f(M)$ is called a \textit{triangulation} of $P$.
\end{definition}
The mapping is illustrated in Figure \ref{fig:wso_def}, and $\Theta$ in Figure \ref{fig:spiral}(a).
\begin{figure}
    \centering
    (a)\includegraphics[width=3in]{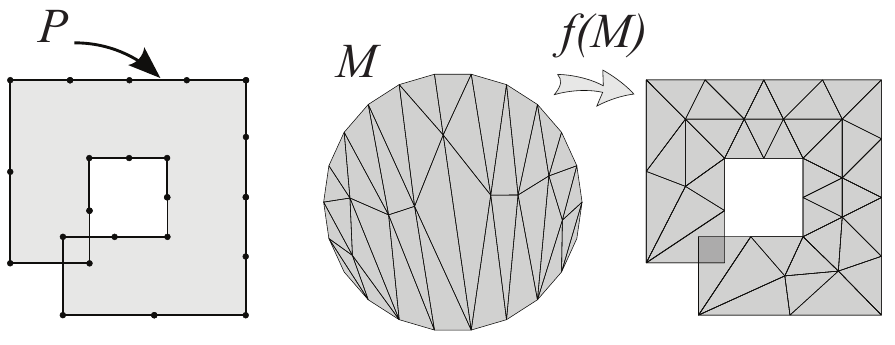}
    (b)\includegraphics[width=3in]{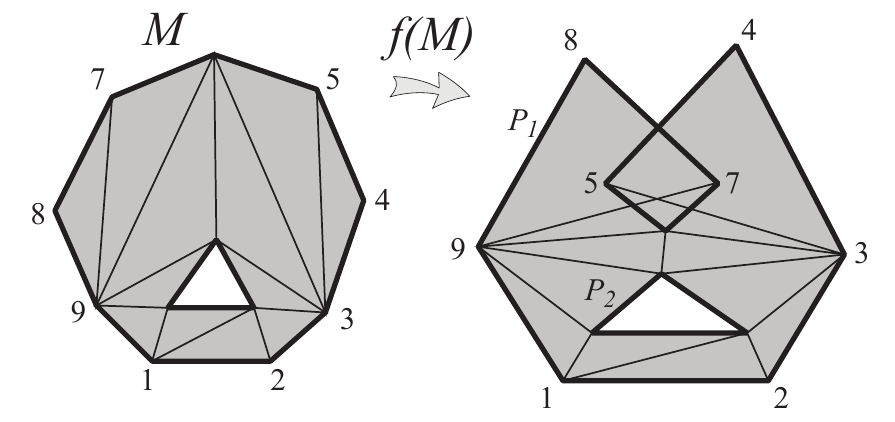}
\caption{(a) Elements of the definition of Weakly Self-Overlapping (WSO), shown here with a self-overlapping polygon $P$. There exists a mesh $M$ with disk topology and a map $f$ such that $P$ is the boundary of $f(M)$, and the mapping $f(M)$ has no flipped triangles or spiral structures.
    (b) Elements of the WSOH definition.  A set of polygons $\cP=\{P_1,P_2\}$ form the boundary of a mapping $f(M)$, where $M$ is a mesh with holes. Some vertices are numbered to show the correspondence.
    (Figure (a) from \cite{WeberZorin}, used courtesy the authors.)
    }
    \label{fig:wso_def}
\end{figure}
\begin{figure}
    \centering
    (a)
    \includegraphics[width=1in]{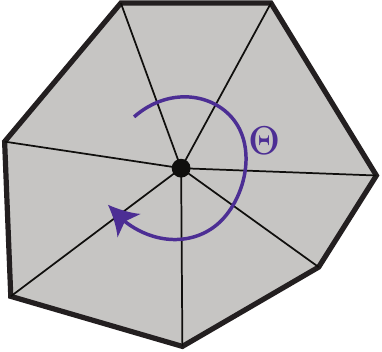}
    (b)
    \includegraphics[width=1in]{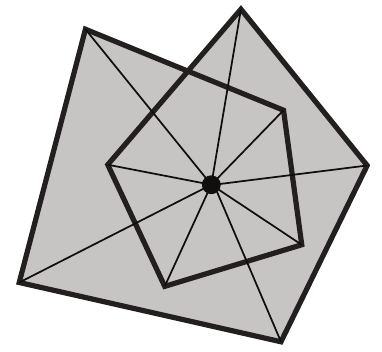}
    \caption{Rotations around a vertex in 2D. Let $\Theta$ be the sum of the angles adjacent to a vertex.
    (a) 
    Normally, at interior vertices of a typical planar triangulation with positive orientation, there are no overlaps or folds, implying $\Theta=2\pi$.
    (b) A vertex where $\Theta=4\pi$. We disallow this case in the WSOH definition.
    \label{fig:spiral} }
\end{figure}

Overlaps ($\Theta > 2\pi$) may occur at boundary vertices; these are singularities. If $\Theta \in [0,2\pi]$ for all boundary vertices as well, then the polygon is self-overlapping.

The condition $\Theta=2\pi$ rules out a peculiar violation of local injectivity. 
In normal situations, the fact that all triangles have positive orientation implies that $\Theta=2\pi$ for each internal vertex (Figure \ref{fig:spiral}(a)). However, Weber and Zorin point out that a spiral structure (Figure \ref{fig:spiral}(b)) produces a non-injective local structure with $\Theta=4\pi$, or, more generally, a positive integer multiple of $2\pi$.  
On a 3D mesh, we call a vertex \textit{spiral} if, for some viewpoint, either the sums of the positive angles or the sums of the negative angles lie outside $[-2\pi, 2\pi]$. This generalized definition will be useful later for ruling out  another hypothetical structure called fusilli cusps \bh{4.7}. We have never observed spiral vertices on any real meshes.

We now generalize the definition of WSO to regions with holes.
\begin{definition} 
Let $\cP$ be a set of $K$ polygons.  %
This set is called weakly self-overlapping with holes (WSOH) if there exists a genus-$(K-1)$ 2D mesh $M$ such that $f(\partial M)=\cP$, all triangles are mapped with positive orientation, and $\Theta=2\pi$ for each internal vertex in $f(M)$. 
$f(M)$ is called a triangulation of $\cP$.
\end{definition}
These quantities are illustrated in Figure \ref{fig:wso_def}.
Note that a single WSO polygon $P$ is a special case of a WSOH set, with $K=1$. The associated mesh has disk topology (genus $0$).

\begin{theorem}
Let \Tthree be a connected triangle mesh in 3D. Let $\bc$ be a camera position with an associated image plane. Assume all vertices in \Tthree have positive depth from the camera and no vertices are spiral.
Let \Ttwo be the projection of the mesh on the image plane.
\Ttwo is the triangulation of a set of WSOH polygons $\cP$ if and only if all triangles in \Tthree are front-facing or all back-facing.  Moreover, \Ttwo is WSO if and only the above conditions hold, and \Tthree is genus 0.
\end{theorem}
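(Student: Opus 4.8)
The plan is to route both directions through one bridge: the orientation correspondence of Appendix~\ref{app:orient}, which states that a triangle with all vertices at positive depth is front-facing in 3D if and only if its image under the projection $\pi$ is positively oriented (and back-facing iff negatively oriented). I would take the witness mesh $M$ of Definition~\ref{def:wso} to be the abstract combinatorial mesh of \Tthree and the witness map $f$ to be $\pi$; then the triangulation $f(M)$ is exactly \Ttwo, and the three defining conditions of WSOH can be checked directly on the projection.

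For ``all triangles front-facing (or all back-facing) $\Rightarrow$ WSOH,'' I would verify the conditions in turn. Setting $\cP=f(\partial M)$, the projected boundary loops, gives $f(\partial M)=\cP$ trivially. Positive orientation of every projected triangle is precisely the orientation correspondence; in the back-facing case one first reverses the orientation of $M$, which is why the conclusion reads ``front- or back-facing.'' The remaining condition $\Theta=2\pi$ at each interior vertex is where the no-spiral hypothesis is used: the triangles incident to an interior vertex form a closed fan (its link is a cycle), so their positively-oriented images sweep a punctured neighborhood with total angle a positive integer multiple of $2\pi$; since all incident angles are positive, the no-spiral bound forbids this sum from exceeding $2\pi$, forcing $\Theta=2\pi$. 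Conversely, if \Ttwo is a WSOH triangulation then its triangles are positively oriented by definition, so the orientation correspondence immediately makes every triangle of \Tthree front-facing (or, under the opposite global orientation, back-facing), giving the reverse implication.

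Separating WSO from WSOH then reduces to the topology of the witness $M=$\Tthree: WSO is the special case $K=1$, in which $M$ is a disk, whereas a WSOH witness with $K$ boundary polygons is a region with $K-1$ holes. Since the WSOH witness for \Ttwo is \Tthree itself, \Ttwo is WSO exactly when \Tthree is a disk, i.e.\ genus~$0$, which yields the ``moreover'' clause. I expect the main obstacle to be this topological bookkeeping together with the angle-sum step: both rest on the projection restricting to an orientation-preserving local homeomorphism on the interior (positive orientation supplies the immersion, no-spiral supplies local injectivity), and it is this local-homeomorphism property that one must exploit both to pin $\Theta$ to $2\pi$ rather than a larger multiple and to certify that the boundary-loop count and genus of $M$ are exactly those of \Tthree.
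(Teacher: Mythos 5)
Your proposal is correct and follows essentially the same route as the paper's proof: both directions go through the orientation-preservation fact of Appendix~\ref{app:orient}, the no-spiral hypothesis pins $\Theta=2\pi$ at interior vertices, the back-facing case is handled by reversing orientations, and the WSO/WSOH distinction reduces to counting boundary loops versus the genus of \Tthree. Your write-up merely makes explicit the closed-fan/angle-sum argument that the paper states in one sentence.
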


\begin{proof}
(if)
Suppose all triangles in \Tthree are front-facing. Then, all triangles in \Ttwo have positive orientation, because projection preserves orientation (Appendix \ref{app:orient}). All interior vertices have $\Theta_i=2\pi$ because of the positive orientation of the adjacent triangles and the non-spiral condition. 
\Ttwo is bounded by $K$ polygons, where $K$ is one more than the genus of \Tthree.

If all triangles in \Tthree are back-facing, then we can produce the WSOH triangulation by reversing the orientations of all triangles.

(only if) Suppose \Ttwo is the triangulation of a WSOH set $\cP$. Then, all triangles must have positive orientation, and, because projection preserves orientation, all faces of \Tthree must be front-facing.
\end{proof}

Hence, suppose we begin with a 3D smooth surface, and sample its contour generators into polylines that bound the front- and back-facing regions of the mesh. It is possible to triangulate the surface with consistent orientations if and only if each region's boundary is WSOH.

WSO is a special case of this theorem: the projection of a front-facing mesh region without holes corresponds to a WSO polygon, and vice versa.

We make one additional conjecture, for which we do not have a proof or counterexample: for a WSOH set of polygons, each of the component polygons is WSO.

\subsubsection{Smooth Surfaces}
\label{app:smooth_theorems}

We now prove the analogous result for smooth surfaces.
We first define WSO and WSOH for smooth curves, adapting ideas from \cite{WeberZorin} and \cite{ShorVanWyk}

\begin{definition}
\label{def:wso_sm}
A closed curve $g: S^1\rightarrow \mathbb{R}^2$ is weakly self-overlapping (WSO) if there is a map $f : D^2\rightarrow \mathbb{R}^2$ from the disk $D^2$ such that $f(\partial D^2)=g$, $f$ maps with positive orientation and is locally injective everywhere in the interior or $D^2$. 
\end{definition}

\begin{definition}
\label{def:wsoh_sm}
A set of $K$ closed curves $g_i: S^1\rightarrow R^2$ is weakly self-overlapping with holes (WSOH) if
there is a genus-$(K-1)$ region $R \subset \mathbb{R}^2$ such that $f(\partial R)=\bigcup g_i$, $f$ maps with positive orientation and is locally injective everywhere in the interior of $R$. 
\end{definition}

As noted by Weber and Zorin, WSOH requires that the map $f$ has Jacobian with positive determinant everywhere; unlike in meshes, we do not need to separately enforce positive orientation and injectivity (e.g., $\Theta=2\pi$).

\newcommand{\sign}{\mathrm{sign}}
\newcommand{\dfdu}{\mybf_u}
\newcommand{\dfdv}{\mybf_v}
\newcommand{\dpdu}{\bp_u}
\newcommand{\dpdv}{\bp_v}
\newcommand{\dzdu}{z_u}
\newcommand{\dzdv}{z_v}

\begin{theorem}
\label{thm:wso_smooth}
Let \Sthree be a connected smooth surface with genus $K-1$.
Let $\bc$ be a camera position with an associated image plane. Assume all points in \Sthree have positive depth. Let \Stwo be the projection of \Sthree on the image plane. \Stwo is a WSOH region if and only if all points in the interior of \Sthree are front-facing or all back-facing.
\end{theorem}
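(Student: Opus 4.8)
The plan is to reduce the entire statement to a single local fact: at any point $\bp$ in the interior of \Sthree, the Jacobian determinant of the projection $\pi$ onto the image plane has the same sign as the orientation function $g(\bp)=(\bc-\bp)\cdot\bn$. This is the smooth counterpart of the orientation-preservation property established in Appendix~\ref{app:orient} for triangles: writing the surface locally as $\bp(u,v)$ with $\bp_u\times\bp_v$ oriented along $\bn$, the Jacobian of $\pi\circ\bp$ factors as a positive multiple (coming from the perspective division, which is positive precisely because all points have positive depth) times $g(\bp)$. The positive-depth hypothesis is exactly what guarantees that this prefactor never flips the sign, so that $\sign\det D(\pi\circ\bp)=\sign g(\bp)$. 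Once this lemma is in hand, both directions follow from the observation following Definition~\ref{def:wsoh_sm} that WSOH is equivalent to the witnessing map having everywhere-positive Jacobian determinant on the interior, i.e.\ being an orientation-consistent local diffeomorphism there.

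For the ``if'' direction, I would assume without loss of generality that every interior point is front-facing, so $g>0$ throughout the interior. By the lemma the Jacobian of $\pi$ is strictly positive on the interior, so $\pi$ is an orientation-preserving local diffeomorphism there; in particular it maps with positive orientation and is locally injective, the two defining requirements of Definition~\ref{def:wsoh_sm}. To exhibit the witnessing data, I take $R\subset\mathbb{R}^2$ to be a planar region of the same topological type as \Sthree (a disk with $K-1$ holes, matching the genus-$(K-1)$ bookkeeping), fix a homeomorphism $\phi$ from $R$ to \Sthree, and set $f=\pi\circ\phi$. Then $f(\partial R)$ equals the projection of $\partial$\Sthree, i.e.\ $\bigcup_i g_i$, and $f$ is positively oriented and locally injective on the interior, so \Stwo is WSOH. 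If instead all interior points are back-facing, the Jacobian is negative; composing $\phi$ with an orientation-reversing homeomorphism of $R$ restores positive orientation, and the same argument applies.

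For the ``only if'' direction, I would use the projection itself as the witnessing map $f$. If \Stwo is WSOH, then $f=\pi$ is locally injective on the interior, hence an immersion there, so its Jacobian never vanishes on the interior. By the lemma this means $g(\bp)\neq 0$ at every interior point, i.e.\ the interior contains no contour point. Since the interior of \Sthree is connected and $g$ is continuous, $g$ cannot change sign without passing through zero; therefore $g$ has constant sign, giving all interior points front-facing (if $g>0$) or all back-facing (if $g<0$). This connectedness argument is the cleanest part of the proof: the absence of an interior fold is precisely what forces a single consistent orientation.

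The main obstacle I anticipate is the orientation lemma and its careful statement rather than the topological scaffolding around it. Establishing $\sign\det D(\pi\circ\bp)=\sign g(\bp)$ requires computing the differential of the perspective projection along the surface and verifying that the depth-dependent prefactor is positive exactly under the positive-depth hypothesis; the orthographic case is immediate, but the perspective case needs this check. A secondary subtlety is interpretational: Definition~\ref{def:wsoh_sm} only asserts the \emph{existence} of some map $f$, whereas the argument above uses the projection as $f$. I would make explicit that \Stwo ``being WSOH'' is meant with respect to the projection structure it inherits from \Sthree (mirroring how \Ttwo ``is a triangulation'' means the projected mesh itself realizes the WSOH triangulation in the mesh theorem), so that no alternative fold-free map is substituted for the fold that $\pi$ genuinely exhibits when the interior is mixed. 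Finally, I would confirm the genus/boundary bookkeeping, namely that the genus-$(K-1)$ labels on \Sthree and on $R$ both refer to the number of boundary loops $K$, so that the homeomorphism $\phi$ used in the ``if'' direction exists.
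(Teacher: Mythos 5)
Your proposal is correct and follows essentially the same route as the paper: the central lemma that the sign of the Jacobian determinant of the projection equals the sign of $g(\bp)$, with positive depth supplying the positive prefactor ($z^3>0$ in the paper's determinant expansion), is exactly the computation in the paper's proof, and both arguments then identify the witnessing map of Definition~\ref{def:wsoh_sm} with the projection itself. Your added connectedness argument for the ``only if'' direction and the explicit genus bookkeeping are elaborations of steps the paper leaves implicit, not a different approach.
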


\begin{proof}
Let $(u,v) \in R$ be a parameterization of the interior of \Stwo. In world coordinates, the image plane locations are $\mybf(u,v) = (x,y,1)$, and let $z(u,v)>0$ be the $z$-coordinate for each point, so the corresponding surface \Sthree is $\bp(u,v)=\mybf(u,v)\ z(u,v)=(xz,yz,z)$, with camera position $\bc=(0,0,0)$ (Figure \ref{fig:smooth_tangents}). 
Let $\dfdu \equiv \partial \mybf/\partial u, \dfdv \equiv \partial \mybf/ \partial v$.
The orientation of \Stwo at a point is 
\begin{equation}
\sign ( (\dfdu \times \dfdv) \cdot (\bc-\mybf))=
\sign \det (\dfdu, \dfdv, -\mybf)
\label{eq:two_orient}
\end{equation}
because of the equivalence of scalar vector product to a determinant. (This formula is equal to the sign of the determinant of the Jacobian of the 2D version of the mapping $f(u,v)=(x,y)$.)
\begin{figure}
    \centering
    \includegraphics[width=2.5in]{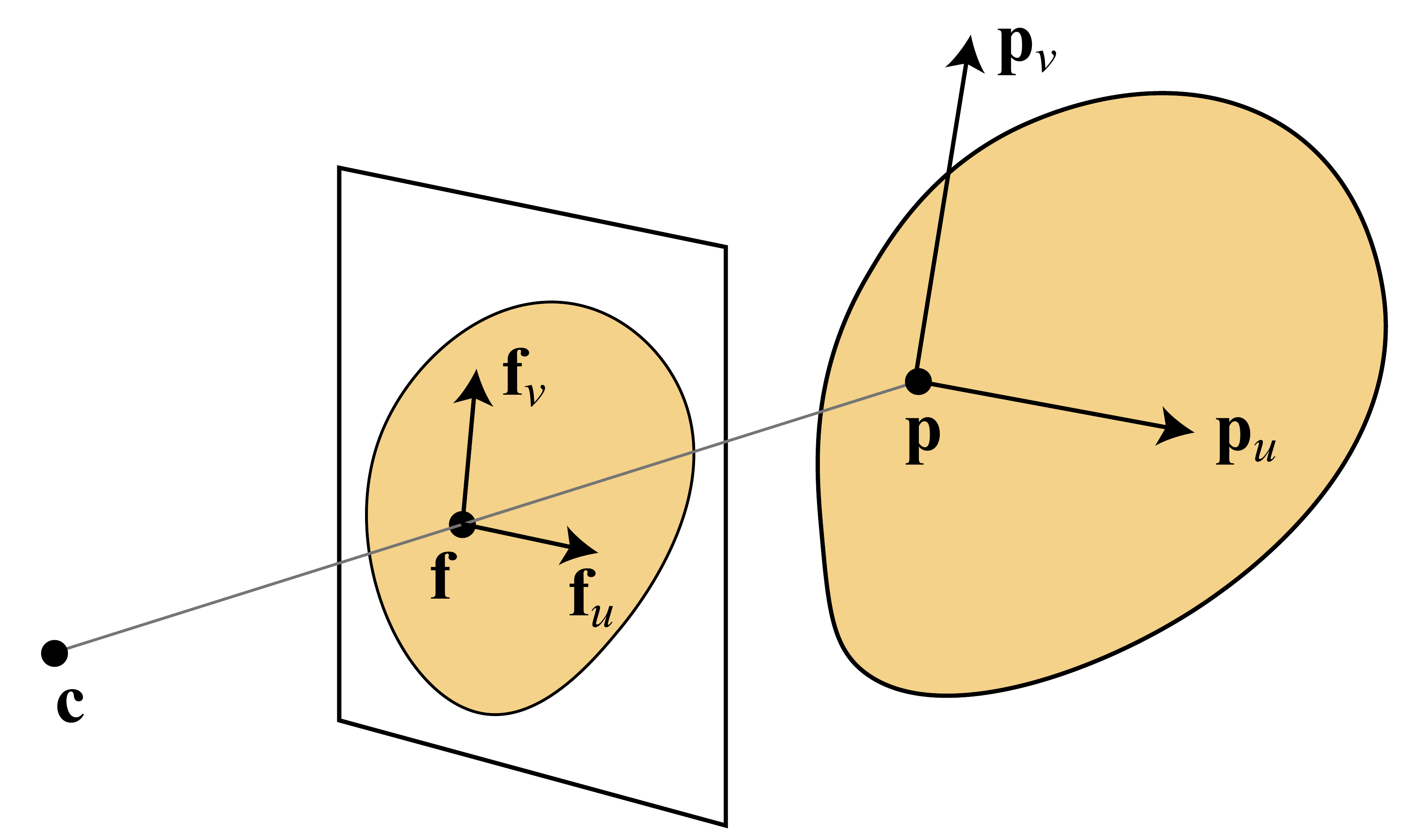}
    \caption{Elements of the proof of Theorem \ref{thm:wso_smooth}. See text for details.}
    \label{fig:smooth_tangents}
\end{figure}

Let $\dpdu \equiv \partial \bp/\partial u, \dpdv \equiv \partial \bp/\partial v$.
The surface normal at a point is the cross-product of the tangent vectors: $\dpdu \times \dpdv$, so the orientation of a surface point is 
\begin{align}
\sign ( (\dpdu \times \dpdv) \cdot (\bc-\bp)) 
&= \sign \det (\dpdu, \dpdv, -\bp) \\
&= \sign \det (\dfdu z + \mybf \dzdu, \dfdv z + \mybf \dzdv, -\mybf z) \\
&= \sign\ z^3 \det (\dfdu, \dfdv, -\mybf) \label{eq:det} \\
&= \sign \det (\dfdu, \dfdv, -\mybf)
\label{eq:three_orient}
\end{align}
Equation \ref{eq:det} follows from properties of the determinant.

Since Equations \ref{eq:two_orient} and \ref{eq:three_orient} are the same, the surface is front-facing everywhere if and only if the parameterization $f$ has positive orientation everywhere.  Hence, a curve being WSO implies that a front-facing surface exists that projects to this curve, and vice versa.
\end{proof}

Hence, all valid regions on smooth surfaces will be WSOH, and we must produce WSOH sets of polygons from these curves in order to be able to triangulate them.

\subsection{Consequences for Contour Visibility Algorithms}
\label{sec:consequences}

These observations give new insight into why the contour visibility problem has proven so troublesome. 

Existing algorithms that compute smooth occluding contours can be grouped into two categories. Planar map algorithms \cite{Eisemann:2008,Karsch:2011,Winkenbach:1996} modify curves until they create a valid planar map; they guarantee consistent visibility but do not make any guarantees about accuracy of the contours. All other existing methods computed sampled representations of the occluding contour, and then compute visibility for these polylines \cite{Hertzmann:2000,Weiss:1966:VPI:321328.321330,Benard:2014}, including methods that use numerical ray tests and curve sampling \cite{Elber:1990}.
For this latter category, we find that naively sampling contours often produces invalid polygons, e.g., Figure \ref{fig:bhk2014}. Further experiments are shown in Section \ref{sec:experiments}. 
\newcommand{\bhkwidth}{1.5in}%
\begin{figure*}
    \centering
    \begin{tabular}{ccc}
    \includegraphics[width=\bhkwidth]{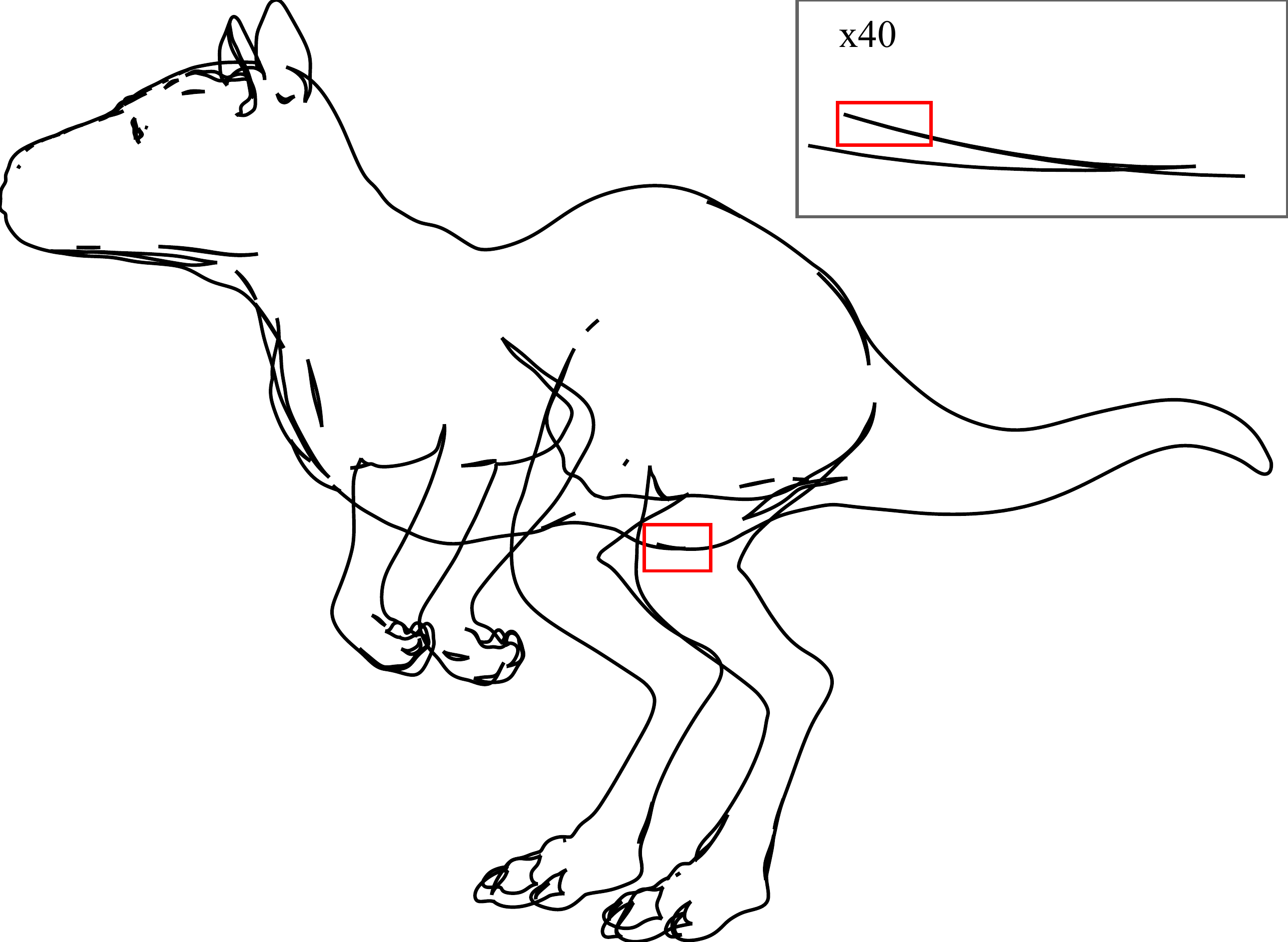}~~
&        \includegraphics[width=2in]{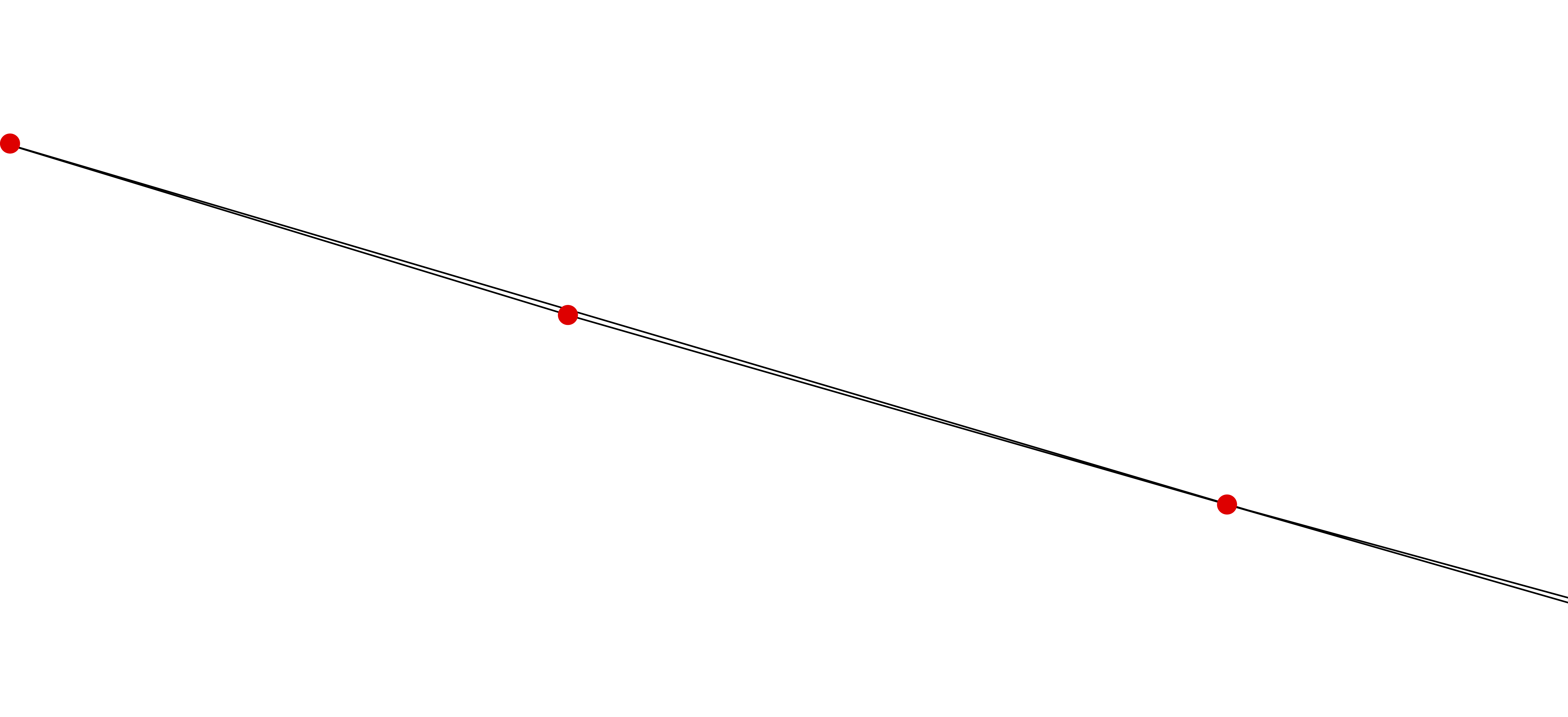}
&        \includegraphics[width=\bhkwidth]{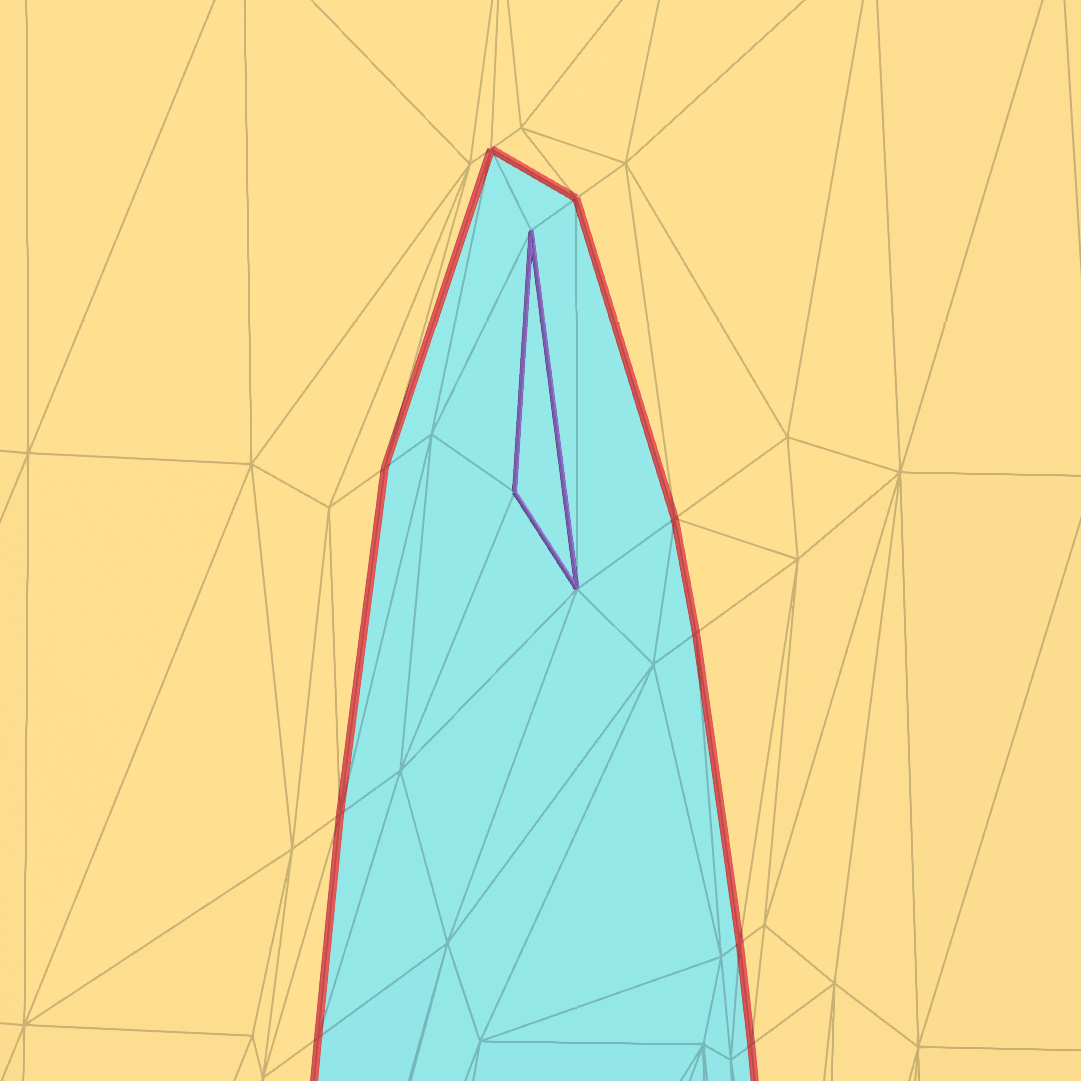} \\
(a) $1\times$/$40\times$ & (b) $1600\times$ & (c) Mesh side view
\end{tabular}
        \caption{Invalid contours with \cite{Benard:2014}.
        (a)
        B\'{e}nard et al.~produce highly refined contours using root-finding. Nonetheless, the contours may fail to be WSO. In the example here, a twist occurs in the sampling near a cusp, similar to the example in Figure \ref{fig:invalid}(b). 
        (b)
        In image-space, this twist is an \textit{extremely} thin structure, nearly invisible even at the $1600\times$ zoom shown here.
        (c) As a result, B\'{e}nard et al.'s triangulation produces an inconsistent triangle in this region, shown in a side view highlighted in purple.  This instance is occluded and thus does not affect visibility, but there is no guarantee that this will always be the case.
        For this example, the output includes 11 inconsistent triangles out of 546,624 that were generated. 
        Our method produces 100\% consistent triangles, generating only 55,476 triangles for this view (Figure \ref{fig:results}), and performing substantially faster (2 minutes for our method, and 10 minutes for B\'{e}nard et al.)
        (Killeroo
\copyright\ headus.com.au)  
    \label{fig:bhk2014}}
\end{figure*}
And no possible algorithm can produce a consistent visibility assignment for invalid contour polygons, because these curves cannot be triangulated. Thus, all existing methods will fail in some cases.

These algorithms often do find correct results, for example, sampled polygons often do happen to be WSOH, and problematic areas are often completely occluded, so that the resulting drawing is valid. But errors inevitably occur as well.

\section{ConTesse Algorithm}

We now describe the \textsf{ConTesse} meshing algorithm in full. This procedure follows the same high-level sequence of steps as in Section \ref{sec:convex}, but these steps are made more involved by non-convexity.  Moreover, we must take steps to ensure that sampled contours are WSOH.
The steps of our algorithm are illustrated in Figure \ref{fig:contesse-overview}.
\begin{figure*}
\centering
\includegraphics[width=7in]{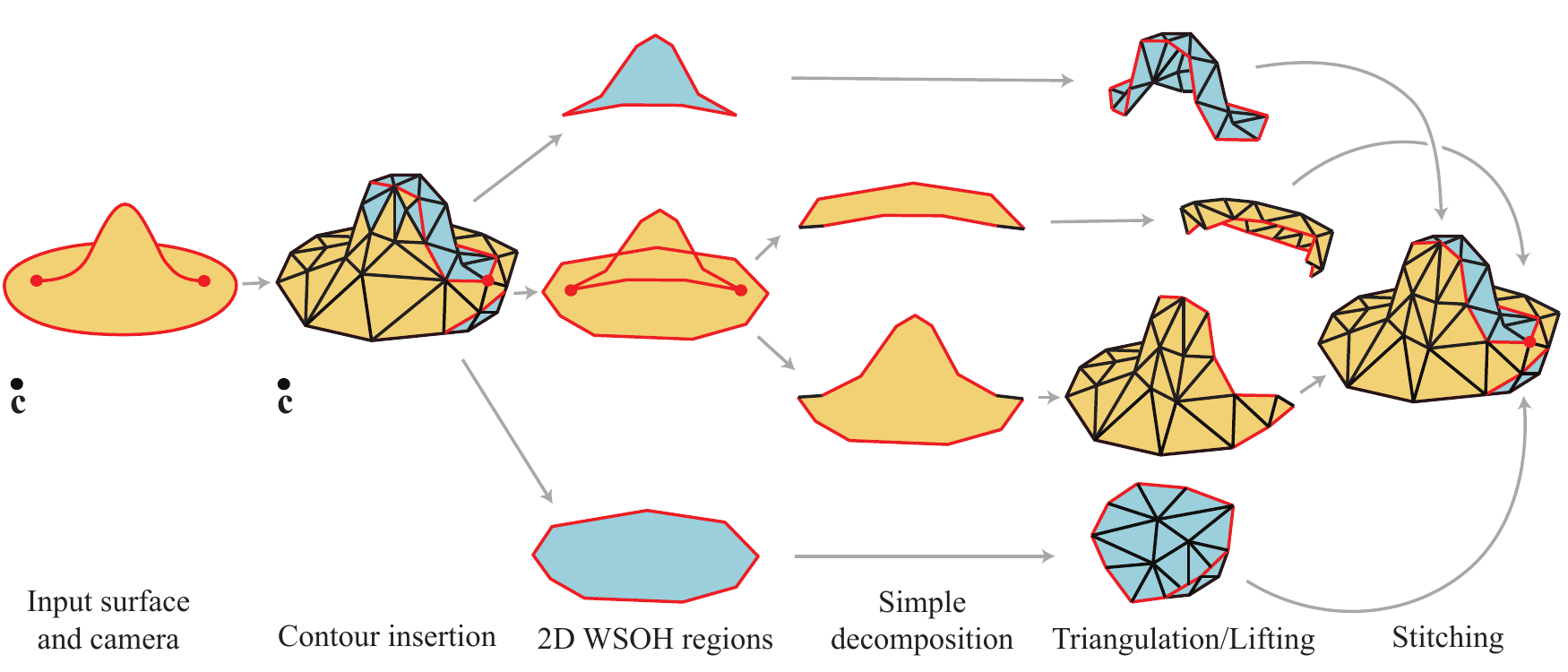}
\caption{Overview of the ConTesse algorithm. The input is a smooth surface viewed from a camera center \textbf{c}. An initial triangle mesh is computed, with contour edges inserted by root-finding. The contours partition the object into WSOH regions. Each region is decomposed into simple polygons, using cuts to remove holes and WSO triangulation to decompose self-overlapping regions. Each simple polygon is then triangulated and lifted to 3D, and these triangulations are stitched to produce the output mesh. This mesh can then be used as input to conventional non-photorealistic rendering algorithms.
\label{fig:contesse-overview}}
\end{figure*}

Related algorithms for reconstructing shape from a network of contours have been developed in computer vision, e.g., \cite{Roberts:1963}, and sketch-based modeling, e.g., \cite{smoothsketch}. Our case is distinct in that we begin with a 3D input surface rather than 2D measurements.  Related problems occur in untangling self-intersecting volumes as well, e.g., \cite{Li:2018:IOS,Sacht:SelfIntersectingVolumes:2013}. 

\subsection{Input and Problem Statement}
\label{sec:problem_statement}

The algorithm takes a mesh as input and camera position $\bc$. We treat the mesh as the base mesh of a subdivision surface, and we require that the surface be orientable, in general position with the camera, with back-faces never visible \bh{3.3,3.5}.  
The surface may have boundaries. We assume the entire scene has positive depth from the camera, i.e., no part of the scene is behind the camera.   We assume no spiral vertices ---including no fusilli cusps \bh{4.7}--- hypothetical structures that we have never observed with real meshes.  We discuss self-intersections in Section \ref{sec:discussion}, which could be handled as a post-process.

The subdivision surface is parameterized by a base mesh $\cP$, so that for any given base mesh point $\bu \in\cP$ there is a corresponding point $\bp(\bu)\in \cS$ on the surface.  Each point is either front-facing $g(\bu) > 0$, back-facing $g(\bu)<0$, or contour generator $g(\bu)=0$, denoted respectively $\sF, \sB$, or $\sC$ for short. 

We aim to produce a new mesh $\cM$ with the following properties:
\begin{enumerate}
    \item The mesh has the same topology as the smooth surface. There exists a smooth bijection that defines the correspondence between points on the surfaces. Mesh vertices have the same 3D locations as their corresponding smooth surface points.
    \item Let $\cC$ be the mesh's contour generator, which partitions the mesh into regions; each region comprises entirely front-facing triangles or entirely back-facing triangles.  
    \item Mesh vertices must correspond to the following types of smooth surface points:
    mesh vertices in $\cC$ correspond to $\sC$ points on the smooth surface; vertices inside front-facing regions correspond to $\sF$ points on the smooth surface; back-facing vertices correspond to $\sB$ points. Every triangle must have at least one non-$\sC$ vertex.
\end{enumerate}
These conditions are equivalent to ``Contour-Consistency'' in~\cite{Benard:2014}\S 4.

\subsection{Contour insertion}
\label{sec:insertion}

The first stage of our algorithm is to create an initial surface triangulation that includes a sampling of the contour generator, that is, new contour vertices $\sC$, at locations corresponding to contour points of $\cS$. In the output, there are no edges containing sign-crossings of $g(\bu)$, and no $\sC\sC\sC$ triangles. We use a version of the method in \cite{Benard:2014}\S 6.1--6.2, with simplified handling of cusps, as follows.

\paragraph{Vertex insertion.}
The first step produces an initial mesh $\cM$ by uniformly subdividing the base mesh $\cP$ a predetermined number of levels (\S 6.1). 
Root-finding on $g(\bu)$ is applied to every $\sF\sF$ and $\sB\sB$ edge, and edges are split whenever roots are found. Specifically, if $\bu_0$ and $\bu_1$ are the parameter locations (preimages) of two adjacent vertices, root-finding densely samples $g(\bu(t)) = g((1-t)\bu_0 + t \bu_1)$ along each edge.
Root-finding and splitting is repeated on any new $\sF\sF$ and $\sB\sB$ edges (up to a maximum of five recursions).  Next, contour insertion is performed on all $\sF\sB$ edges, as in \S 6.2, but with no special handling for cusps. 
Finally, we perform root-finding to find cusps, by repeatedly bisecting any triangles with sign-crossings in both $g$ and radial curvature (\S 6.2). 
If a cusp is detected in the interior of a triangle, a new vertex is inserted at the cusp, and the triangle is split into three triangles. However, if a cusp is detected close to an existing vertex (either in image-space, world-space, or $uv$-space), then the existing vertex is shifted to the cusp location. This shifted vertex produces a $\sC\sC\sC$ triangle, which is resolved by an edge flip.

\paragraph{Singularity labeling.}
Next, the algorithm tags singularities in the image-space contours, where the polygon must locally overlap in image space.
Singularities should correspond to cusps in the contour generator, and so all cusps found by root-finding in the previous step are tagged as singularities.

A vertex can be only singular for one of the two regions it is adjacent to.
For the singular region, the triangles in the one-ring self-overlap in image space (Figure \ref{fig:singularity-direction}(a)). In the tangent plane of a smooth cusp, the self-overlapping side is the convex side of the contour (Figure \ref{fig:singularity-direction}(b)). For the discrete curves we have, we compute  the discrete Laplacian ($\bv + \bx -2\bw$) of the contour loop in 3D, and then determine which of the two regions the Laplacian vector points to by projecting it onto the one-ring (Figure \ref{fig:singularity-direction}(c)). The other region --- that it does not point to --- gets a singularity label at this vertex.

In some cases, the contour polygons have additional singularities missed during the previous step, e.g., see Figure 19 of \cite{Benard:2014}. We detect these as follows. To test a contour vertex $\bw$, the algorithm bidirectionally traces along the contour generator to find two nearby contour vertices $\bv$ and $\bx$, such that each of them is at least $10^{-8}$ from $\bw$ in 3D.
If the image space angle  $\angle \bv\bw\bx$ is less than $\pi/3$, then $\bw$ is marked as a singular vertex.  
\newcommand{\singwdth}{0.9in}

\begin{figure}
    \centering
    (a)
    \includegraphics[width=\singwdth]{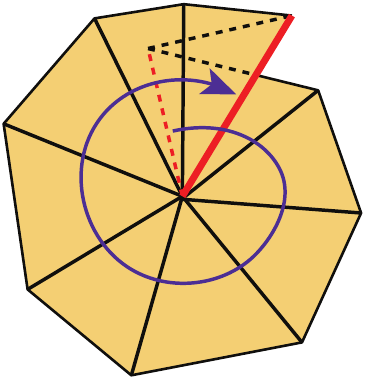}
\hfill
(b)
    \includegraphics[width=\singwdth]{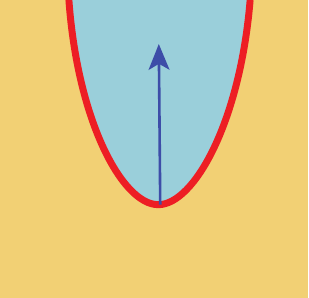}
    \hfill
    (c)
    \includegraphics[width=\singwdth]{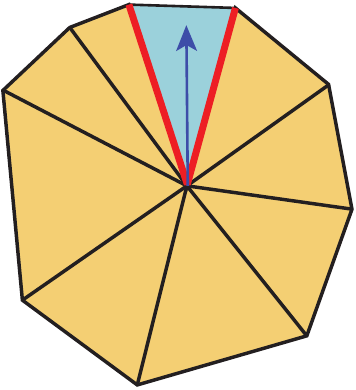}
    \caption{Singularity side determination. (a) At a cusp, one of the two adjacent regions self-overlaps in image-space, i.e., the total rotation angle of the polygons $\Theta>2\pi$.
    (b) In the tangent plane of a cusp on a smooth surface, the curve Laplacian points toward the smaller, non-overlapping region, which is back-facing (blue) in this example.
    (c) Since we are operating with a discrete mesh output, we compute the discrete curve Laplacian, and tag a singularity in the region (yellow) that it does not point to.
    \label{fig:singularity-direction}}
\end{figure}

\paragraph{WSO checking and twist removal.}
While the smooth surface's contour generator must be WSOH in 2D, sometimes a polygon sampled from the contour is not.  This means that the polygon is invalid in 2D and cannot be triangulated consistently (Section \ref{sec:invalid}).  We address these cases with simple heuristics, and, when necessary, additional subdivision levels to increase the contour sampling. Whether a curve is WSO is checked by the algorithm of Weber and Zorin \shortcite{WeberZorin}. All contours should converge to WSO with sufficient sampling density, but we use heuristics to avoid this extra computation when possible. We leave the problem of efficient WSO  sampling as future work. 
If any contours fail to be WSO, then we first employ a series of heuristics to attempt to correct any sampling errors. Because contour insertion produces a reasonably dense sampling of contour generator, invalid portions are typically localized to a few structures that we call \textit{twists}. The heuristics we use to detect and resolve twists are given in Appendix \ref{app:correcting}. 

If any curves are still not WSO after applying the heuristics, then our algorithm subdivides the original mesh to a finer level than before, and repeats all of the steps in this section. This process repeats until all curves are WSO. %
This new sampling is then passed to the next step, below.

\subsection{Region Decomposition}

At the end of the previous stage, we have partitioned the shape into a set of regions that project to WSOH regions in 2D.  In this stage we decompose these regions into simple polygons in 2D, by removing holes and then applying an existing triangulation algorithm to find self-overlaps.  %

\paragraph{Removing holes.}
After insertion, some regions may have holes. 
If the region has holes, we introduce a cut, which is a set of mesh edges added to the region boundary.
Adding cuts to the region boundary produces a new region without holes, where the new boundary of the region traverses each cut twice.

To find a cut, we run the cut-to-disk algorithm of Gu et al.~\shortcite{Gu:2002}\S 3.2 on every surface region. 
The method of Gu et al.~assumes a valid input triangulation, but our input triangulation may pass outside the polygon in image space.  As a result, some possible cuts may pass outside the polygon in image space. In order to avoid bad cuts, we modify the algorithm to avoid cusps, image-space intersections, and triangles facing the wrong direction for their region, where possible. Details are provided in Appendix \ref{app:cuts}.

\paragraph{Initial triangulation.}
After removing holes, the mesh can be partitioned into front-facing regions and back-facing regions. Each region is bounded by a polygon in 3D, comprising the contours and/or boundaries surrounding the region. For each region, the boundary polygon is projected to 2D, and triangulated in 2D using the WSO triangulation algorithm of Weber and Zorin \shortcite{WeberZorin}\S 3.1--3.2, which takes polygons with labeled singularities as input.
Mapping this triangulation to the 3D contour polygon gives an initial valid triangulation in 3D. However, this triangulation only uses contour vertices and so cannot accurately represent surface shape in the interior of the polygon.

\paragraph{Simple decomposition.}
We then decompose the triangulation into simple polygons, with the method of Weber and Zorin \shortcite{WeberZorin}\S 4; see the \emph{Simple Decomposition} step of Figure \ref{fig:contesse-overview}.
Once this step is completed for the whole surface, we have decomposed it into simple 2D regions, each of which is entirely front-facing or entirely back-facing.

\subsection{Triangulation and Lifting}

In this stage, we generate a 3D triangulation for each simple polygon from the previous step.

Our approach is to first identify a set of 3D surface points that lie within the simple polygon, and then triangulate these points. While it may be possible instead to use the ray-casting procedure in Section~\ref{sec:convex}, extra steps would be required to disambiguate rays that intersect the WSO region multiple times.

The procedure for finding these points is as follows.
Initially, each non-contour edge in a 2D polygon connects two contour vertices from the 3D mesh.
We first search for a path on the 3D surface connecting these vertices that projects to the line containing the edge in 2D.
If we find such a path, then we march along it, and periodically produce sample points on the smooth surface.

This process skips samples that do not move in the direction of the endpoint in image space, to avoid folds due to inconsistencies. Specifically, let the endpoints of an edge be $\ba$ and $\bb$; after a sample $\bv_i$ is inserted, the next sample is inserted as $\bv_{i+1}$ only if $(\bv_{i+1}-\bv_i)\cdot(\bb-\ba)>0$. Samples are skipped also if they would be within an image-space distance threshold to an existing vertex. This process is repeated for each edge of the initial triangulation.

This produces a new set of 2D/3D sample points within the polygon. The triangulation is then computed by CDT \cite{PaulChew1989} on these sample points and the bounding polygon in 2D.

In some cases our method fails to find a path between vertices in the original polygon, which produces very long edges in the triangulation; this happens most often when one of the vertices is a cusp. For these edges, we identify a large (five-ring) neighborhood around one of the endpoints, and then apply Delaunay edge-flipping for all edges in the neighborhood, which effectively removes long edges.

This process may produce $\sC\sC\sC$ triangles, i.e., triangles where each vertex lies on a contour edge, which in turn can lead to a degenerate mesh when two adjacent regions have triangles formed of the same three contour vertices. For each $\sC\sC\sC$ triangle, we randomly pick an edge between two vertices that is not a contour edge, and split this edge, and then perturb the new vertex toward  the camera for front-facing patches, or away for back-facing patches.

\subsection{Final output}

The final output mesh is produced by stitching the triangulated regions from the previous step.
The occluding contours of this surface correspond to the occluding contours of the input smooth surface.
This mesh can then be supplied to standard mesh contour detection and stylization algorithms \bh{9}.

Assuming that all regions are WSOH, the output mesh satisfies the goals set out in Section~\ref{sec:problem_statement} by design, and the contour generator of the output mesh is the contour generator sampled from the input surface.  The contour generator's visibility can be computed by applying standard visibility algorithms for mesh contours.

\section{Experiments}
\label{sec:experiments}

We implemented our method using Catmull-Clark subdivision surfaces, with exact limit position and normal evaluation using the algorithm and code from Lacewell and Burley \shortcite{lacewell}. To compute radial curvatures for cusp detection, we use finite differences to estimate surface derivatives, and then compute radial curvature analytically from these estimates.

Our system takes a 3D model and a viewpoint $\bc$ as input.  The system outputs a remeshed version of the input surface.  The system then computes the visible contours of this mesh using standard methods \bh{4}, which are output as a tagged SVG file. These may be further stylized by standard methods; we use topological simplification \bh{9.3} and stroke texturing in our examples \bh{9.2}. 

\newcommand{\figwidth}{1.6in}
\newcommand{\figheight}{1in}
\begin{figure*}
\centering
\begin{tabular}{cccc}
(a) Camera view & (b) Side view & (c) Output curve network & (d) Occluding contours \\
\includegraphics[width=\figwidth]{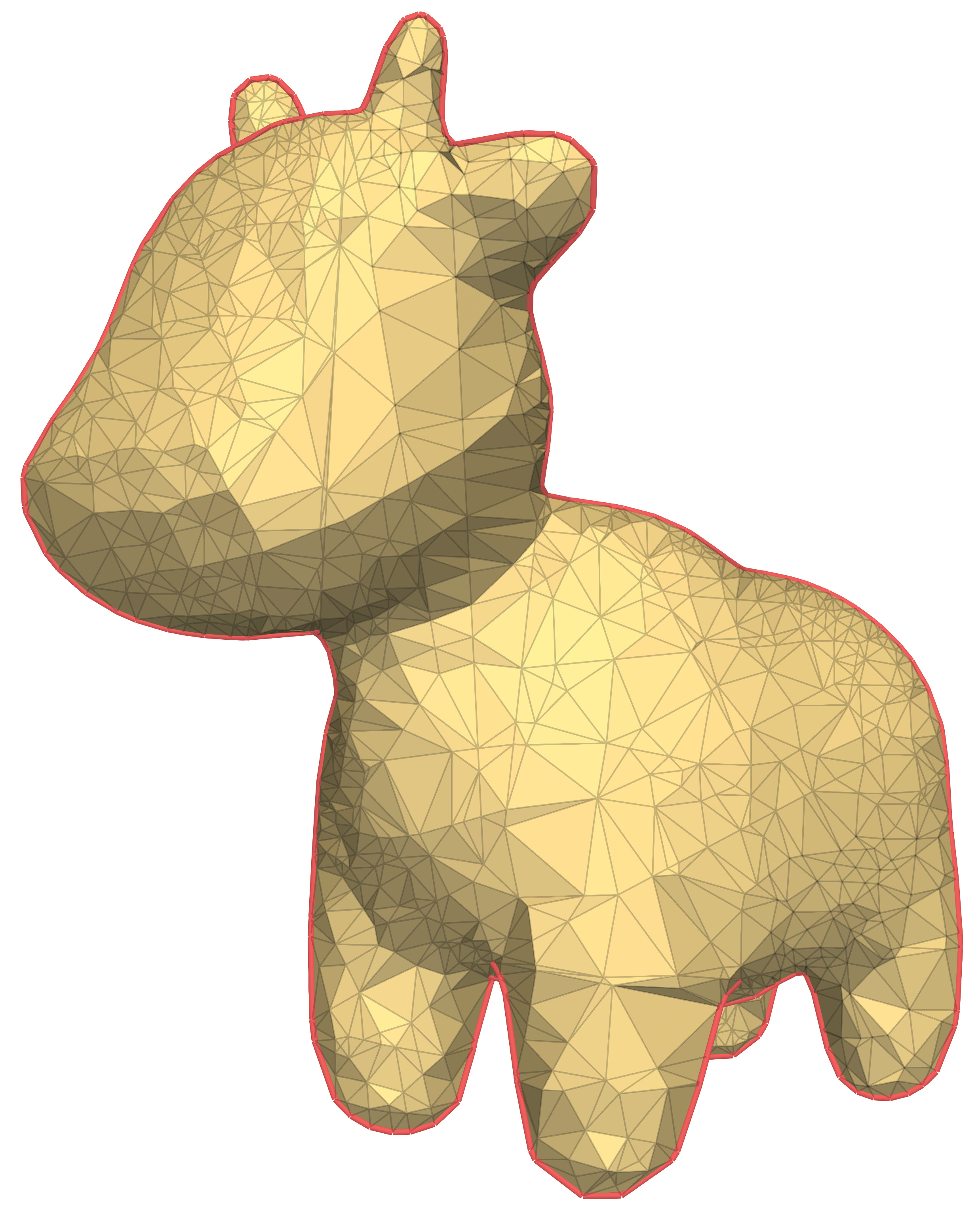} &
\includegraphics[width=\figwidth]{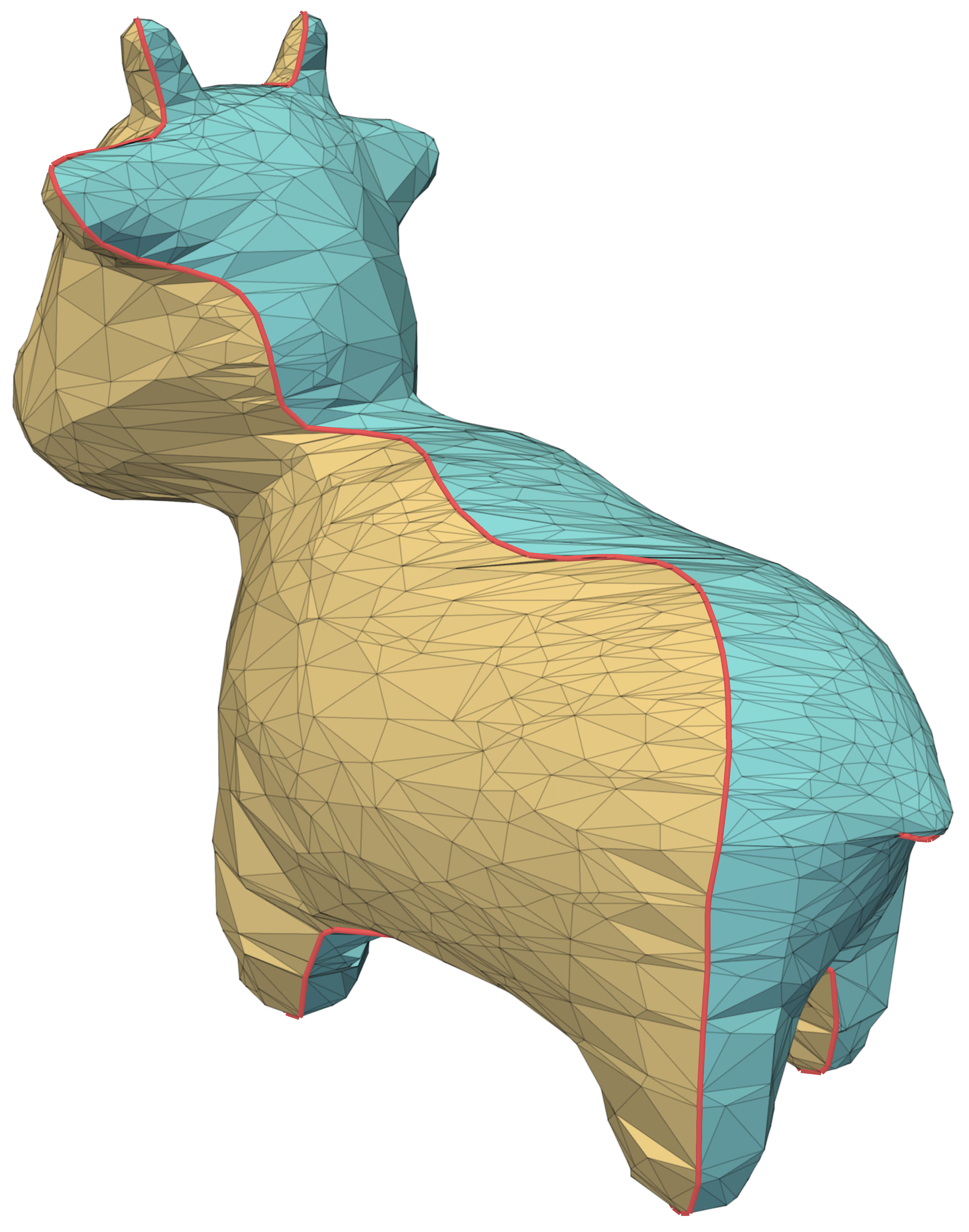} & 
\includegraphics[width=\figwidth]{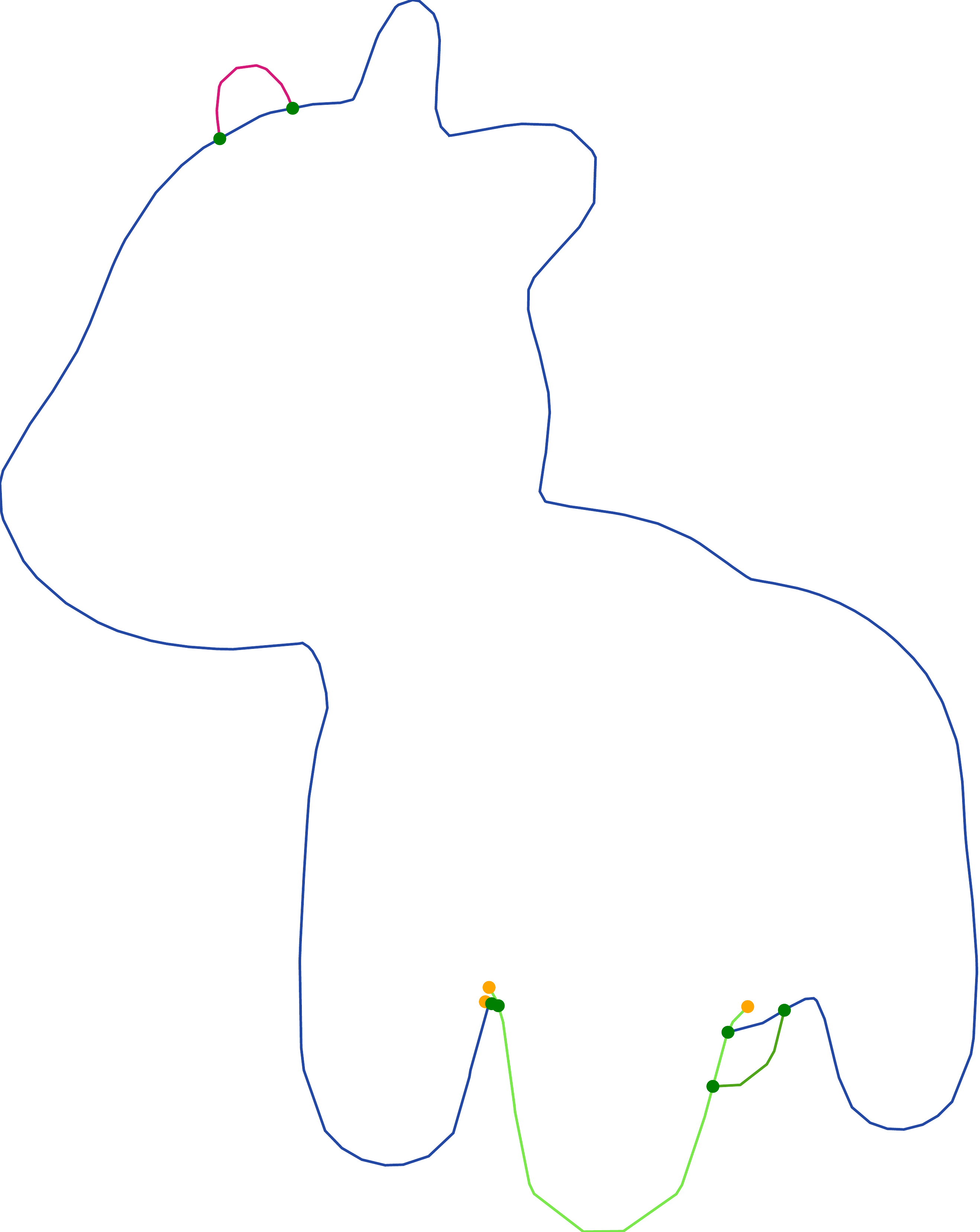} &
\includegraphics[width=\figwidth]{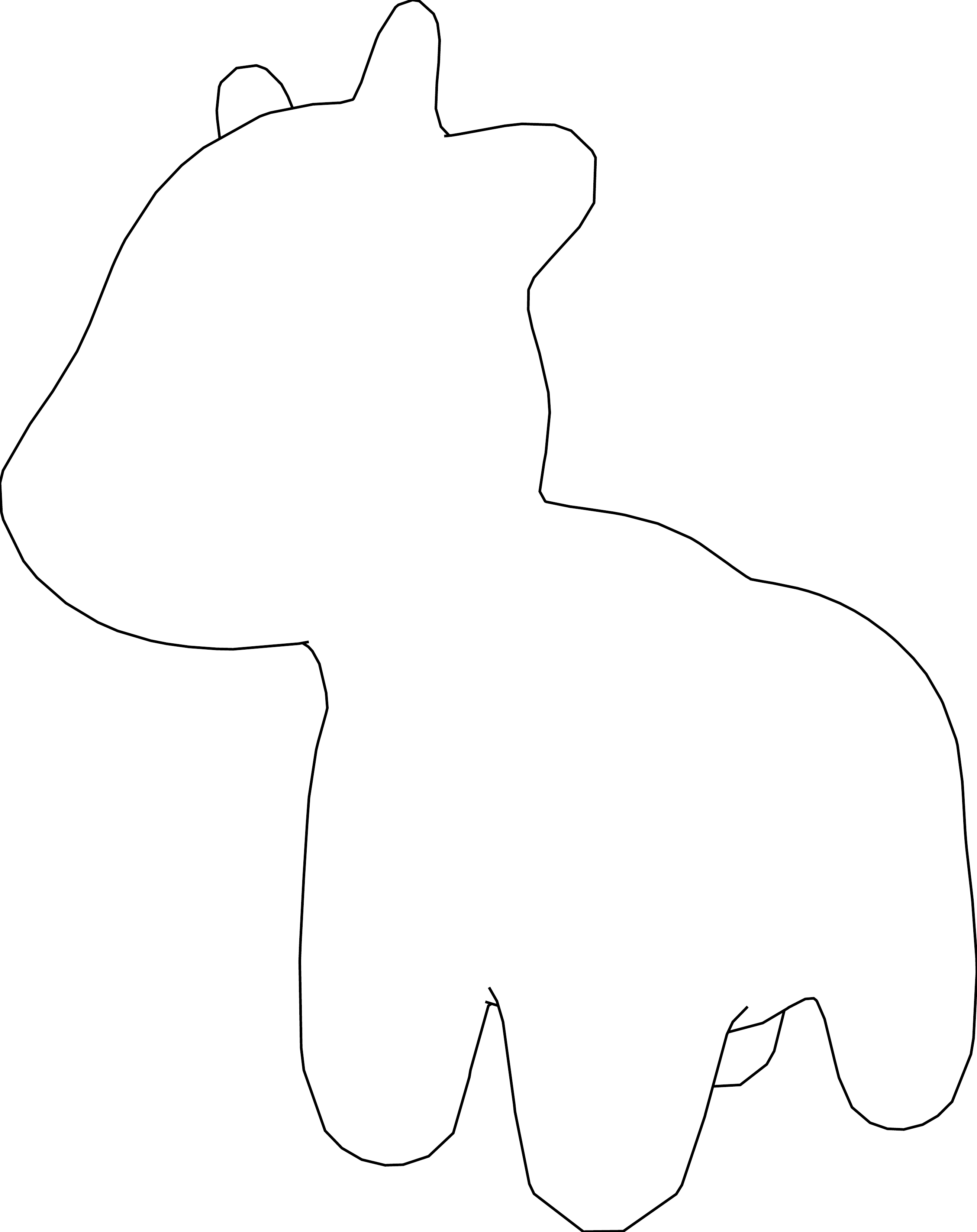} \\

\includegraphics[width=\figwidth]{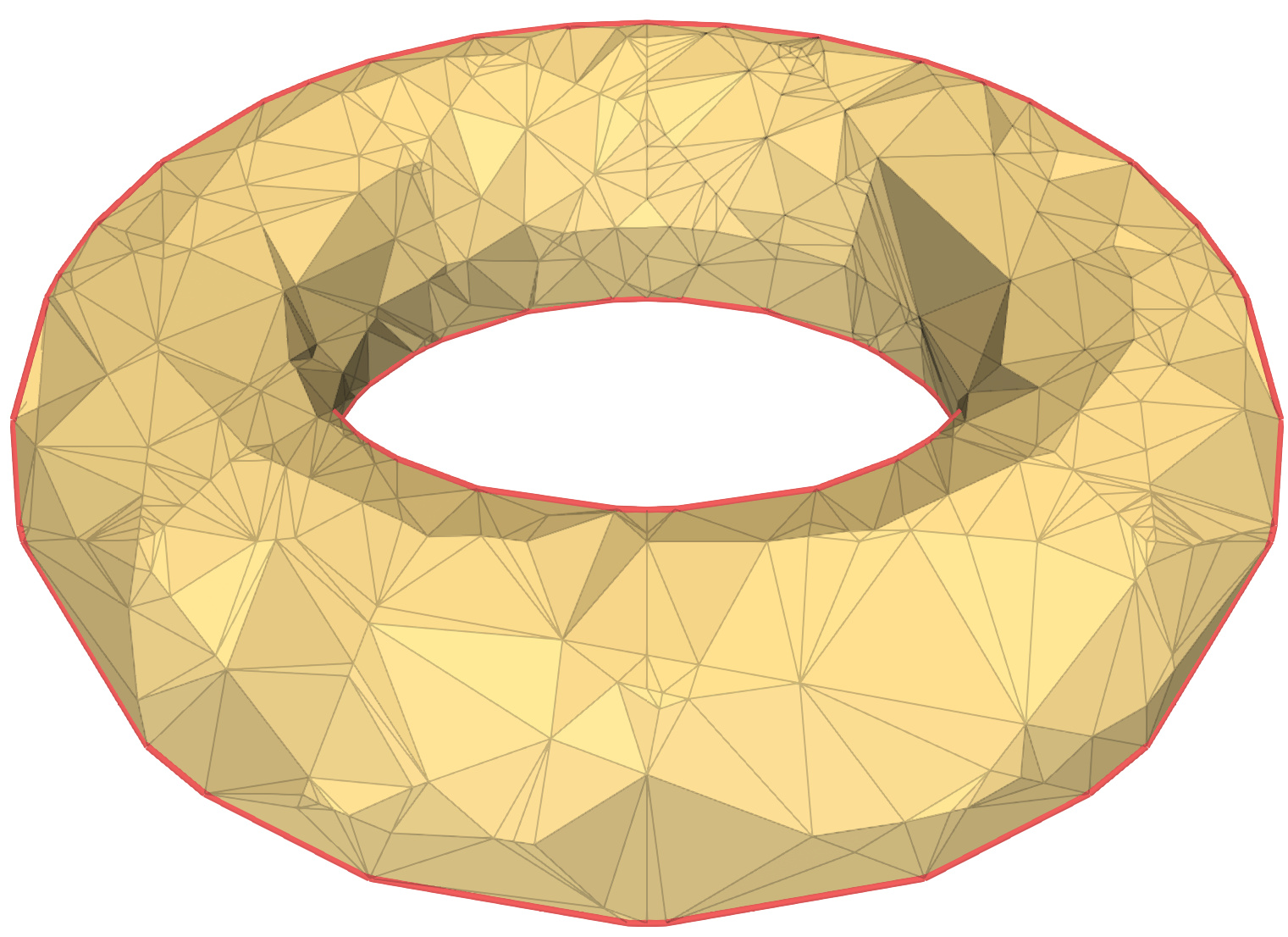} &
\includegraphics[width=\figwidth]{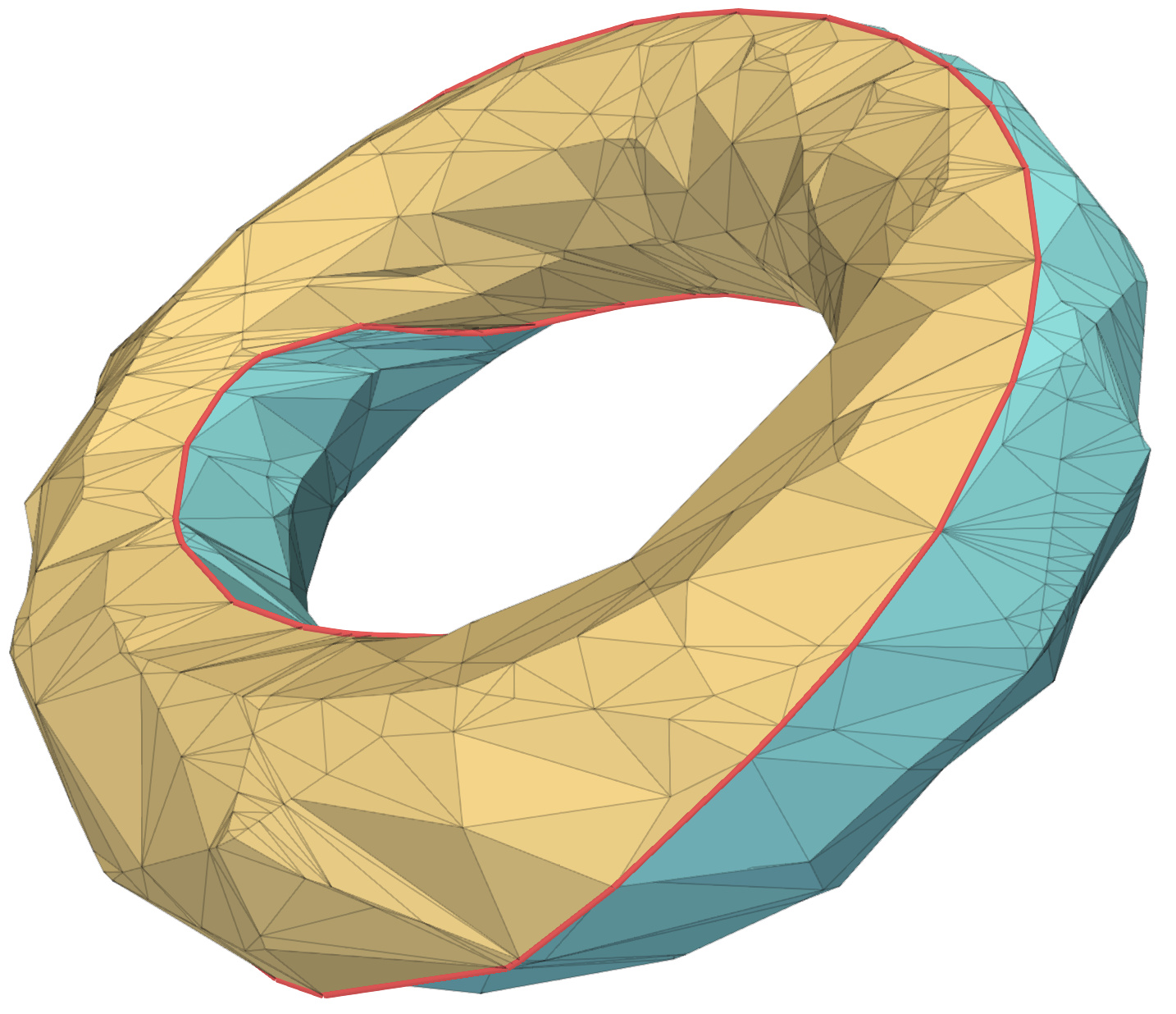} &
\includegraphics[width=\figwidth]{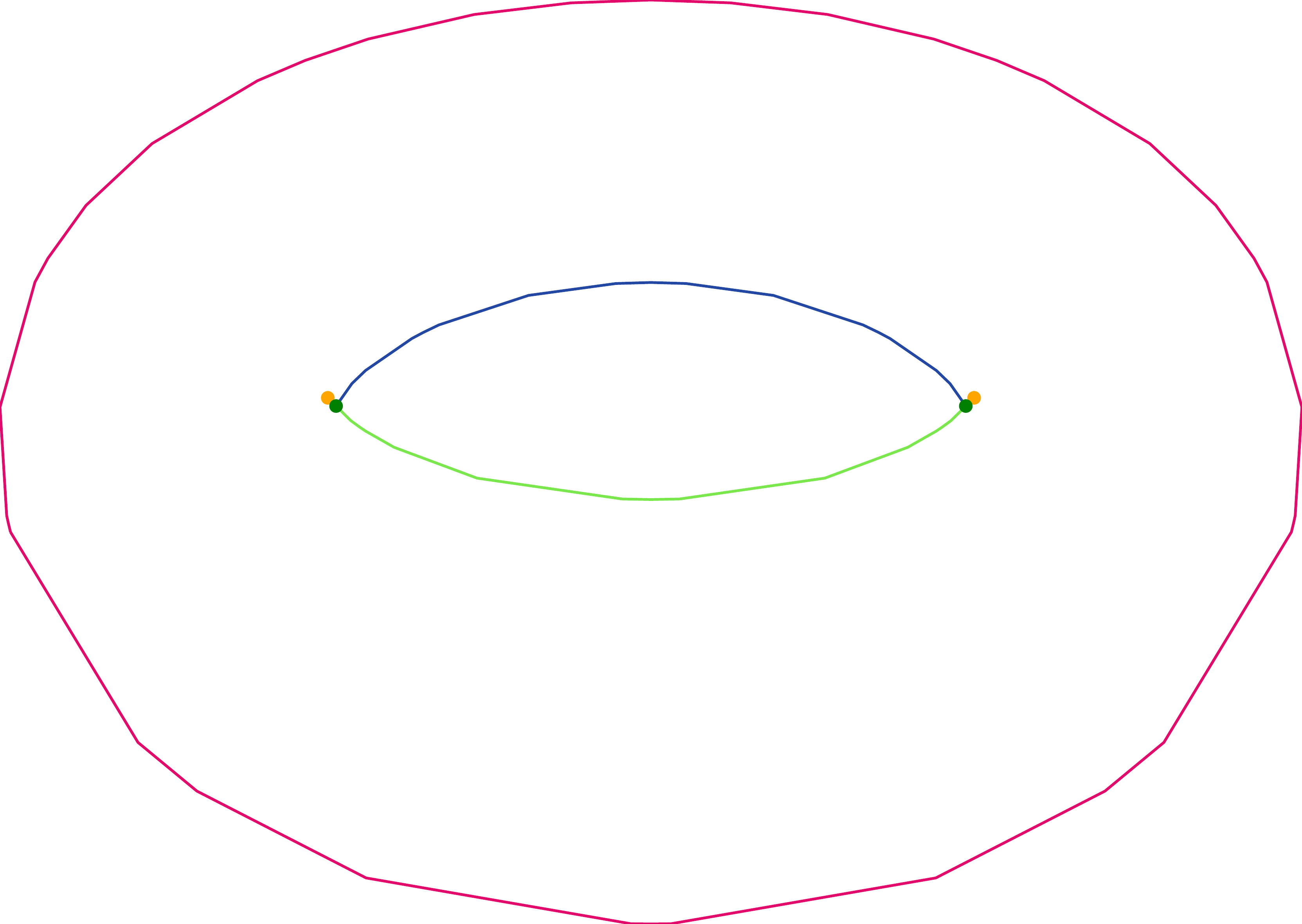} &
\includegraphics[width=\figwidth]{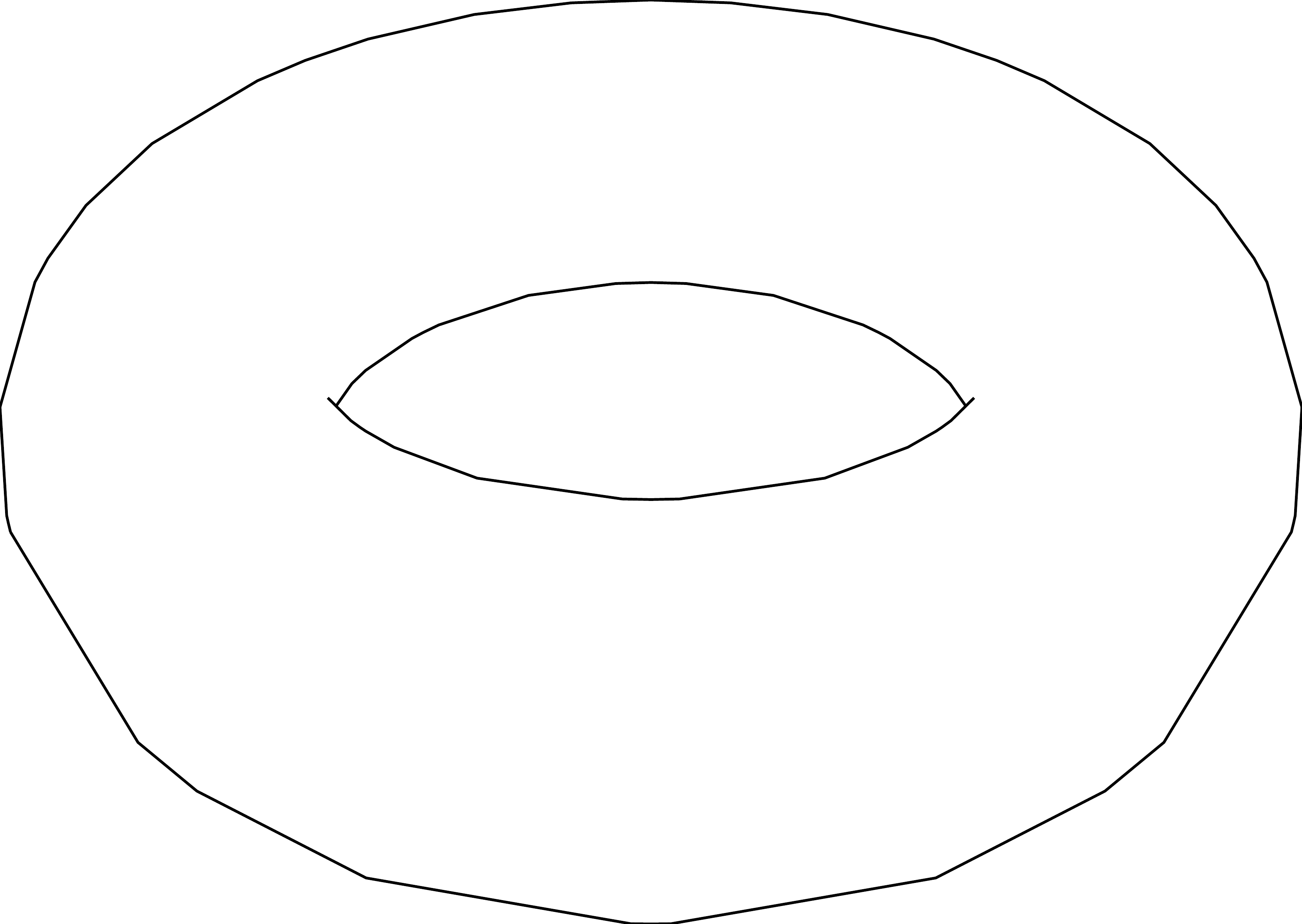} \\

\includegraphics[width=\figheight]{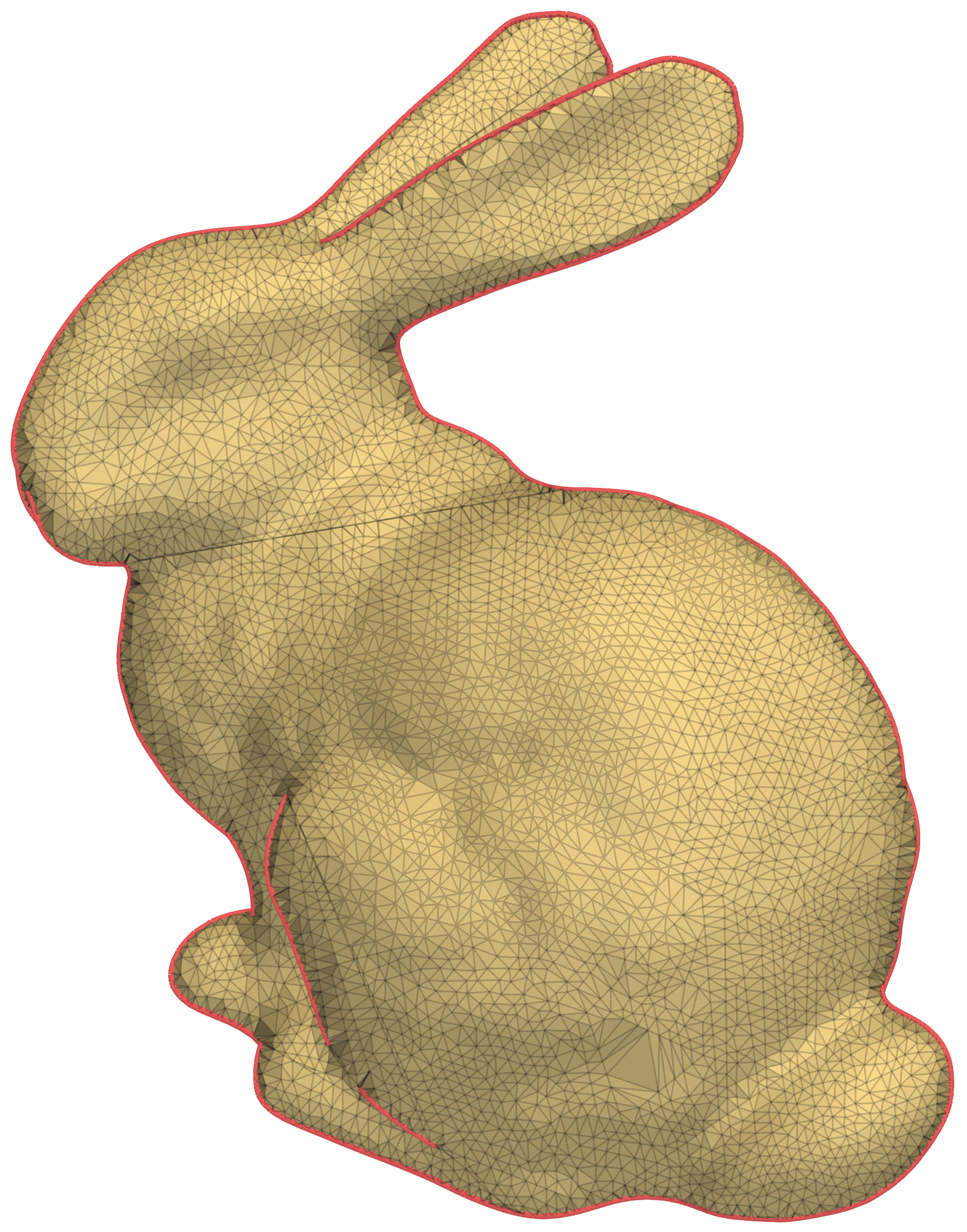} &
\includegraphics[width=\figheight]{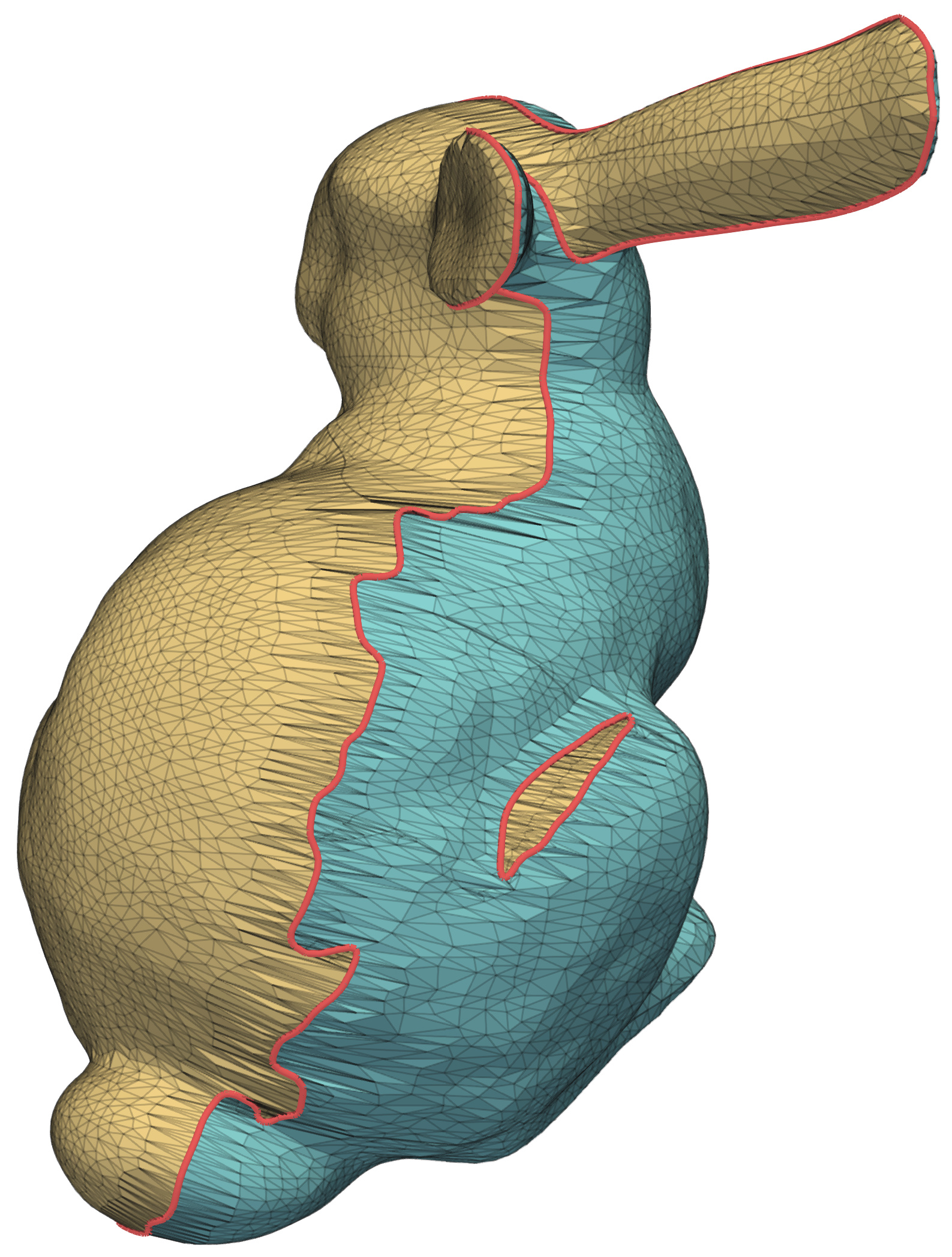} &
\includegraphics[width=\figheight]{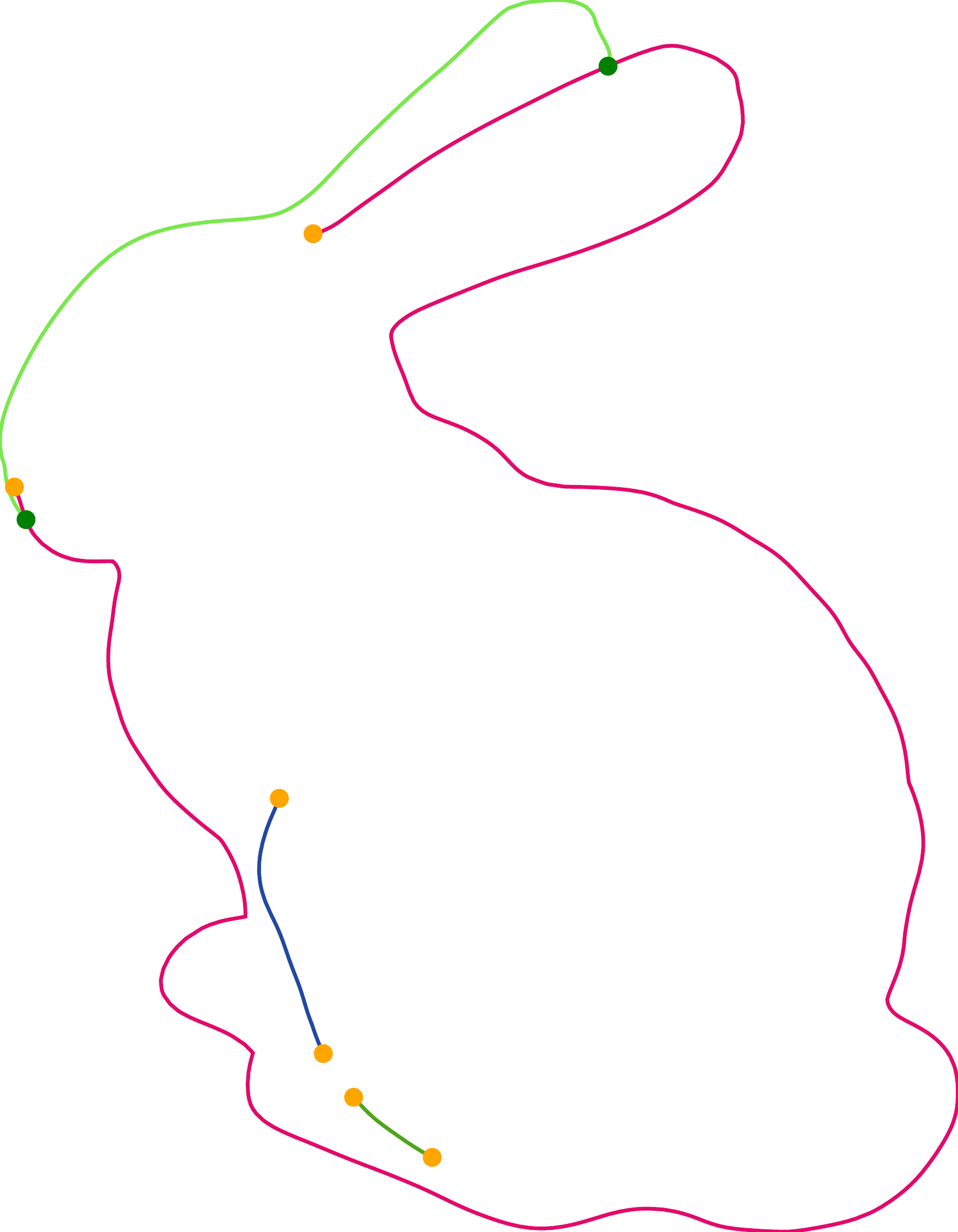} &
\includegraphics[width=\figheight]{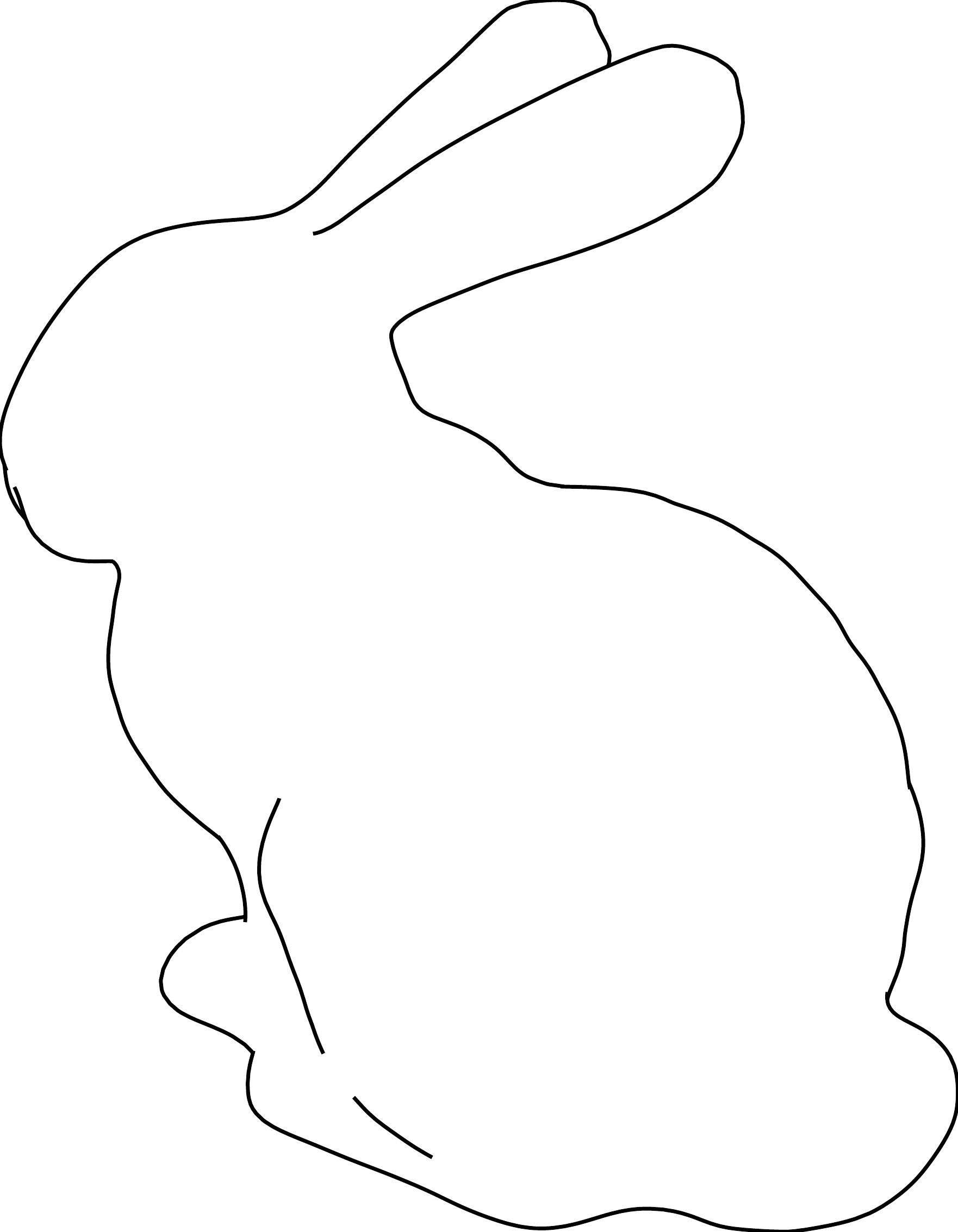} \\

\includegraphics[width=\figwidth]{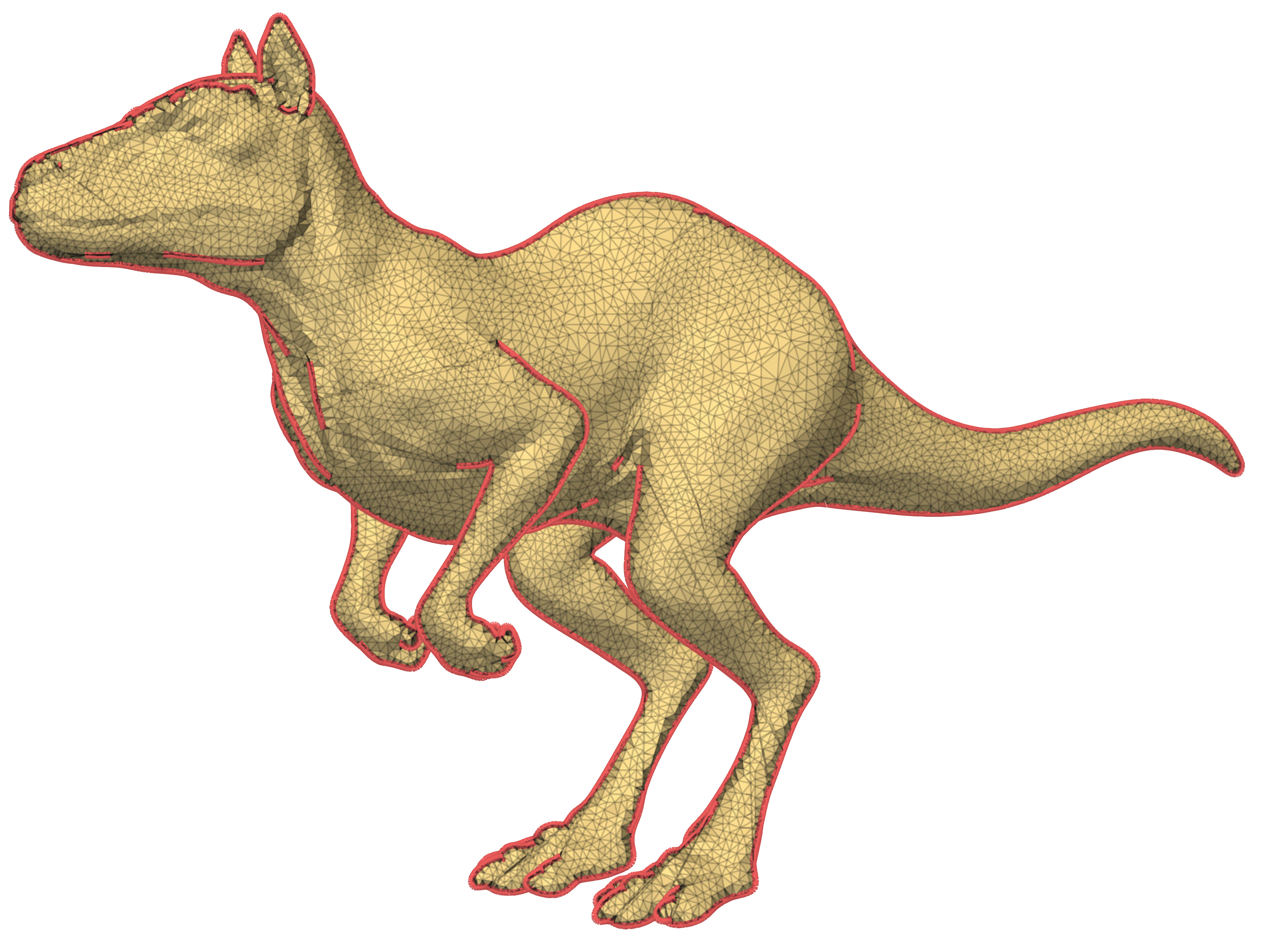} &
\includegraphics[width=\figheight]{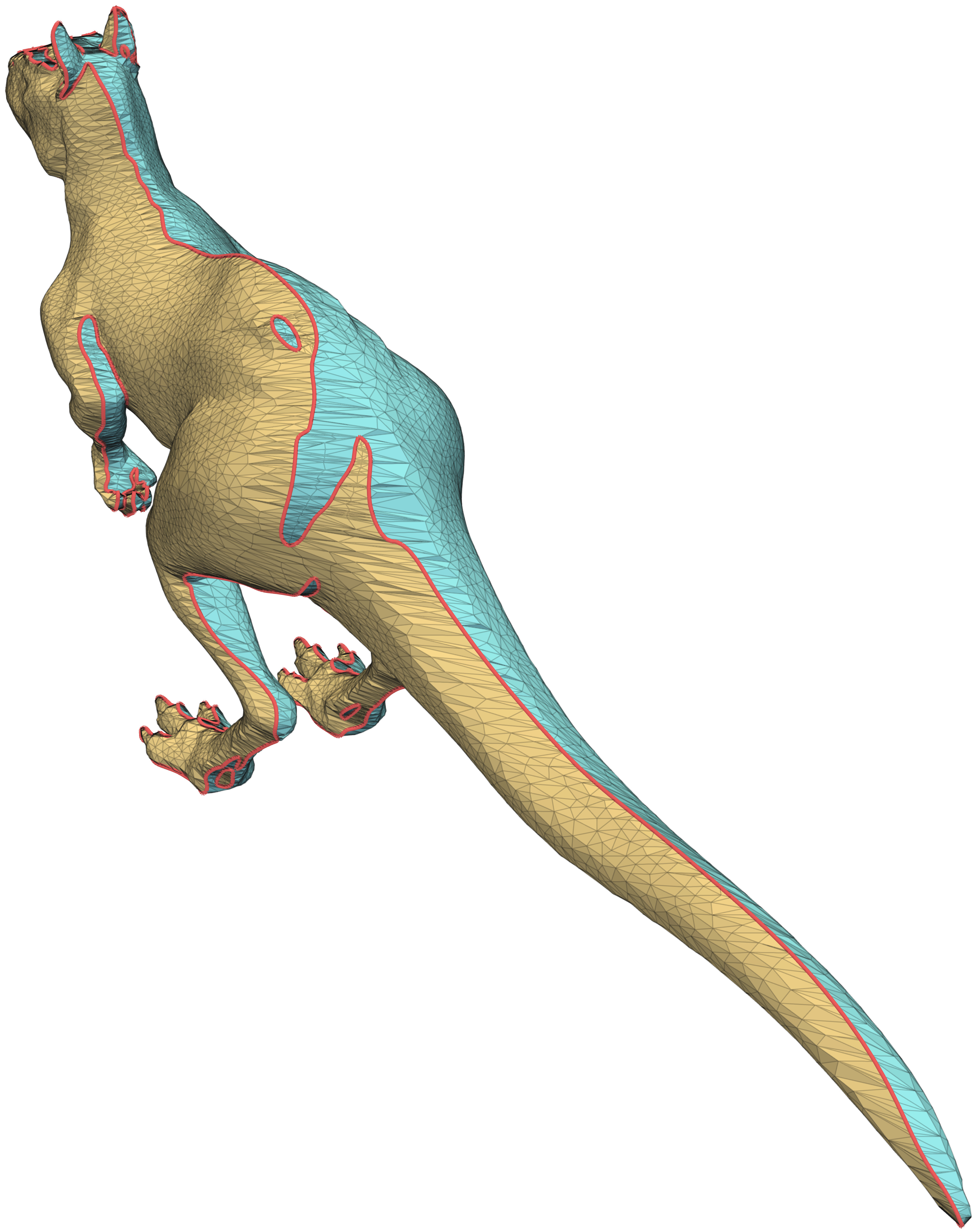} &
\includegraphics[width=\figwidth]{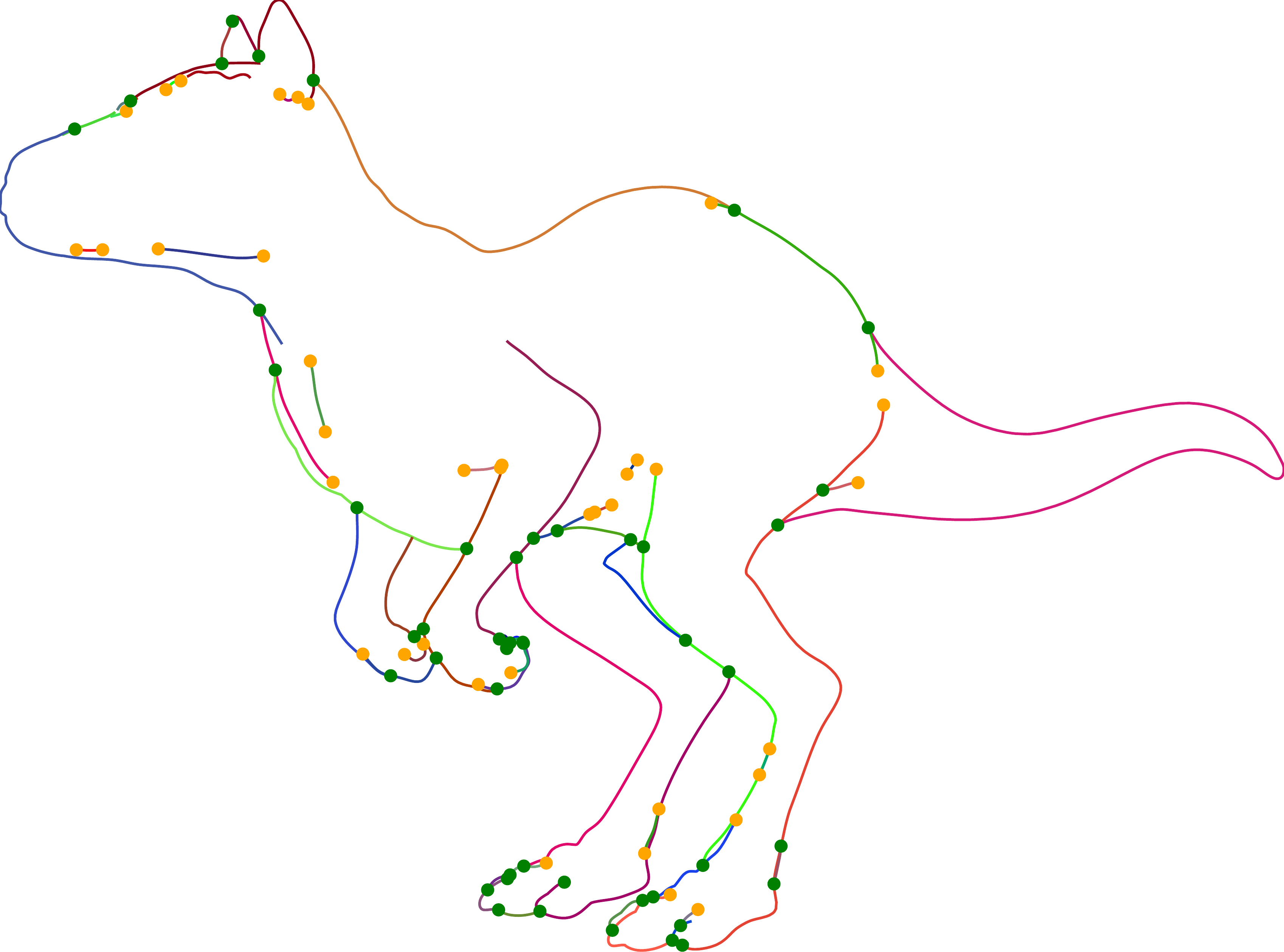} &
\includegraphics[width=\figwidth]{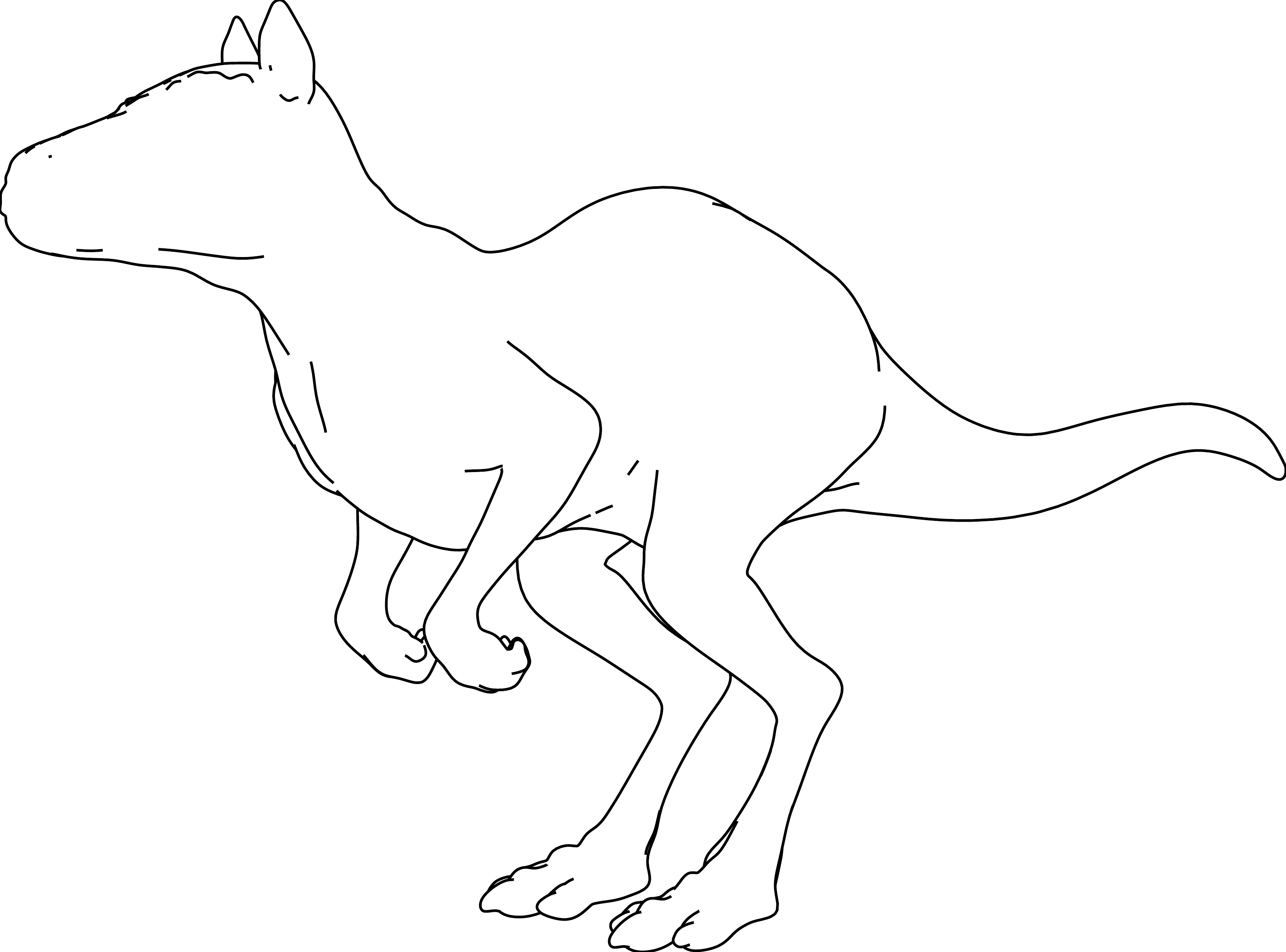} \\

\end{tabular}
\caption{
Examples of the ConTesse algorithm applied to various surfaces and camera views. Each example is a commonly-used mesh in geometry processing, treated as a Catmull-Clark base mesh. (a) Camera view of the output mesh, (b) side view of the output mesh, (c) view graph (curve network) of the visible occluding contours of the output mesh, with cusps marked in orange and 2D intersections in green, and
(d) occluding contours, after computing visibility.
(Public domain Spot model by Keenan Crane, Killeroo \copyright\ headus.com.au.)
\label{fig:results}
}
\end{figure*}
Figure \ref{fig:results} shows typical inputs and outputs of the method, illustrating the different complexities of meshes that can be handled correctly. 
In each example, the output mesh is completely consistent, and simple, clean view maps are produced, which can be simplified further for stylization (Figure \ref{fig:stylized_results}). 
\newcommand{\sfigwidth}{1.5in}
\newcommand{\sfigheight}{1.5in}

\begin{figure*}
\includegraphics[height=1.5in]{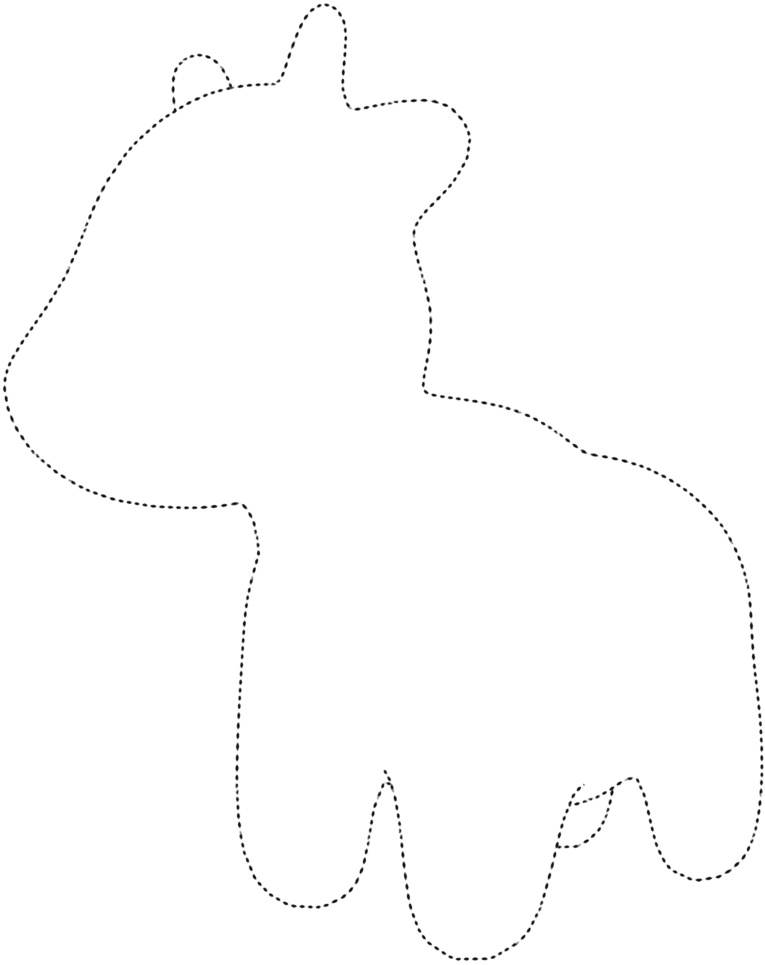}
\includegraphics[width=2in]{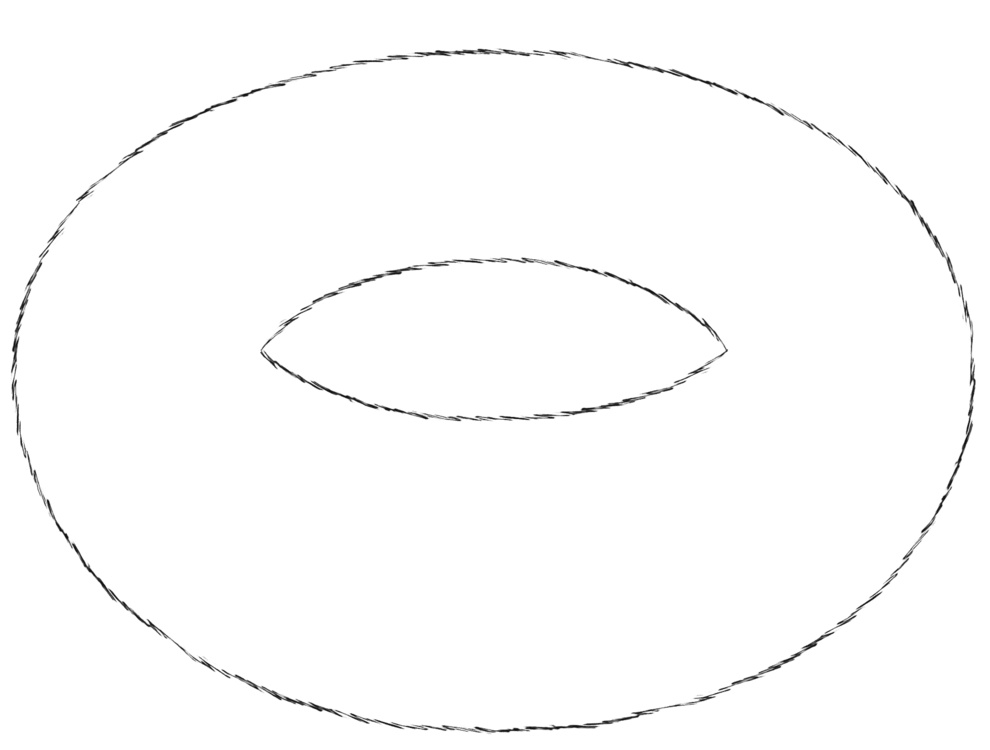} 
\includegraphics[height=1.5in]{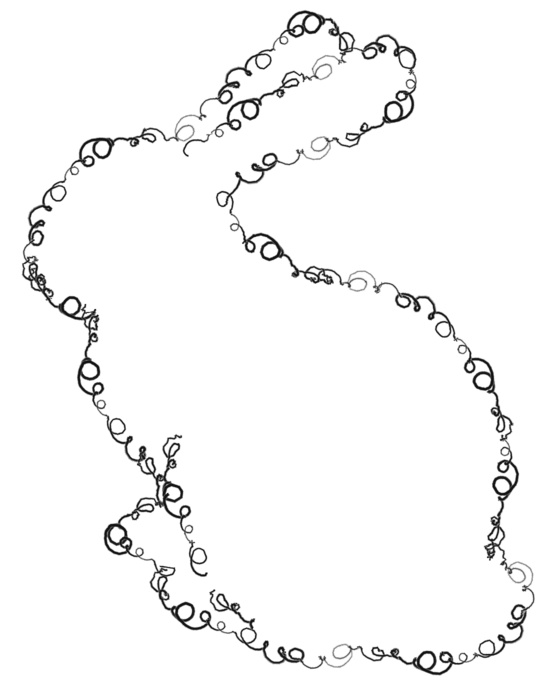} 
\includegraphics[height=1.5in]{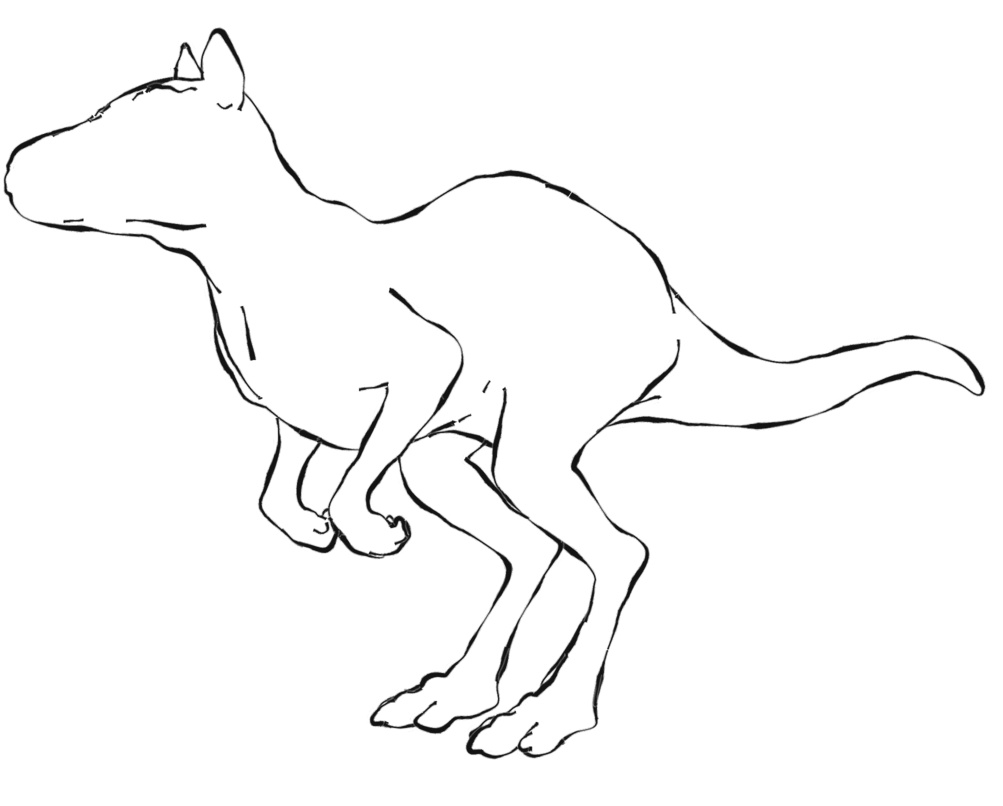}  \\
\includegraphics[height=\sfigheight]{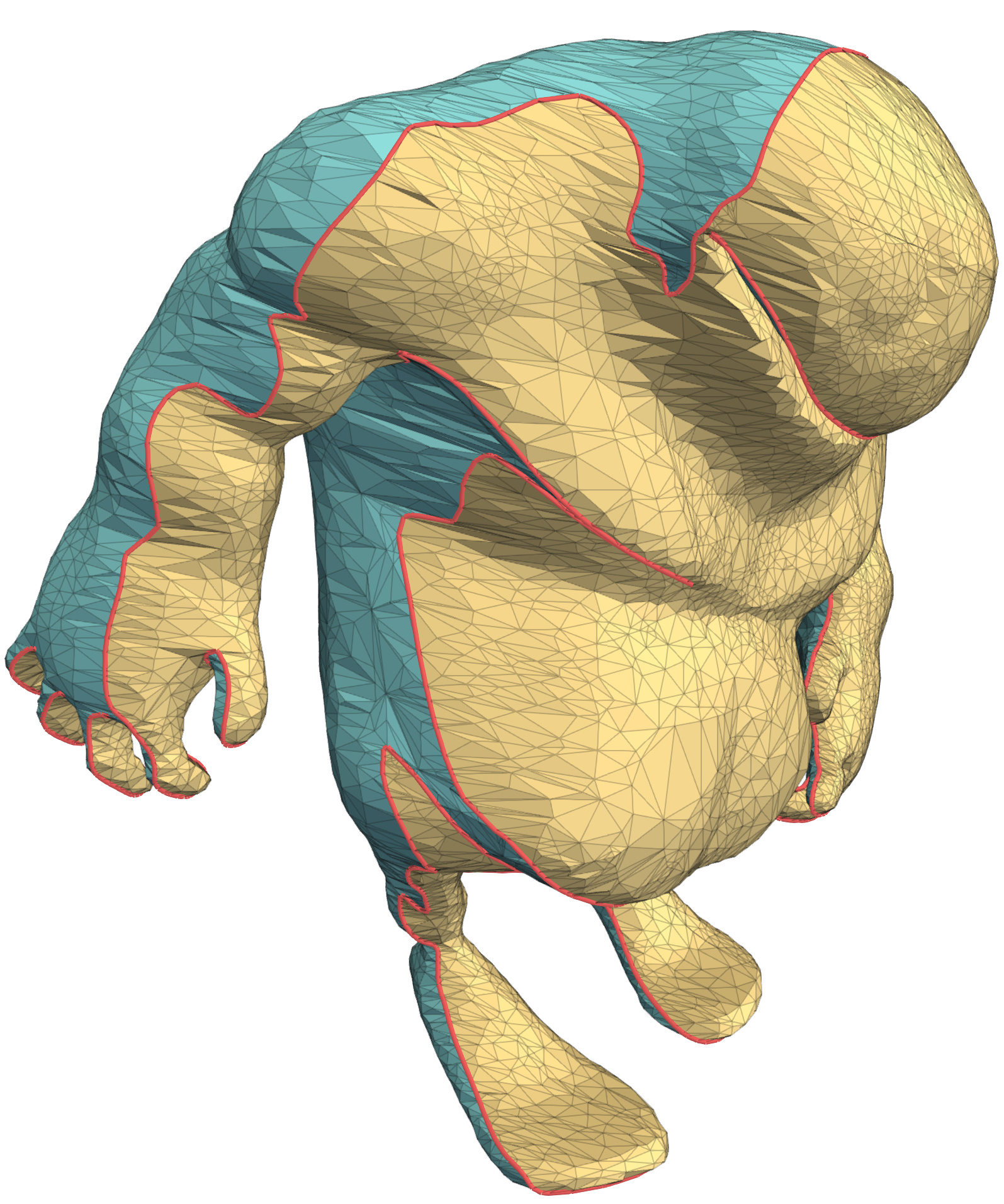}
\includegraphics[height=\sfigheight]{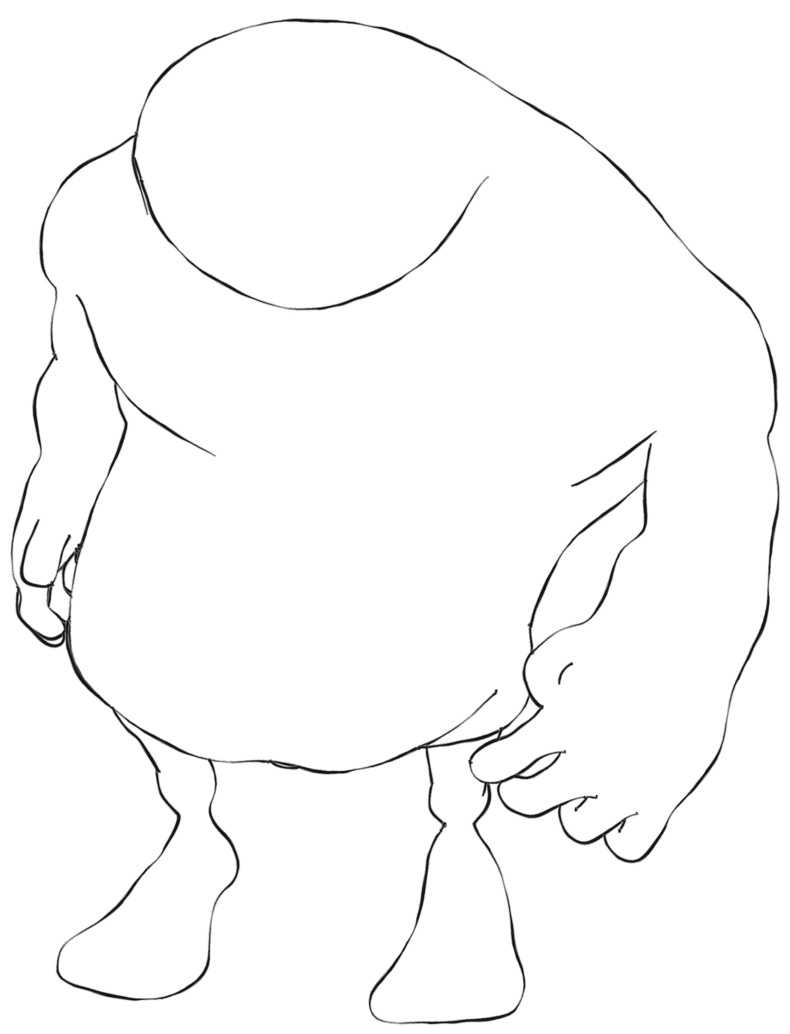}
\includegraphics[height=\sfigheight]{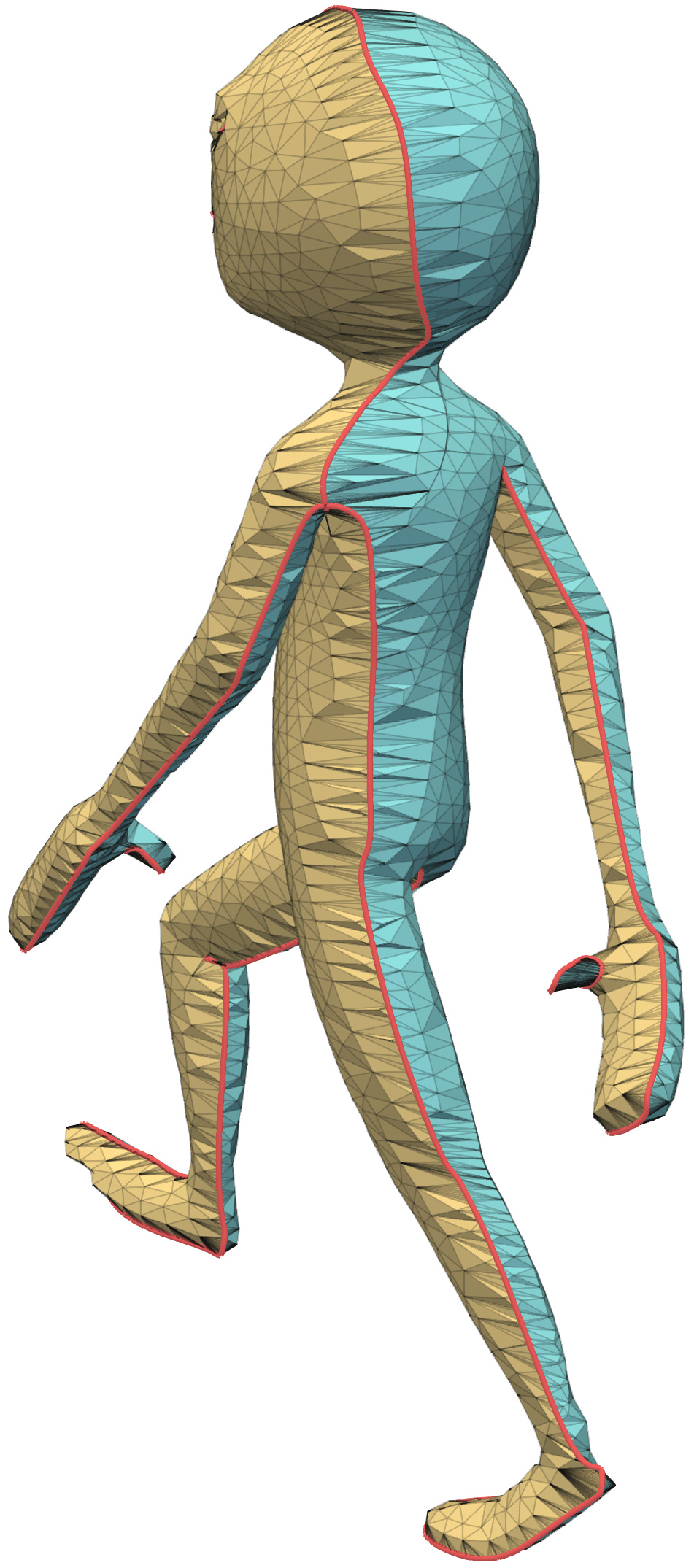}
\includegraphics[height=\sfigheight]{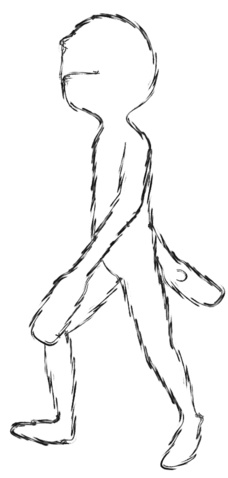} \includegraphics[height=\sfigheight]{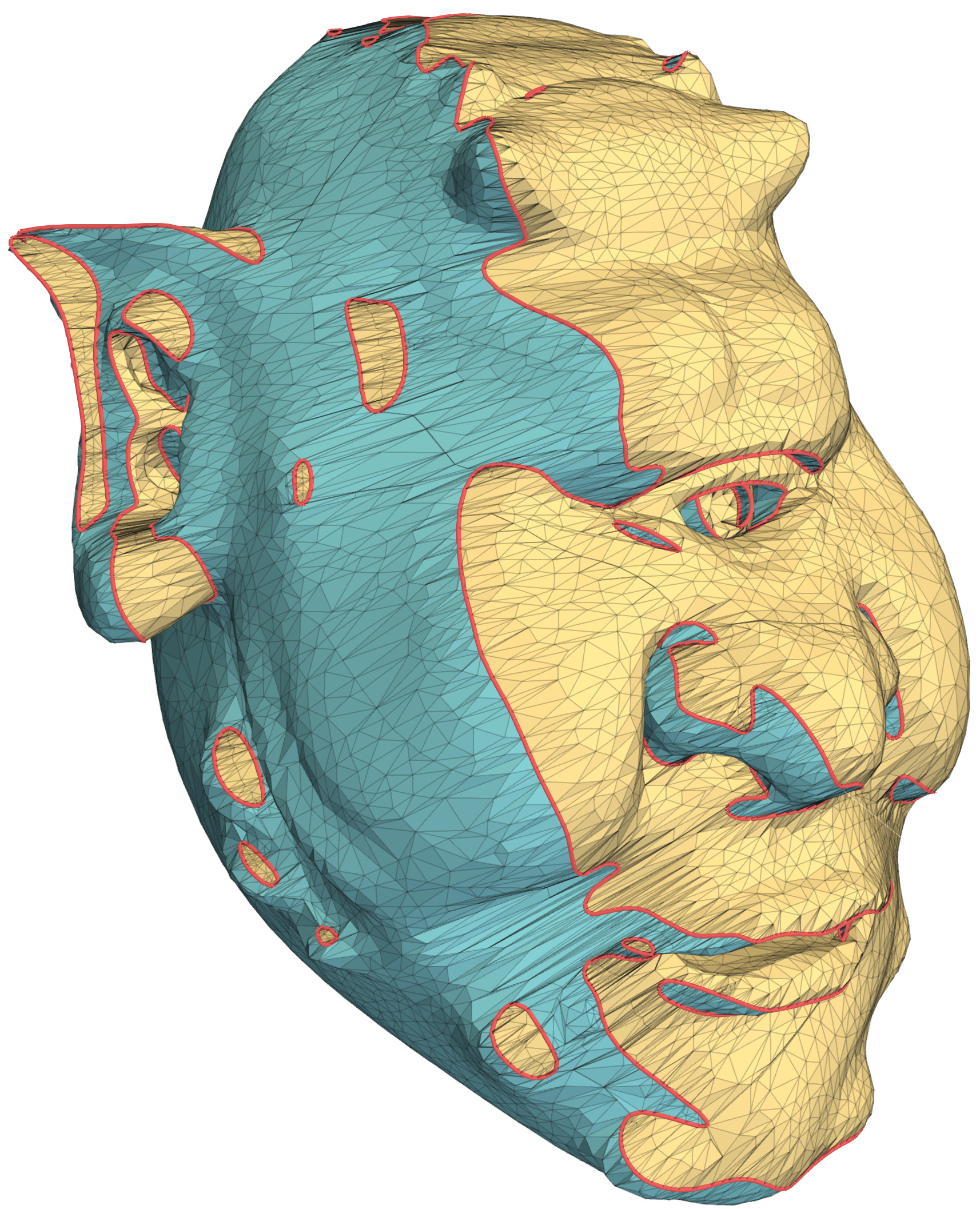}
\includegraphics[height=\sfigheight]{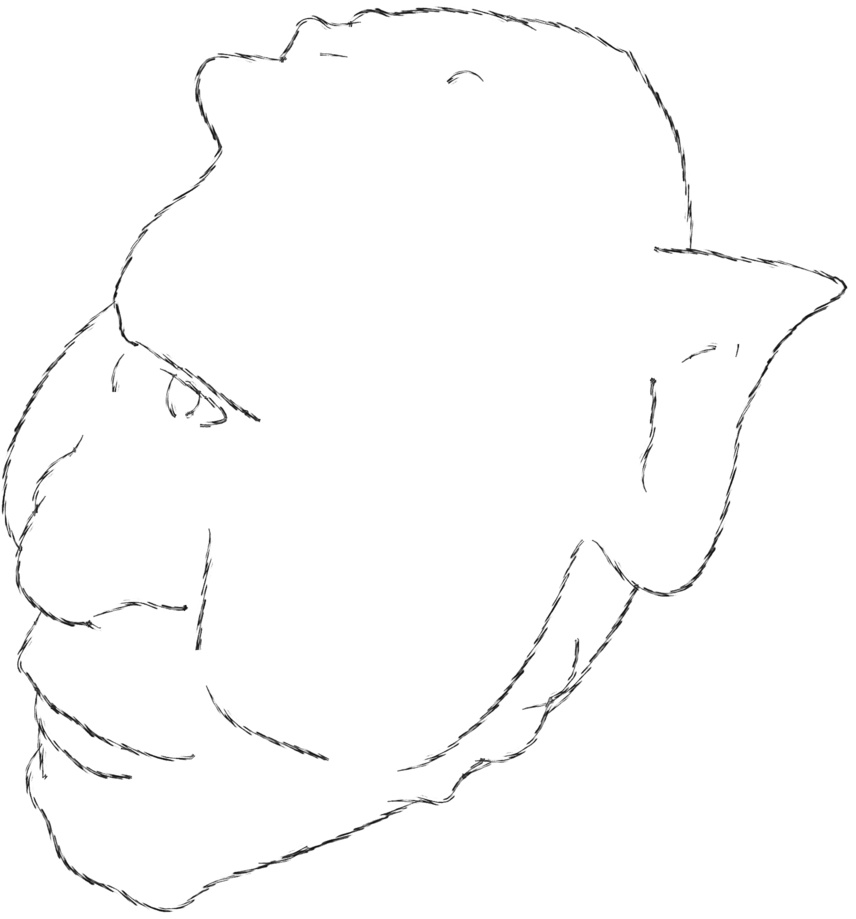} \\
\includegraphics[height=\sfigheight]{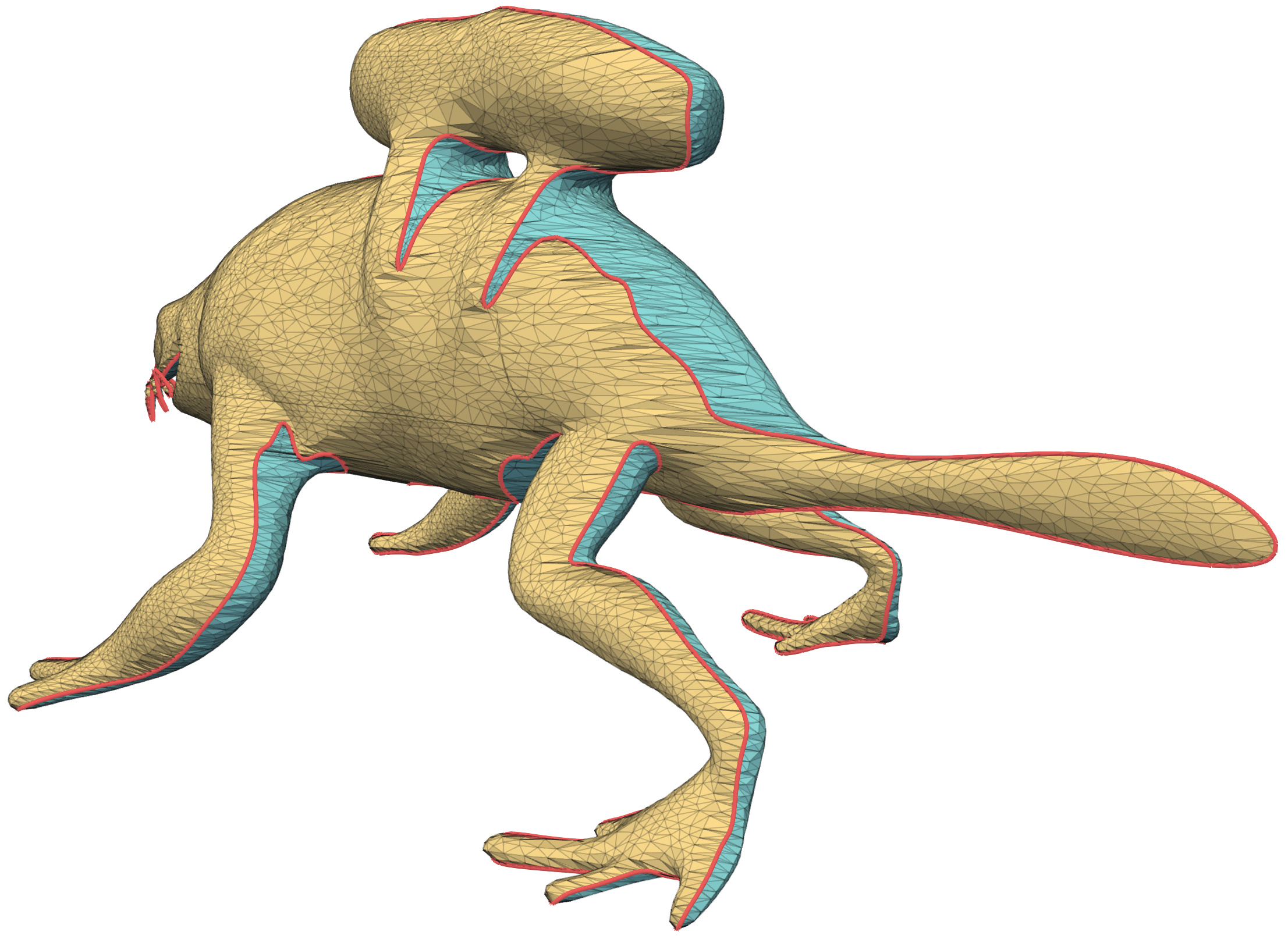}
\includegraphics[height=\sfigheight]{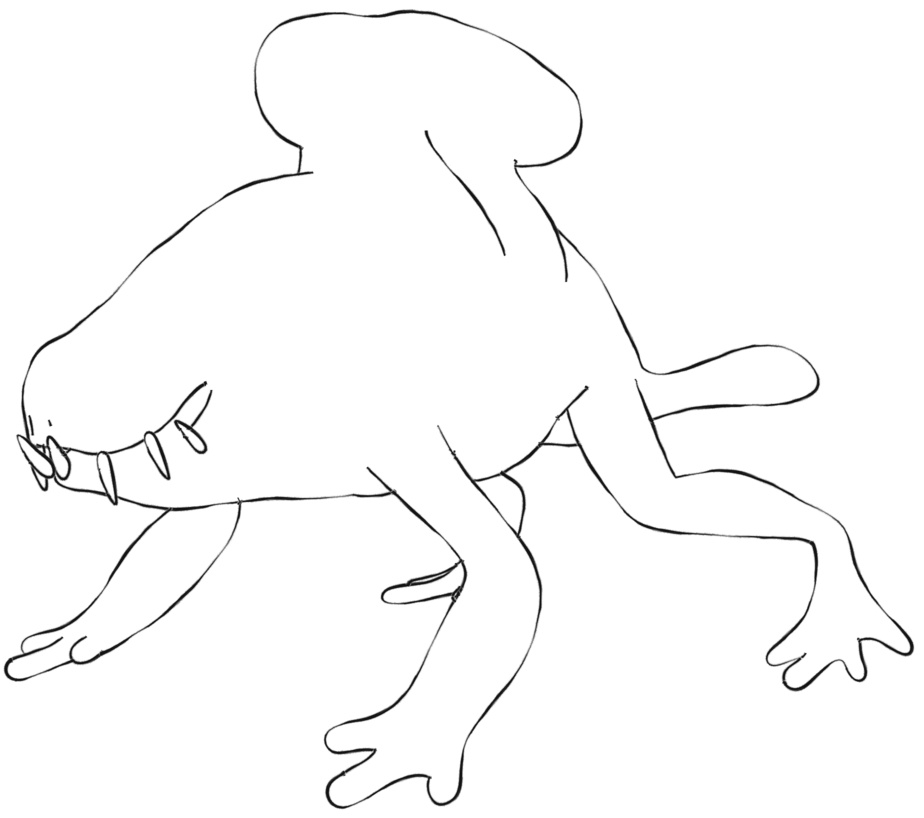} 
\includegraphics[height=\sfigheight]{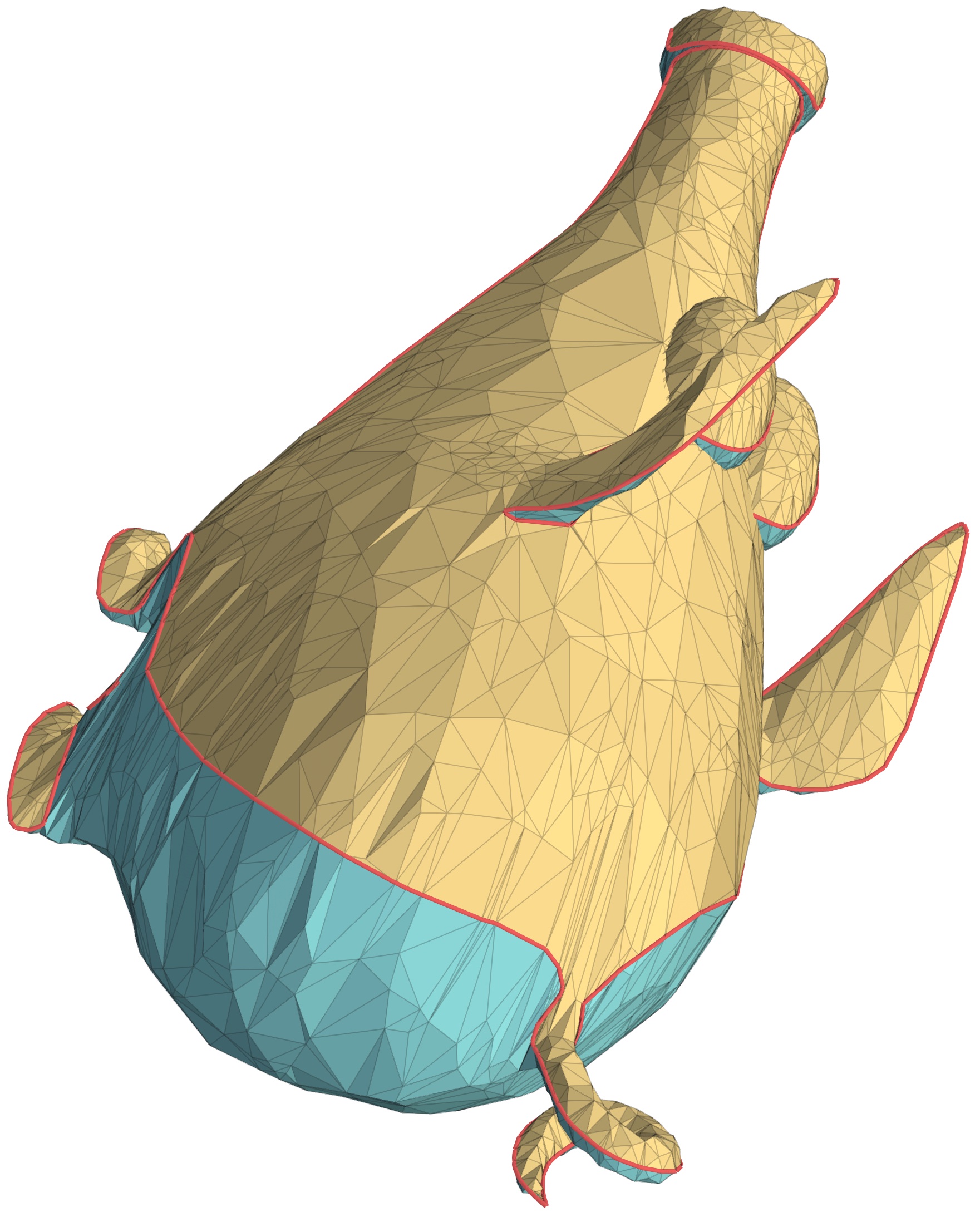}
\includegraphics[height=\sfigheight]{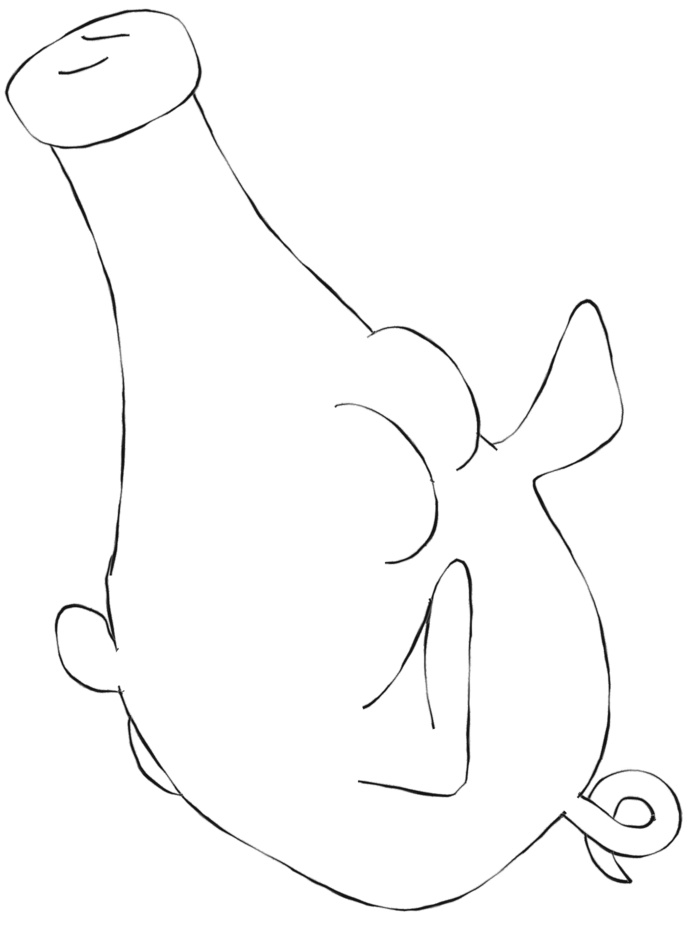}
\caption{Stylized versions of the contours from Figure \ref{fig:results}, and results from three more models in our dataset. Note that we do not render mesh self-intersections. 
(Public domain Pig, ogre, and Spot models by Keenan Crane,
Killeroo \copyright\ headus.com.au,
Bigguy and Monster Frog \copyright\ Bay Raitt, Walking Man \copyright\ Ryan Dale.)
\label{fig:stylized_results}} 
\end{figure*}

Finer-scale meshes may be obtained by increasing the initial subdivision level (Figure \ref{fig:subdivided_results}) or by increasing the sampling density in the triangulation and lifting step.
\begin{figure}
\centering
\includegraphics[width=1.5in]{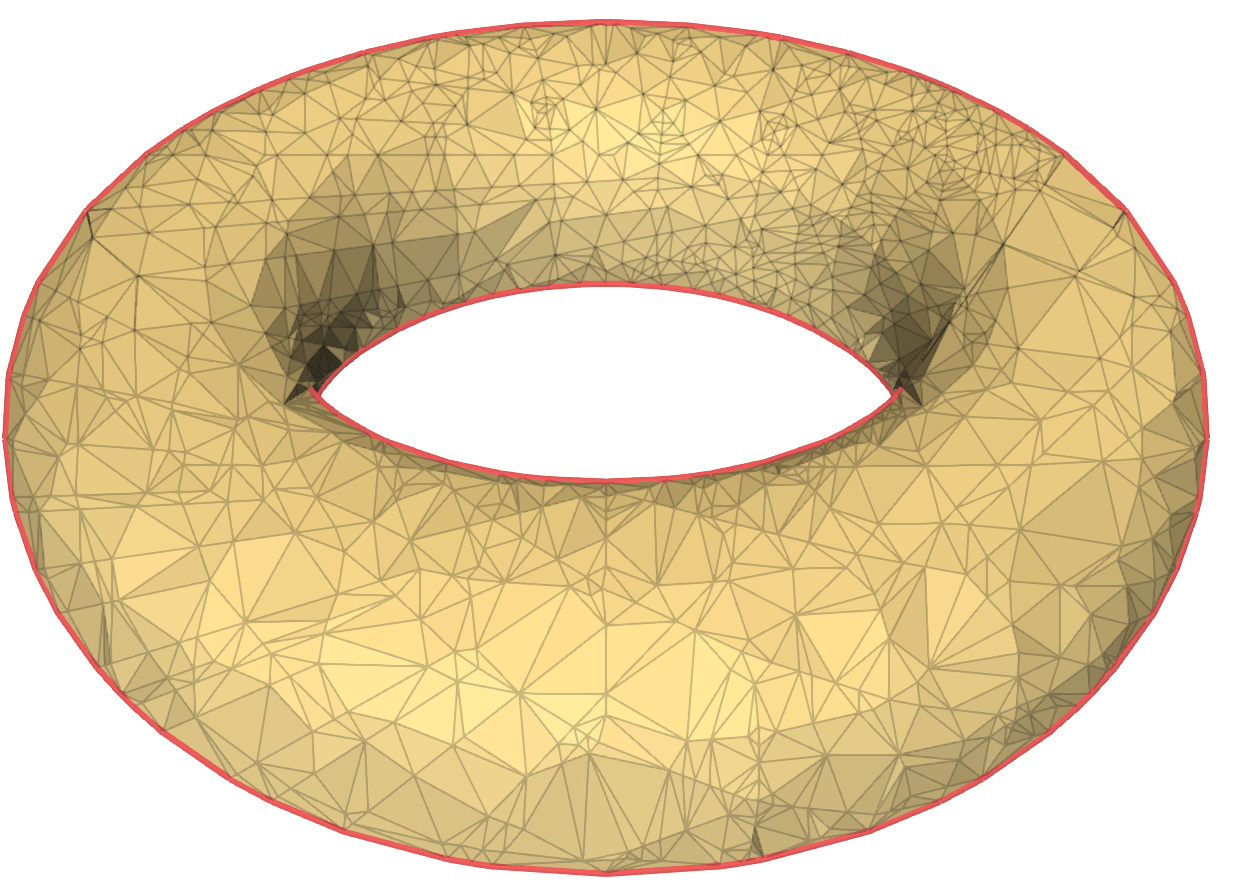}
\includegraphics[width=1.5in]{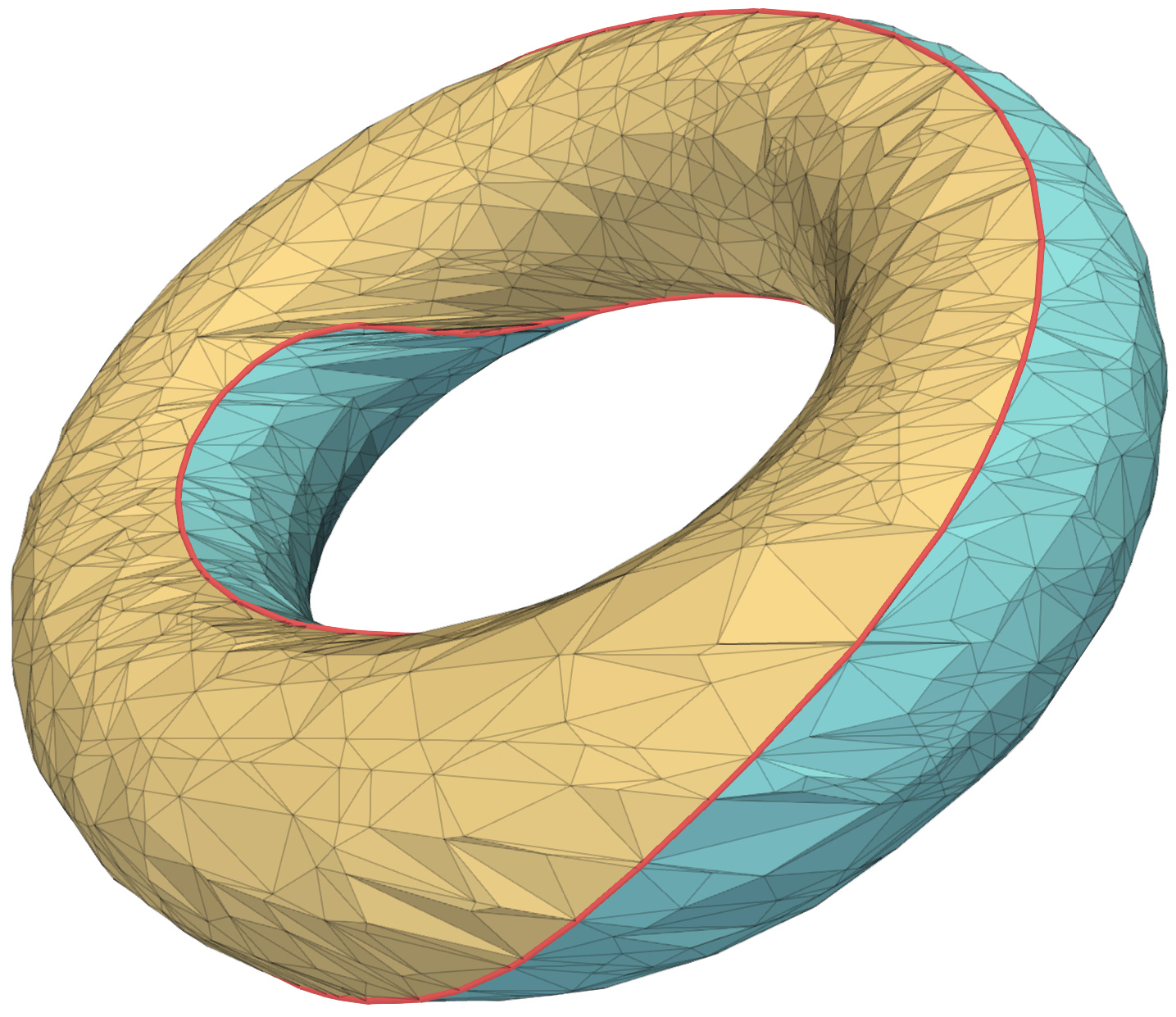}
\caption{Finer-resolution meshes may be obtained by increasing the number of initial subdivision levels. Here the torus was subdivided twice initially, as compared to once in Figure \ref{fig:results}.
\label{fig:subdivided_results}
}
\end{figure}

\paragraph{Dataset tests.}
In order to test the robustness of our method, we gathered 35 meshes from various sources. Most of the meshes are quad meshes, some including isolated triangles, and a few are purely triangle meshes.
For each model, we set up 26 camera views, equally spaced around the model in a turntable configuration. Additionally, we obtained the three non-proprietary animation sequences used by B\'enard et al.~\shortcite{Benard:2014} (Angela, bunny, walking man). 
Together, the turntable sequences and animations comprise 1580 distinct model/view combinations. Our implementation obtains correct WSOH results for each one, with at most four levels of subdivision.
Computation times and robustness are reported in Table \ref{fig:plots}. In nearly all cases, our algorithm requires less than 2 minutes to complete, often much less for smaller meshes. Computation times and output density depend significantly on the number of subdivision levels selected by the algorithm (Section \ref{sec:insertion}). In some cases, the number of output triangles is substantially lower than on the input mesh; additional vertices could easily be inserted if desired. 
\begin{figure*}
    \centering
    \includegraphics[width=3.3in]{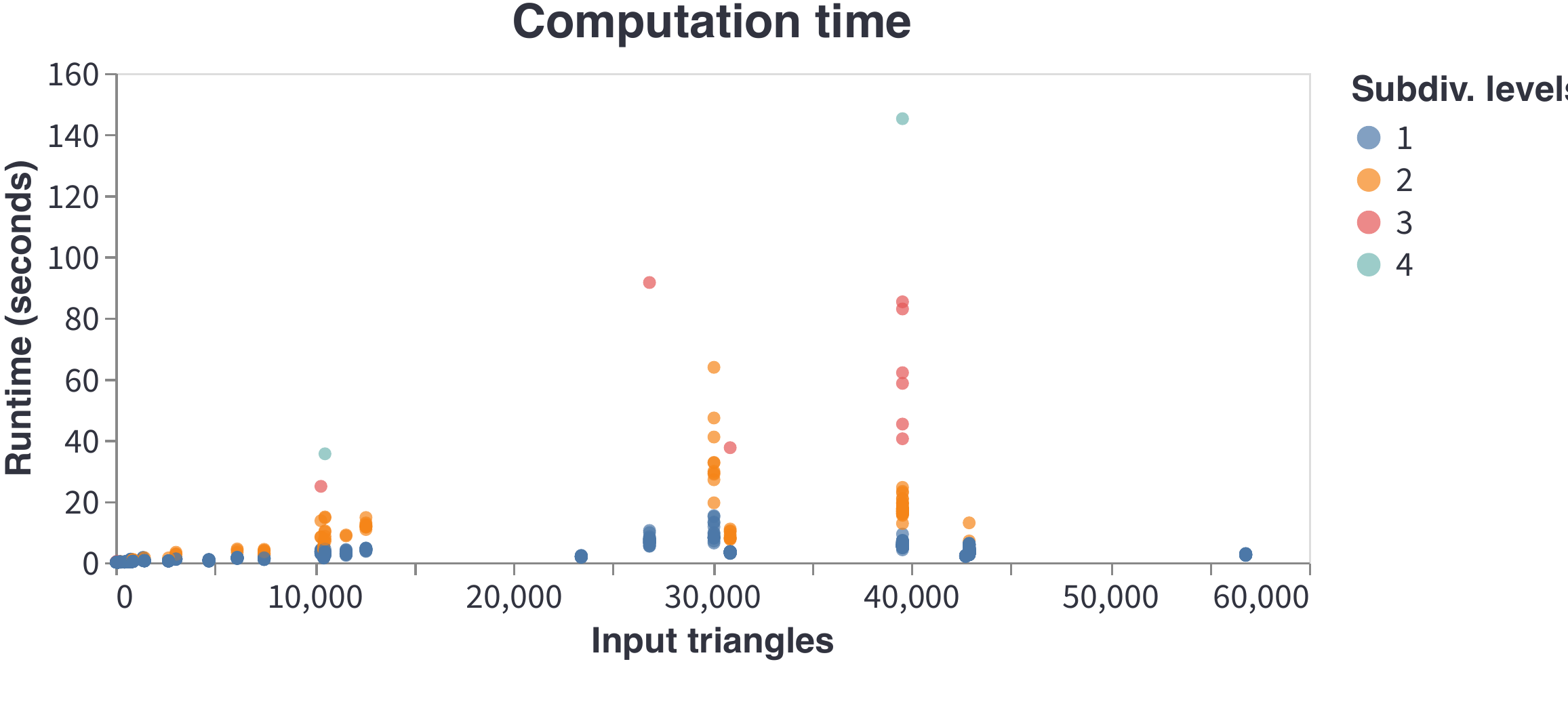}
    \includegraphics[width=3.3in]{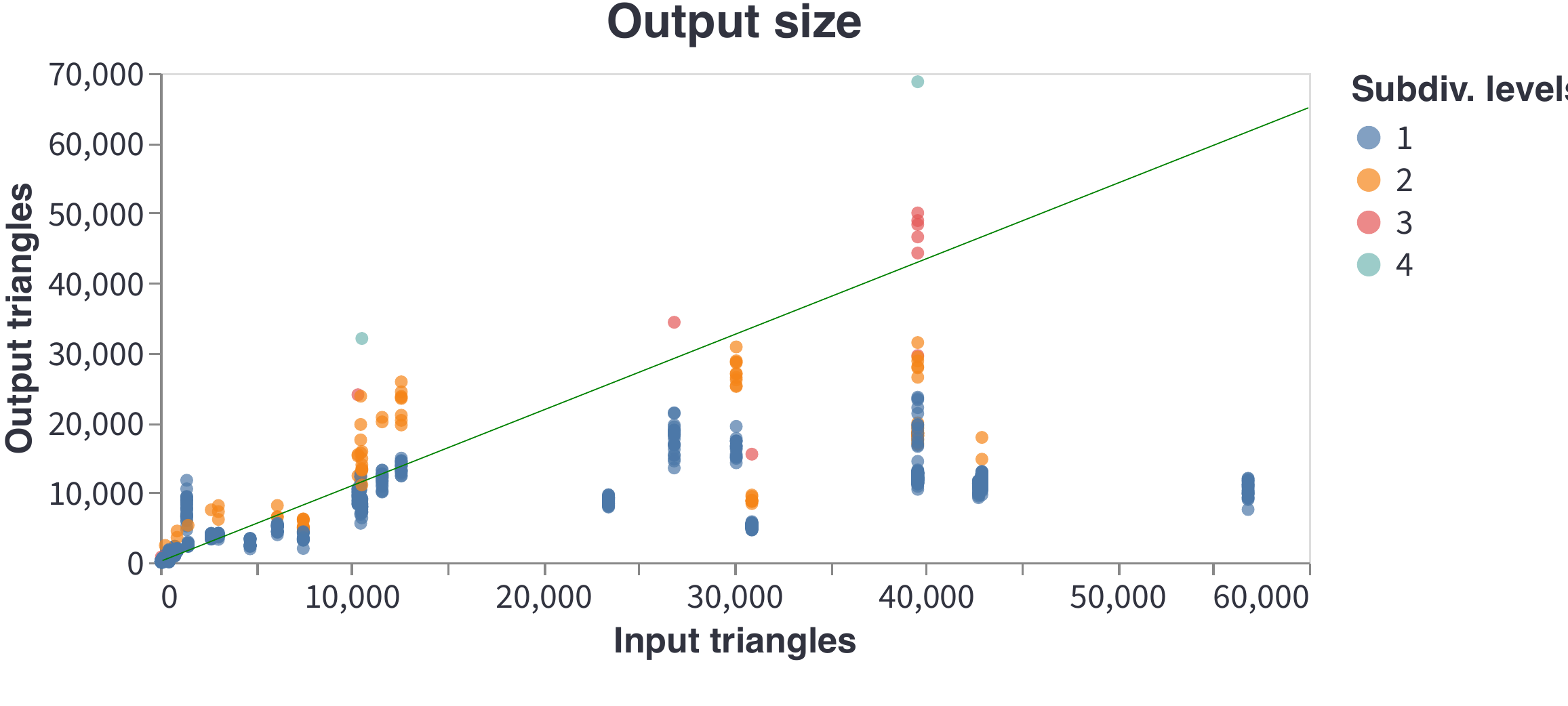}    
    \caption{Computation time and output size for each of the inputs. These results are for generating front-faces only; output triangles for back-facing regions are not generated in these plots, since they would almost never be used in practice.
    Each dot represents one or more of the 1580 test cases (mesh, camera), with the number of subdivision rounds required for that case color-coded. The vertical strip of dots at 39,576 input triangles are different tests on the Angela mesh.
    The green diagonal line shows where unity values would occur on the plot (i.e., one input for one output).
    These computations were performed on a MacBook Pro M1, 3.2Ghz, 16Gb memory.
    }
        \label{fig:plots}
\end{figure*}
Our heuristics were developed on this test set, and so more subdivision levels and time may be required for other models.

In order to test with a very challenging model, we separately tested with the genus-131 model ``Yeah, Right.'' Results for two viewpoints are shown in Figure \ref{fig:yeahright}. Due to the complexity of the model, we ran a maximum of 3 subdivision levels, and the method succeeded in 21 of 26 viewpoints; the average run-time for successful views was 62 minutes. 
In contrast, the method of B\'enard et al.~\shortcite{Benard:2014} failed to produce a fully-consistent mesh on any viewpoint, averaging 86 inconsistent triangles per frame.
As illustrated in the figure, our method produces valid visibility despite the exceedingly complex topology in both 2D and 3D.
\newcommand{\yrheight}{4in}

\begin{figure*}
    \centering
    \includegraphics[height=\yrheight]{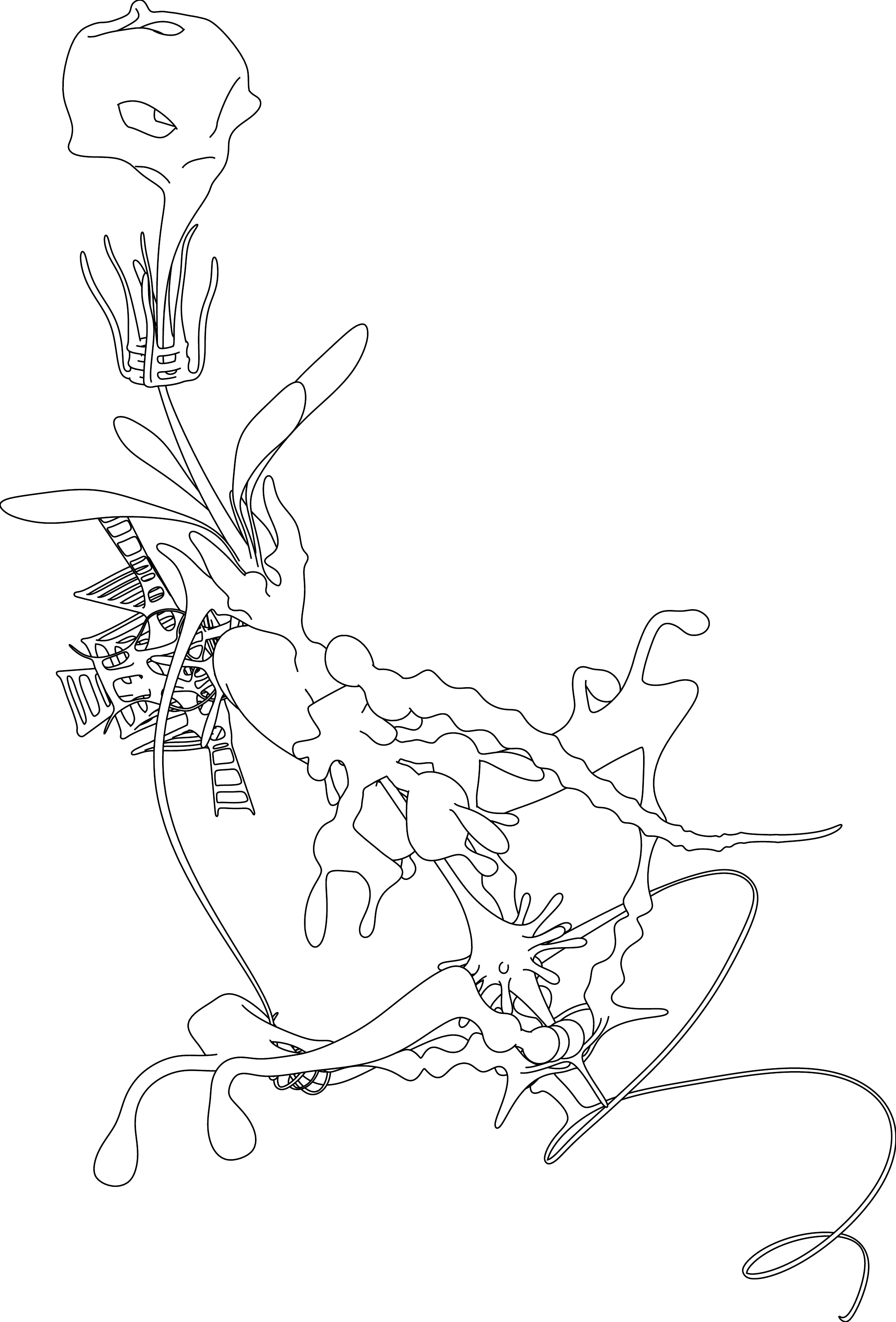}
        \includegraphics[height=\yrheight]{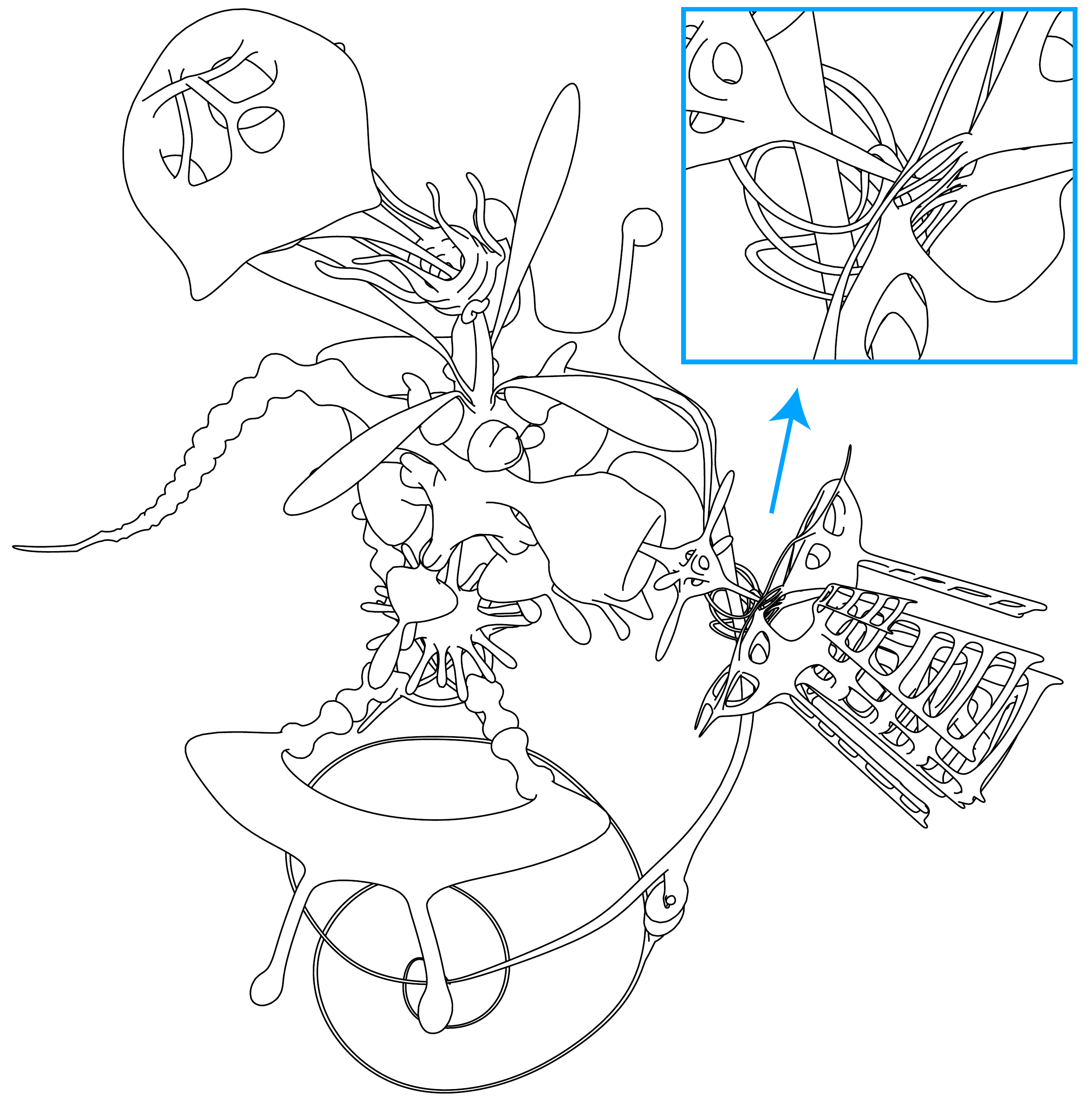}
\caption{Two views of the complex ``Yeah, Right'' model, which has genus 131. (Public domain model by Keenan Crane.)}
    \label{fig:yeahright}
\end{figure*}

\begin{figure}
\centering
\includegraphics[width=3in]{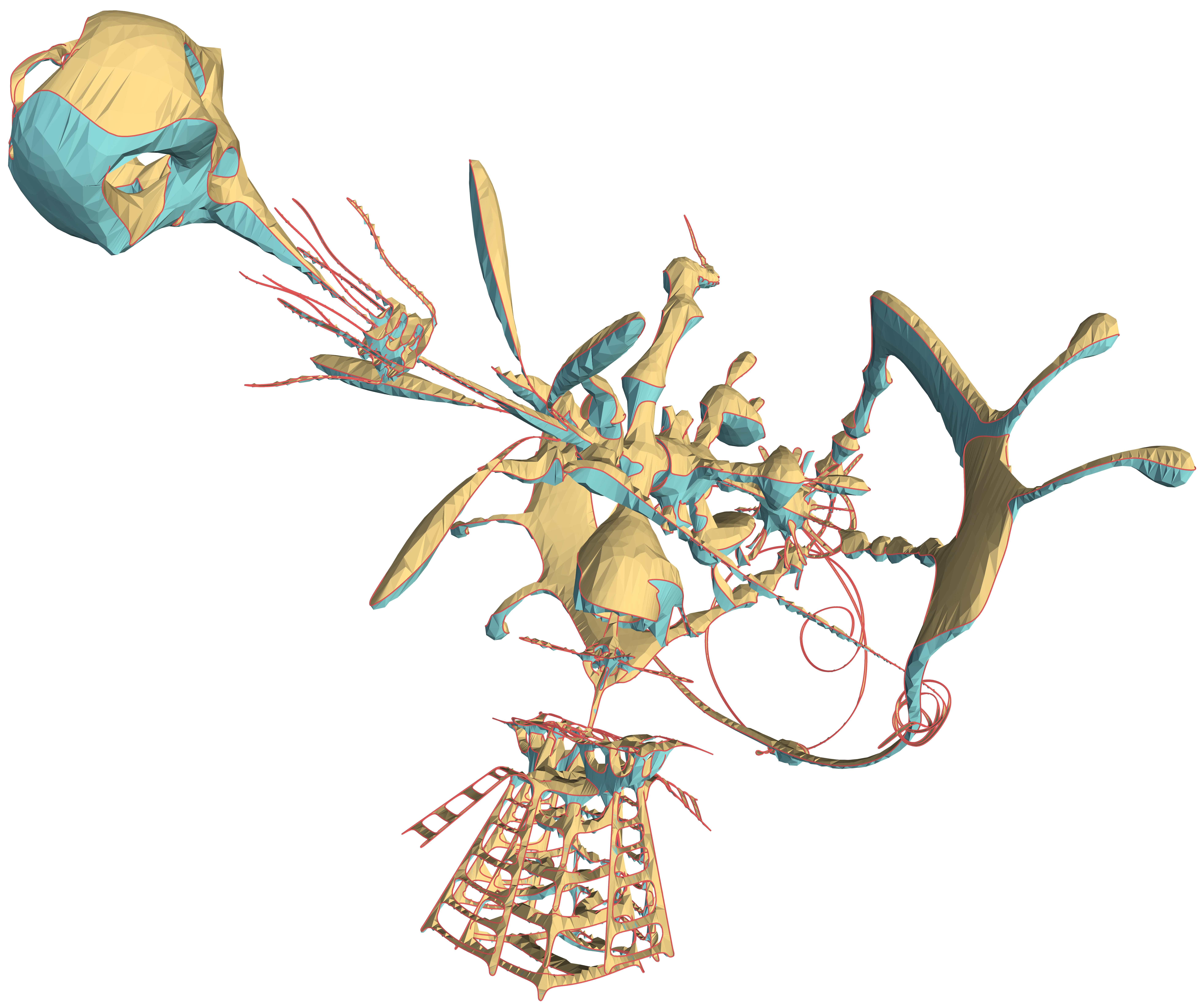}
\caption{Side view of the output triangulation of ``Yeah, Right,'' for the leftmost rendering in Figure \ref{fig:yeahright}}
\label{fig:yeahrightside}
\end{figure}

\paragraph{Disabling twist heuristics.}
We also experimented with running our method without the twist-removal heuristics, with a limit of 5 subdivision levels. The method successfully obtained WSOH results in 96\% of the cases, but with greater computation times, sometimes taking many hours. It is possible that the remaining cases would have succeeded at higher subdivision levels.

\paragraph{Comparison to state-of-the-art.}
We compare to B\'enard et al.~\shortcite{Benard:2014}'s statistics in Table \ref{table:benard}, using the three available animation sequences for which they reported numerical results; the fourth, ``Red'' was proprietary.
Whereas that method produced a handful of inconsistent faces for each mesh, our method produces perfectly consistent meshes. Moreover, our method operates an order-of-magnitude faster: $10\times$ on ``Stanford Bunny'', $6\times$ on ``Angela's face'', and $13\times$ on ``Walking Man''. It also produces roughly half as many output triangles, making our output more compact (more triangles may easily be added, if desired). 
We also believe our method is simpler to understand and simpler to implement, and the WSOH insights here will lead to more elegant algorithms in the future. 

\paragraph{How common are invalid polygons?}
As explained in Section \ref{sec:contour_regions}, prior methods can fail if they sample invalid polygons.  How common are invalid polygons? Our method uses extensive root-finding and careful sampling to compute polygons, and yet our method still frequently requires multiple subdivision levels and twist-removal heuristics in order to find valid WSOH regions. B\'enard et al.~\shortcite{Benard:2014} use similarly-careful root-finding procedures, but do no WSOH checks; their method always has at least a handful of invalid triangles.  Other methods sample the contour far less carefully, e.g., \cite{Hertzmann:2000}. Based on these experimental observations, \textbf{we believe that all previous methods frequently produce invalid polygons.} 

\begin{table*}[t]
\centering
    \begin{tabular}{|lcc|ccc|ccc|}
      \hline
      &&&\multicolumn{3}{c|}{\bf B\'{e}nard et al.~\shortcite{Benard:2014}} &
      \multicolumn{3}{c|}{\bf Our method} \\ \hline
      \multirow{2}{*}{Sequence} & Input & \multirow{2}{*}{Frames} &
      Output & Inconsistent & Time &
      Output & Inconsistent & Time 
      \\
      &faces&&faces&faces&(sec.)&faces&faces&(sec.)\\
      \hline
      Stanford Bunny & 42,928 & 400 &
      \quad $51,336 \pm 1,383$ & $7.6 \pm 4.4$ & $120 \pm35$ &
      $23,424 \pm518.1$ & 0 & $12 \pm2.2$\\
      Angela's face & 39,576 & 150 &
      \, $50,907 \pm693$ & $14 \pm4.4$& $170 \pm42$ &
      $26,642 \pm6,286$ & 0 & $28 \pm22$ \\
      Walking Man & 30,912 & 120 &
      $158,184 \pm663$ & $2.3 \pm2.1$ & $160 \pm15$ &
      $10,825 \pm2,620$ & 0  & $12 \pm10$\\
      \hline
    \end{tabular}
\caption{Statistics of our mesh generation algorithm on three of the animation sequences tested by \cite{Benard:2014}. The ``Red'' animation is omitted because it is proprietary.
The numbers are averaged over animation frames, which we list together with the standard deviation in each case. 
Our method produces no inconsistencies, far fewer output faces (roughly half), and runs much faster ($6\times$ to $13\times$ speedup). These values are for generating front and back faces, though back-faces would almost never be generated in practice.  Details: Both methods were run with a single thread on the same MacBook Pro (3.1GHz Intel Core i5 CPU, 8GB of memory).  The number of input faces is after one round of subdivision, and computation times are for mesh generation only, not stylization. We report fewer input faces for ``Walking Man'' because our method using 1 subdivision level as the minimum subdivision level, rather than 2 subdivisions used by B\'enard et al.}
\label{table:benard}
\end{table*}

\section{Discussion}
\label{sec:discussion}

The problem of computing visible occluding contours for smooth surfaces dates back to Weiss \shortcite{Weiss:1966:VPI:321328.321330}; we have shown, for the first time, how to characterize valid contours. Based on these insights, we have presented an algorithm that achieves state-of-the-art results on the problem.

Having a mathematical characterization of the space of valid solutions means that this problem is now in the domain of robust geometric computation. One important question is: how can we sample a contour curve in a way that guarantees a WSOH polygon? Simple strategies for refining the sampling seem like they ought to produce a WSOH curve eventually, but we do not have a proof.  Eliminating the need for twist and cut heuristics would also make the algorithm simpler. Some problems arise due to limitations of vertex insertion scheme and data structures, e.g., Fig 19 of \cite{Benard:2014}.

Our definition of valid triangulations does not, in itself, preserve depth ordering; instead, this is ensured by sampling all vertices from the smooth surface. We considered requiring preservation of contour convexity/concavity \bh{4.2,7.4}; while theoretically more elegant, it seemed unnecessarily complex in practice and potentially numerically sensitive. Likewise, we considered using Quantitative Invisibility (QI) \bh{4.7} to check WSO, following Eppstein and Mumford \shortcite{EppsteinMumford}. QI uses depth ordering constraints, making it potentially much more efficient than checking with triangulation. However, QI would need to be generalized to handle holes and convexity/concavity/cusps, which is potentially quite challenging in our case, but worth future study.

Our method can be applied to self-intersecting surfaces, although we have not tested this. That is, self-intersections do not need to be treated specially during triangulation; they can be detected on the output mesh during mesh contour extraction. However, the intersections produced by this method may be jagged; a smoothing step could be added, or the triangulations modified to accurately track the self-intersections of the subdivision surface. Our method also does not prevent spurious self-intersections, would could also be handled by an extra detection and mesh refinement step.

In our results, we do not make any effort to control mesh quality, as mesh quality is generally not important for line rendering, though it may be useful for other applications. Improving mesh quality would be straightforward, by adjusting the 2D sample points input to the CDT.

At present, our algorithm computes many quantities that may not be used in a final rendering, e.g., many of the triangles from the output mesh may not be needed. Lazy computations could improve efficiency. 

A more intriguing possibility is to compute an output planar map directly, rather than computing an intermediate mesh.    We chose to focus instead on mesh generation in the belief that it would give the most insight; our results show what the mesh $\cM$ looks like, and now future work can explore computing visibility without explicitly computing $\cM$. This will present several new challenges, such as accurately determining occlusion order without a mesh.  A mesh may still be needed in some regions, such as when there are self-intersections, and for some kinds of planar map rendering.

Finally, we wonder if it is possible to apply these ideas directly to a polygonal mesh, without any explicit smooth surface representation. For example, could we adjust interpolated contours \cite{Hertzmann:2000} to make them WSOH, thereby avoiding the complexity of smooth surface representations?

\section{Acknowledgements}

We are grateful to Denis Zorin and Qingnan Zhou for very helpful discussions, to Alec Jacobson and Danny Kaufman for comments on a draft, to Hanxiao Shen for providing a WSO implementation online, and to Alla Sheffer for support. P. B\'enard is supported in part by the ANR MoStyle project (ANR-20-CE33-0002). Thanks to Keenan Crane for sharing the pig, ogre, and Spot models, to Ryan Dale for the Walking Man, to Bay Raitt for Big Guy and Monster Frog, to AIM@Shape for Fertility, and 
 headus.com.au for Killeroo.

\bibliographystyle{ACM-Reference-Format}
\bibliography{contour_tutorial}

%%% -*-BibTeX-*-
%%% Do NOT edit. File created by BibTeX with style
%%% ACM-Reference-Format-Journals [18-Jan-2012].

\begin{thebibliography}{26}

%%% ====================================================================
%%% NOTE TO THE USER: you can override these defaults by providing
%%% customized versions of any of these macros before the \bibliography
%%% command.  Each of them MUST provide its own final punctuation,
%%% except for \shownote{}, \showDOI{}, and \showURL{}.  The latter two
%%% do not use final punctuation, in order to avoid confusing it with
%%% the Web address.
%%%
%%% To suppress output of a particular field, define its macro to expand
%%% to an empty string, or better, \unskip, like this:
%%%
%%% \newcommand{\showDOI}[1]{\unskip}   % LaTeX syntax
%%%
%%% \def \showDOI #1{\unskip}           % plain TeX syntax
%%%
%%% ====================================================================

\ifx \showCODEN    \undefined \def \showCODEN     #1{\unskip}     \fi
\ifx \showDOI      \undefined \def \showDOI       #1{#1}\fi
\ifx \showISBNx    \undefined \def \showISBNx     #1{\unskip}     \fi
\ifx \showISBNxiii \undefined \def \showISBNxiii  #1{\unskip}     \fi
\ifx \showISSN     \undefined \def \showISSN      #1{\unskip}     \fi
\ifx \showLCCN     \undefined \def \showLCCN      #1{\unskip}     \fi
\ifx \shownote     \undefined \def \shownote      #1{#1}          \fi
\ifx \showarticletitle \undefined \def \showarticletitle #1{#1}   \fi
\ifx \showURL      \undefined \def \showURL       {\relax}        \fi
% The following commands are used for tagged output and should be
% invisible to TeX
\providecommand\bibfield[2]{#2}
\providecommand\bibinfo[2]{#2}
\providecommand\natexlab[1]{#1}
\providecommand\showeprint[2][]{arXiv:#2}

\bibitem[Appel(1967)]%
        {Appel:1967}
\bibfield{author}{\bibinfo{person}{Arthur Appel}.}
  \bibinfo{year}{1967}\natexlab{}.
\newblock \showarticletitle{The Notion of Quantitative Invisibility and the
  Machine Rendering of Solids}. In \bibinfo{booktitle}{\emph{Proceedings of the
  1967 22nd National Conference}} \emph{(\bibinfo{series}{ACM '67})}.
  \bibinfo{publisher}{ACM}, \bibinfo{pages}{387--393}.
\newblock
\urldef\tempurl%
\url{https://doi.org/10.1145/800196.806007}
\showDOI{\tempurl}


\bibitem[B\'{e}nard and Hertzmann(2019)]%
        {BenardHertzmann}
\bibfield{author}{\bibinfo{person}{Pierre B\'{e}nard} {and}
  \bibinfo{person}{Aaron Hertzmann}.} \bibinfo{year}{2019}\natexlab{}.
\newblock \showarticletitle{Line Drawings from 3{D} Models}.
\newblock \bibinfo{journal}{\emph{Foundations and Trends in Computer Graphics
  and Vision}} \bibinfo{volume}{11}, \bibinfo{number}{1-2}
  (\bibinfo{year}{2019}), \bibinfo{pages}{1--159}.
\newblock
\urldef\tempurl%
\url{https://doi.org/10.1561/0600000075}
\showDOI{\tempurl}


\bibitem[B{\'e}nard et~al\mbox{.}(2014)]%
        {Benard:2014}
\bibfield{author}{\bibinfo{person}{Pierre B{\'e}nard}, \bibinfo{person}{Aaron
  Hertzmann}, {and} \bibinfo{person}{Michael Kass}.}
  \bibinfo{year}{2014}\natexlab{}.
\newblock \showarticletitle{Computing Smooth Surface Contours with Accurate
  Topology}.
\newblock \bibinfo{journal}{\emph{ACM Trans. Graph.}} \bibinfo{volume}{33},
  \bibinfo{number}{2}, Article \bibinfo{articleno}{19} (\bibinfo{year}{2014}),
  \bibinfo{numpages}{21}~pages.
\newblock
\urldef\tempurl%
\url{https://doi.org/10.1145/2558307}
\showDOI{\tempurl}


\bibitem[Chew(1989)]%
        {PaulChew1989}
\bibfield{author}{\bibinfo{person}{L.~Paul Chew}.}
  \bibinfo{year}{1989}\natexlab{}.
\newblock \showarticletitle{Constrained delaunay triangulations}.
\newblock \bibinfo{journal}{\emph{Algorithmica}} \bibinfo{volume}{4},
  \bibinfo{number}{1} (\bibinfo{date}{01 Jun} \bibinfo{year}{1989}),
  \bibinfo{pages}{97--108}.
\newblock
\showISSN{1432-0541}
\urldef\tempurl%
\url{https://doi.org/10.1007/BF01553881}
\showDOI{\tempurl}


\bibitem[Cole and Finkelstein(2010)]%
        {Cole:2010}
\bibfield{author}{\bibinfo{person}{Forrester Cole} {and} \bibinfo{person}{Adam
  Finkelstein}.} \bibinfo{year}{2010}\natexlab{}.
\newblock \showarticletitle{Two Fast Methods for High-Quality Line Visibility}.
\newblock \bibinfo{journal}{\emph{IEEE Transactions on Visualization and
  Computer Graphics}} \bibinfo{volume}{16}, \bibinfo{number}{5}
  (\bibinfo{year}{2010}), \bibinfo{pages}{707--717}.
\newblock
\urldef\tempurl%
\url{https://doi.org/10.1109/TVCG.2009.102}
\showDOI{\tempurl}


\bibitem[Cole et~al\mbox{.}(2008)]%
        {Cole:2008}
\bibfield{author}{\bibinfo{person}{Forrester Cole}, \bibinfo{person}{Aleksey
  Golovinskiy}, \bibinfo{person}{Alex Limpaecher},
  \bibinfo{person}{Heather~Stoddart Barros}, \bibinfo{person}{Adam
  Finkelstein}, \bibinfo{person}{Thomas Funkhouser}, {and}
  \bibinfo{person}{Szymon Rusinkiewicz}.} \bibinfo{year}{2008}\natexlab{}.
\newblock \showarticletitle{Where Do People Draw Lines?}
\newblock \bibinfo{journal}{\emph{ACM Trans. Graph.}} \bibinfo{volume}{27},
  \bibinfo{number}{3}, Article \bibinfo{articleno}{88} (\bibinfo{year}{2008}),
  \bibinfo{numpages}{11}~pages.
\newblock
\urldef\tempurl%
\url{https://doi.org/10.1145/1360612.1360687}
\showDOI{\tempurl}


\bibitem[Eisemann et~al\mbox{.}(2008)]%
        {Eisemann:2008}
\bibfield{author}{\bibinfo{person}{Elmar Eisemann}, \bibinfo{person}{Holger
  Winnem\"{o}ller}, \bibinfo{person}{John~C. Hart}, {and}
  \bibinfo{person}{David Salesin}.} \bibinfo{year}{2008}\natexlab{}.
\newblock \showarticletitle{Stylized Vector Art from 3D Models with Region
  Support}. In \bibinfo{booktitle}{\emph{Proceedings of the Nineteenth
  Eurographics Conference on Rendering}} \emph{(\bibinfo{series}{EGSR '08})}.
  \bibinfo{publisher}{Eurographics Association}, \bibinfo{pages}{1199--1207}.
\newblock
\urldef\tempurl%
\url{https://doi.org/10.1111/j.1467-8659.2008.01258.x}
\showDOI{\tempurl}


\bibitem[Elber and Cohen(1990)]%
        {Elber:1990}
\bibfield{author}{\bibinfo{person}{Gershon Elber} {and} \bibinfo{person}{Elaine
  Cohen}.} \bibinfo{year}{1990}\natexlab{}.
\newblock \showarticletitle{Hidden Curve Removal for Free Form Surfaces}. In
  \bibinfo{booktitle}{\emph{Proceedings of the 17th Annual Conference on
  Computer Graphics and Interactive Techniques}}
  \emph{(\bibinfo{series}{SIGGRAPH '90})}. \bibinfo{publisher}{ACM},
  \bibinfo{pages}{95--104}.
\newblock
\urldef\tempurl%
\url{https://doi.org/10.1145/97879.97890}
\showDOI{\tempurl}


\bibitem[Eppstein and Mumford(2009)]%
        {EppsteinMumford}
\bibfield{author}{\bibinfo{person}{David Eppstein} {and} \bibinfo{person}{Elena
  Mumford}.} \bibinfo{year}{2009}\natexlab{}.
\newblock \showarticletitle{Self-Overlapping Curves Revisited}. In
  \bibinfo{booktitle}{\emph{Proceedings 20th Annual ACM-SIAM Symposium on
  Discrete Algorithms (SODA'09)}}. \bibinfo{publisher}{SIAM},
  \bibinfo{pages}{160--169}.
\newblock


\bibitem[Grabli et~al\mbox{.}(2010)]%
        {Grabli:2010}
\bibfield{author}{\bibinfo{person}{St{\'e}phane Grabli},
  \bibinfo{person}{Emmanuel Turquin}, \bibinfo{person}{Fr{\'e}do Durand}, {and}
  \bibinfo{person}{Fran\c{c}ois~X. Sillion}.} \bibinfo{year}{2010}\natexlab{}.
\newblock \showarticletitle{Programmable Rendering of Line Drawing from 3D
  Scenes}.
\newblock \bibinfo{journal}{\emph{ACM Trans. Graph.}} \bibinfo{volume}{29},
  \bibinfo{number}{2}, Article \bibinfo{articleno}{18} (\bibinfo{year}{2010}),
  \bibinfo{numpages}{20}~pages.
\newblock
\urldef\tempurl%
\url{https://doi.org/10.1145/1731047.1731056}
\showDOI{\tempurl}


\bibitem[Gu et~al\mbox{.}(2002)]%
        {Gu:2002}
\bibfield{author}{\bibinfo{person}{Xianfeng Gu}, \bibinfo{person}{Steven
  Gortler}, {and} \bibinfo{person}{Hugues Hoppe}.}
  \bibinfo{year}{2002}\natexlab{}.
\newblock \showarticletitle{Geometry Images}.
\newblock \bibinfo{journal}{\emph{ACM Trans. Graphics}} \bibinfo{volume}{21},
  \bibinfo{number}{3} (\bibinfo{year}{2002}).
\newblock
\urldef\tempurl%
\url{https://doi.org/10.1145/566654.566589}
\showDOI{\tempurl}


\bibitem[Hertzmann and Zorin(2000)]%
        {Hertzmann:2000}
\bibfield{author}{\bibinfo{person}{Aaron Hertzmann} {and}
  \bibinfo{person}{Denis Zorin}.} \bibinfo{year}{2000}\natexlab{}.
\newblock \showarticletitle{Illustrating Smooth Surfaces}. In
  \bibinfo{booktitle}{\emph{Proceedings of the 27th Annual Conference on
  Computer Graphics and Interactive Techniques}}
  \emph{(\bibinfo{series}{SIGGRAPH '00})}. \bibinfo{publisher}{ACM
  Press/Addison-Wesley Publishing Co.}, \bibinfo{pages}{517--526}.
\newblock
\urldef\tempurl%
\url{https://doi.org/10.1145/344779.345074}
\showDOI{\tempurl}


\bibitem[Karpenko and Hughes(2006)]%
        {smoothsketch}
\bibfield{author}{\bibinfo{person}{Olga~A. Karpenko} {and}
  \bibinfo{person}{John~F. Hughes}.} \bibinfo{year}{2006}\natexlab{}.
\newblock \showarticletitle{SmoothSketch: 3D Free-Form Shapes from Complex
  Sketches}. In \bibinfo{booktitle}{\emph{ACM SIGGRAPH 2006 Papers}} (Boston,
  Massachusetts) \emph{(\bibinfo{series}{SIGGRAPH ’06})}.
  \bibinfo{publisher}{Association for Computing Machinery},
  \bibinfo{address}{New York, NY, USA}, \bibinfo{pages}{589–598}.
\newblock
\showISBNx{1595933646}
\urldef\tempurl%
\url{https://doi.org/10.1145/1179352.1141928}
\showDOI{\tempurl}


\bibitem[Karsch and Hart(2011)]%
        {Karsch:2011}
\bibfield{author}{\bibinfo{person}{Kevin Karsch} {and} \bibinfo{person}{John~C.
  Hart}.} \bibinfo{year}{2011}\natexlab{}.
\newblock \showarticletitle{Snaxels on a Plane}. In
  \bibinfo{booktitle}{\emph{Proceedings of the ACM SIGGRAPH/Eurographics
  Symposium on Non-Photorealistic Animation and Rendering}}
  \emph{(\bibinfo{series}{NPAR '11})}. \bibinfo{publisher}{ACM},
  \bibinfo{pages}{35--42}.
\newblock
\urldef\tempurl%
\url{https://doi.org/10.1145/2024676.2024683}
\showDOI{\tempurl}


\bibitem[Lacewell and Burley(2007)]%
        {lacewell}
\bibfield{author}{\bibinfo{person}{Dylan Lacewell} {and} \bibinfo{person}{Brent
  Burley}.} \bibinfo{year}{2007}\natexlab{}.
\newblock \showarticletitle{Exact Evaluation of Catmull-Clark Subdivision
  Surfaces Near B-Spline Boundaries}.
\newblock \bibinfo{journal}{\emph{Journal of Graphics Tools}}
  \bibinfo{volume}{12}, \bibinfo{number}{3} (\bibinfo{year}{2007}),
  \bibinfo{pages}{7--15}.
\newblock


\bibitem[Li and Barbi\v{c}(2018)]%
        {Li:2018:IOS}
\bibfield{author}{\bibinfo{person}{Yijing Li} {and} \bibinfo{person}{Jernej
  Barbi\v{c}}.} \bibinfo{year}{2018}\natexlab{}.
\newblock \showarticletitle{Immersion of Self-Intersecting Solids and
  Surfaces}.
\newblock \bibinfo{journal}{\emph{ACM Trans. Graph.}} \bibinfo{volume}{37},
  \bibinfo{number}{4}, Article \bibinfo{articleno}{45} (\bibinfo{year}{2018}),
  \bibinfo{numpages}{14}~pages.
\newblock
\urldef\tempurl%
\url{https://doi.org/10.1145/3197517.3201327}
\showDOI{\tempurl}


\bibitem[Markosian et~al\mbox{.}(1997)]%
        {Markosian:1997}
\bibfield{author}{\bibinfo{person}{Lee Markosian}, \bibinfo{person}{Michael~A.
  Kowalski}, \bibinfo{person}{Daniel Goldstein}, \bibinfo{person}{Samuel~J.
  Trychin}, \bibinfo{person}{John~F. Hughes}, {and} \bibinfo{person}{Lubomir~D.
  Bourdev}.} \bibinfo{year}{1997}\natexlab{}.
\newblock \showarticletitle{Real-time Nonphotorealistic Rendering}. In
  \bibinfo{booktitle}{\emph{Proceedings of the 24th Annual Conference on
  Computer Graphics and Interactive Techniques}}
  \emph{(\bibinfo{series}{SIGGRAPH '97})}. \bibinfo{publisher}{ACM
  Press/Addison-Wesley Publishing Co.}, \bibinfo{pages}{415--420}.
\newblock
\urldef\tempurl%
\url{https://doi.org/10.1145/258734.258894}
\showDOI{\tempurl}


\bibitem[Northrup and Markosian(2000)]%
        {Northrup:2000}
\bibfield{author}{\bibinfo{person}{J.~D. Northrup} {and} \bibinfo{person}{Lee
  Markosian}.} \bibinfo{year}{2000}\natexlab{}.
\newblock \showarticletitle{Artistic Silhouettes: A Hybrid Approach}. In
  \bibinfo{booktitle}{\emph{Proceedings of the 1st International Symposium on
  Non-photorealistic Animation and Rendering}} \emph{(\bibinfo{series}{NPAR
  '00})}. \bibinfo{publisher}{ACM}, \bibinfo{pages}{31--37}.
\newblock
\urldef\tempurl%
\url{https://doi.org/10.1145/340916.340920}
\showDOI{\tempurl}


\bibitem[Roberts(1963)]%
        {Roberts:1963}
\bibfield{author}{\bibinfo{person}{Lawrence Roberts}.}
  \bibinfo{year}{1963}\natexlab{}.
\newblock \emph{\bibinfo{title}{Machine Perception of Three-Dimensional
  Solids}}.
\newblock \bibinfo{thesistype}{Ph.\,D. Dissertation}.
  \bibinfo{school}{Massachusetts Institute of Technology. Dept. of Electrical
  Engineering}.
\newblock


\bibitem[Sacht et~al\mbox{.}(2013)]%
        {Sacht:SelfIntersectingVolumes:2013}
\bibfield{author}{\bibinfo{person}{Leonardo Sacht}, \bibinfo{person}{Alec
  Jacobson}, \bibinfo{person}{Daniele Panozzo}, \bibinfo{person}{Christian
  Sch{\"u}ller}, {and} \bibinfo{person}{Olga Sorkine-Hornung}.}
  \bibinfo{year}{2013}\natexlab{}.
\newblock \showarticletitle{Consistent Volumetric Discretizations Inside
  Self-Intersecting Surfaces}.
\newblock \bibinfo{journal}{\emph{Computer Graphics Forum (Proc.~SGP)}}
  \bibinfo{volume}{32}, \bibinfo{number}{5} (\bibinfo{year}{2013}),
  \bibinfo{pages}{147--156}.
\newblock


\bibitem[Saito and Takahashi(1990)]%
        {Saito:1990}
\bibfield{author}{\bibinfo{person}{Takafumi Saito} {and}
  \bibinfo{person}{Tokiichiro Takahashi}.} \bibinfo{year}{1990}\natexlab{}.
\newblock \showarticletitle{Comprehensible Rendering of 3-D Shapes}. In
  \bibinfo{booktitle}{\emph{Proceedings of the 17th Annual Conference on
  Computer Graphics and Interactive Techniques}}
  \emph{(\bibinfo{series}{SIGGRAPH '90})}. \bibinfo{publisher}{ACM},
  \bibinfo{pages}{197--206}.
\newblock
\urldef\tempurl%
\url{https://doi.org/10.1145/97879.97901}
\showDOI{\tempurl}


\bibitem[Shor and {Van Wyk}(1992)]%
        {ShorVanWyk}
\bibfield{author}{\bibinfo{person}{Peter~W. Shor} {and}
  \bibinfo{person}{Christopher~J. {Van Wyk}}.} \bibinfo{year}{1992}\natexlab{}.
\newblock \showarticletitle{Detecting and decomposing self-overlapping curves}.
\newblock \bibinfo{journal}{\emph{Computational Geometry}} \bibinfo{volume}{2},
  \bibinfo{number}{1} (\bibinfo{date}{Aug.} \bibinfo{year}{1992}),
  \bibinfo{pages}{31--50}.
\newblock
\urldef\tempurl%
\url{https://doi.org/10.1016/0925-7721(92)90019-O}
\showDOI{\tempurl}


\bibitem[Weber and Zorin(2014)]%
        {WeberZorin}
\bibfield{author}{\bibinfo{person}{Ofir Weber} {and} \bibinfo{person}{Denis
  Zorin}.} \bibinfo{year}{2014}\natexlab{}.
\newblock \showarticletitle{Locally injective parametrization with arbitrary
  fixed boundaries}.
\newblock \bibinfo{journal}{\emph{ACM Trans. Graph.}} \bibinfo{volume}{33},
  \bibinfo{number}{4} (\bibinfo{year}{2014}).
\newblock
\showISSN{0734-2071}
\urldef\tempurl%
\url{https://doi.org/10.1145/2601097.2601227}
\showDOI{\tempurl}


\bibitem[Weiss(1966)]%
        {Weiss:1966:VPI:321328.321330}
\bibfield{author}{\bibinfo{person}{Ruth~A. Weiss}.}
  \bibinfo{year}{1966}\natexlab{}.
\newblock \showarticletitle{BE VISION, A Package of IBM 7090 FORTRAN Programs
  to Draw Orthographic Views of Combinations of Plane and Quadric Surfaces}.
\newblock \bibinfo{journal}{\emph{J. ACM}} \bibinfo{volume}{13},
  \bibinfo{number}{2} (\bibinfo{year}{1966}), \bibinfo{pages}{194--204}.
\newblock
\showISSN{0004-5411}
\urldef\tempurl%
\url{https://doi.org/10.1145/321328.321330}
\showDOI{\tempurl}


\bibitem[Winkenbach and Salesin(1994)]%
        {Winkenbach:1994}
\bibfield{author}{\bibinfo{person}{Georges Winkenbach} {and}
  \bibinfo{person}{David~H. Salesin}.} \bibinfo{year}{1994}\natexlab{}.
\newblock \showarticletitle{Computer-generated Pen-and-ink Illustration}. In
  \bibinfo{booktitle}{\emph{Proceedings of the 21st Annual Conference on
  Computer Graphics and Interactive Techniques}}
  \emph{(\bibinfo{series}{SIGGRAPH '94})}. \bibinfo{publisher}{ACM},
  \bibinfo{pages}{91--100}.
\newblock
\urldef\tempurl%
\url{https://doi.org/10.1145/192161.192184}
\showDOI{\tempurl}


\bibitem[Winkenbach and Salesin(1996)]%
        {Winkenbach:1996}
\bibfield{author}{\bibinfo{person}{Georges Winkenbach} {and}
  \bibinfo{person}{David~H. Salesin}.} \bibinfo{year}{1996}\natexlab{}.
\newblock \showarticletitle{Rendering Parametric Surfaces in Pen and Ink}. In
  \bibinfo{booktitle}{\emph{Proceedings of the 23rd Annual Conference on
  Computer Graphics and Interactive Techniques}}
  \emph{(\bibinfo{series}{SIGGRAPH '96})}. \bibinfo{publisher}{ACM},
  \bibinfo{pages}{469--476}.
\newblock
\urldef\tempurl%
\url{https://doi.org/10.1145/237170.237287}
\showDOI{\tempurl}


\end{thebibliography}

\appendix
\section{Projection Preserves Triangle Orientation}
\label{app:orient}

We now show that the orientation of a triangle is preserved when projecting it from the image plane to 3D.   Suppose we define coordinates so that the camera is at the origin $\bc=(0,0,0)$, facing in the $+z$ direction. The vertices of a triangle $\Delta \bp\bq\br$ have image coordinates $\bp=(p_x,p_y,1)$, and so on. The orientation of the triangle is given by
$(\bc-\bp) \cdot ((\br-\bp)\times(\bq-\bp)) =
\det(-\bp, \br-\bp, \bq-\bp) = 
\det(\bp, \bq, \br)$.   For clockwise 2D triangles, the orientation is positive: $\det(\bp, \bq, \br) > 0$

We construct the 3D points by selecting depths $p_z,q_z,r_z >0$, and then projecting into 3D: $\bp'=p_z \bp, \bq' = q_z \bq_z, \br' = r_z \br$. The orientation of this new triangle is $(\bc-\bp') \cdot ((\br'-\bp')\times(\bq'-\bp')) = \det(\bp', \bq', \br') = \det(p_z \bp, q_z \bq_z, r_z \br) = p_z q_z r_z \det(\bp, \bq, \br) > 0$. Thus, the orientation of the 3D triangle is the same as the orientation of the 2d triangle: clockwise triangles become front-facing and counter-clockwise triangles become back-facing.

\section{Removing Twists}
\label{app:correcting}

We observe the following types of twists (Figure \ref{fig:twist_cases}), and use the following heuristics to correct them:
\begin{enumerate}
    \item A very small polygon comprises a simple loop with the wrong orientation (e.g., CW when it should be CCW). In this case, we simply remove the loop from the surface. These polygons are typically smaller than a pixel in image-space, and would be removed during stylization regardless.
    
    \item A pair of nearby cusps on a contour is on the wrong side of the contour forming an inverted fishtail. This configuration is detected twice; we check for these before and after testing the other four cases.

    \item The curve is twisted near a cusp/singularity. We correct this by inserting vertices on each curve at the intersection point, and moving the new vertices apart in image space. 
    
    \item A skinny portion of the polygon crosses over itself. We  insert vertices at the intersection points and shift them apart in image space.
    
    \item A cusp on a hole occurs outside the hole. We shift the cusp in image space. 
    
\end{enumerate}
Each of these cases applies only to the contours within a region, e.g., intersections are not detected between regions.

\begin{figure}
\centering
Case 1, Tiny hole:\\
\includegraphics[width=3in]{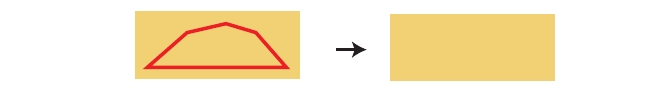}\\
Case 2, Inverted fishtail:\\
\includegraphics[width=3in]{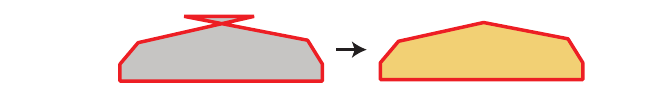}\\
Case 3, Twisted cusp: \\
\includegraphics[width=3in]{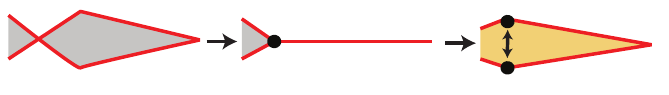}\\
Case 4, Twisted tube:\\
\includegraphics[width=3in]{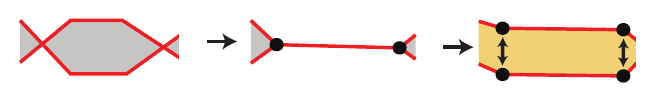}\\
Case 5, Outside cusp:\\
\includegraphics[width=3in]{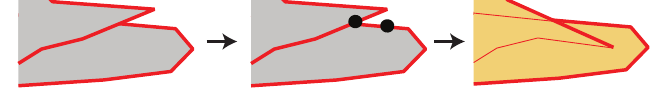}
\caption{The five cases of twists that we detect and resolve; see text for details.
\label{fig:twist_cases}
}
\end{figure}

\paragraph{Case 1.} 
If a patch is simple but fails the WSO triangulation, our algorithm removes it if the 2D contour length is below $100 \text{px}$. Otherwise, the patch will be left unchanged, and the algorithm will need to be rerun with a higher subdivision level. 
To remove a loop, the corresponding faces are assigned to the adjacent patch and the corresponding contour edges are unlabelled.

\paragraph{Case 2.} 
Case 2 is handled by collapsing any loop formed by a 2D intersection (valid or not) that attaches a 2D contour loop whose 2D arc-length is less than $10^{-3} \text{px}$.  
Note that this removal step may delete valid fishtails \bh{4.6}, but these are subpixel fishtails that would normally be removed during stylization \bh{9.3}.

\paragraph{Cases 3, 4, 5.} These three cases are detected as follows.
First, the algorithm finds all intersections where two contour edges intersect in image space. The following steps are then run on each intersection separately.

An image-space intersection comprises two distinct 3D points on the surface.  We first wish to determine if the intersection points are directly connected by geometry, which indicates a twist we may wish to remove. Specifically, we compute a 3D plane that contains the two intersection points and the bisector of the larger angle between the two intersecting edges in 2D. Then we intersect the plane with the region and check if this intersection creates a path between the two intersection points that stays within the region (i.e., does not cross any contours).  If the path is valid, then we conclude that there is a twist at this intersection. 

The next step is to determine which type of twist occurs. From each intersection point, we can trace around the contour loop until returning to the intersection point. The tracing direction is the one that makes the adjacent front facing patch lie on the left side of the contour.
If the sub-loop traced in this process has the wrong orientation (CW or CCW), then that sub-loop has an error. The case depends on how this tracing returns to the intersection:
(a) Case 3: return back to the intersection point via a crossing edge.
(b) Case 4: meet another invalid intersection point.

While case 3 and 4 are distinguished by whether the tracing returns or reaches another invalid intersection point, case 5 is separated from case 3 and 4 by  heuristics.
We have the observation that the twisting is caused by under-sampling and thus is supposed to have a small scale.

Let the tracing distance in 2D from an invalid intersection point to either another invalid intersection or itself be $D_1$ and $D_2$ where $D_1 = D_2$ in case 5.
Let the total lengths of the 2D contour chains containing $D_1$ and $D_2$ be $c_1$ and $c_2$ respectively.
If both tracing paths belong to the same chain, we have $c_1 = c_2$.
Since the tracing directions determined by the orientation rule could be incorrect if the under-sampling causes a fishtail to untwist into a Z-shape structure, we also consider the flipped tracing directions.
Let the tracing distance of the flipped tracing directions be $D_1'$ and $D_2'$ respectively.
Let the tracing distance in 2D from the invalid intersection point to the corresponding cusp and its projection be $d_1$ and $d_2$.
We categorize the twisting as case 3 if all the following conditions are satisfied:
\begin{itemize}
    \item Scale violation in case 3 and 4: $\frac{D_1}{c_1} > 0.9 \text{ OR } D_1 > 100 \text{px}$ OR $\frac{D_2}{c_2} > 0.9 \text{ OR } D_2 > 100 \text{px}$.
    \item Contrast between the flipped tracing directions and case 5: $\frac{\max{(D_1', D_2')}}{\min{(d_1, d_2)}} > 4$ OR $\min{(d_1, d_2)} < 2 \text{px}$.
    \item Safety check in case 5: $\frac{d_1}{c_1} < 0.5 \text{ AND } d_1 < 40 \text{px}$ AND $\frac{d_2}{c_2} < 0.5 \text{ AND } d_2 < 40 \text{px}$.
\end{itemize}
If the first condition is satisfied yet either of the latter two is not, we choose to trace in the flipped directions.
If the first condition is not satisfied, we trace in the directions initially determined by the orientation rule.

\section{Cut-to-disk algorithm}
\label{app:cuts}

We now describe our modification to Gu et al.~\shortcite{Gu:2002}'s cut-to-disk algorithm to avoid cuts that pass outside the region boundary in image space. 

We first mark the following sets of edges that the cut should not pass through: (a) any edge adjacent to any inconsistent face, (b) any edge adjacent to a cusp, and (c)  any edge that intersects a nearby contour in image-space.
For this last case, we test for intersections between each edge, and an contour edge within its 5-ring, as well as all other contour edges with 5 edges of the contour edge.

For each contour loop in the region, we check if the contour loop has no valid edges coming out of it. In this case, we find a consistent triangle that touches the contour at a single vertex, and split this triangle to produce a usable edge. 

We then modify the cut-to-disk algorithm to keep the above edges out of the cut. Specifically, the initial seeds are all faces adjacent to any of the above edges. We keep track of the connected components of faces during region growing. If two regions become adjacent, then they are merged and the edge between them is removed rather than becoming part of the cut.
Finally, we apply the same shrinking step from Gu et al.

The initial seeds are all faces adjacent to any invalid edge.
Each seed is assigned a unique label and non-seed faces is viewed as unassigned.
To indicate the cut result, each edge has a flag showing whether it is in the cut.

In the first phase, the method grows and merges labeled face regions by processing edges adjacent to any assigned face.
These edges are stored in a priority queue with three priority levels: $0$ to $2$, where $2$ is the highest one.
The priority of an edges is determined based on its two neighboring faces and its validity: (a) priority 2: if the two faces are assigned and the edge is invalid; (b) priority 1: if one of the two faces is unassigned; (c) priority 0: if the two faces are assigned and the edge is valid.
The method iteratively draws from the priority queue until the queue is empty.
An edge is processed as follows,
\begin{itemize}
    \item If the edge has priority 0 or 2, and the two faces have the same label, the edge is in the cut; if the two faces have difference labels, the two labels are merged and the edge is not in the cut.
    \item If the edge has priority 1, the unassigned face is set to the label of the other face and the edge is set to be not in the cut. The other two edges of the newly assigned face, if have not been processed, are added to the queue with priorities determined as above.
\end{itemize}
The special case is the boundary edge.
They are always set to be in the cut and the region growing stops once reaches these edges.

In the second phase, the method repeatedly removes cut edges adjacent to valence-1 vertex in the same way as the second phase of Gu et al. 2002.
Intuitively, this phase removes tree structures from the cut.

This algorithm finds a cut fully consisting of valid edges if it exists, or finds a cut containing invalid edges otherwise.
This is ensured by the initialization and the priority order.
Note that the initialization enqueues all invalid edges with top priorities and thus they are processed before any other types of edges.
An invalid edge would be in the cut if and only if its two neighboring faces have the same label.
Since we assign each such face a unique label, this situation would only happen when the corresponding labeled region has a non-disk topology and only has invalid edges in its interior.
This means there exists no cut that only consists of valid edges.
Once the top priority edges are fully processed, the method continues in a fashion similar to the original Gu et al. 2002.
Finally, we apply the same shrinking step from Gu et al.

It is possible that there is a contour loop with no adjacent consistent triangles, and so there is no feasible cut. In this case, the above algorithm will find a cut that touches the loop via an invalid edge.
We use three priority levels in the cut-to-disk priority queue, in order to ensure that some cut is found.

\end{document}